\documentclass[12pt]{article}
\usepackage{amsmath}
\usepackage{graphicx,psfrag,epsf}
\usepackage{enumerate}
\usepackage{amssymb}
\usepackage{booktabs}
\usepackage{array}
\usepackage{multirow}
\usepackage{booktabs}
\newcommand{\blind}{0}

\usepackage[authoryear]{natbib}
\usepackage[colorlinks=true, linkcolor=blue, citecolor=blue, urlcolor=blue]{hyperref}
\usepackage[skip=0.3cm]{caption}   
\usepackage{subcaption}   
\usepackage{algorithm}   
\usepackage{algpseudocode} 
\usepackage{amsthm}
\usepackage{float}
\newtheorem{theorem}{Theorem}[section]

\newtheorem{lemma}{Lemma}[section]

\newtheorem{definition}{Definition}[section]
\newtheorem{assumption}{Assumption}[section]

\begin{document}

\def\spacingset#1{\renewcommand{\baselinestretch}%
{#1}\small\normalsize} \spacingset{1}


\if0\blind
{
  \title{\bf MCMC Methods for Parameter Inference in Structurally Nonidentifiable Models}

  \author{
    Xuyuan Wang\thanks{Email: \texttt{xuyuan@ualberta.ca}},
    Donglin Han\thanks{Email: \texttt{donglin3@ualberta.ca}}
    and Michael Y. Li\thanks{Email: \texttt{myli@ualberta.ca}}\\
    Department of Mathematical and Statistical Sciences\\
    University of Alberta
  }

  \maketitle
}
\fi

\if1\blind
{
  \bigskip
  \bigskip
  \bigskip
  \begin{center}
    {\LARGE\bf MCMC Methods for Parameter Inference in Structurally Nonidentifiable Models}
  \end{center}
  \medskip
}
\fi
\begin{abstract}
We consider the problem of parameter inference for ordinary differential equation (ODE) models with structural non-identifiability. Such models arise in a wide range of scientific fields, including control theory, systems biology, and public health. Structural non-identifiability occurs when distinct parameter values provide identical model outputs, resulting in lower-dimensional manifolds of observationally equivalent solutions in the parameter space. This poses challenges for Bayesian inference and Markov chain Monte Carlo (MCMC) methods, often leading to poor mixing and slow convergence.
We develop two MCMC methods that use information from structural identifiability analysis. The first, Identifiability-Aware Geometric MCMC, constructs proposals that move within and between non-identifiable manifolds. The second, Identifiability-Aware Pseudo-Marginal MCMC, performs inference on the space of identifiable parameter combinations and reconstructs full parameter values. We show that both methods target the correct posterior distribution and are ergodic under standard conditions.
Numerical examples demonstrate improved sampling efficiency and convergence compared with standard MCMC methods.
\end{abstract}
\noindent%
{\it Keywords:}  Markov chain Monte Carlo, Bayesian inference, Non-identifiability, Infectious disease modeling

\spacingset{1.45}
\section{Introduction}
\label{sec:intro}

The use of parametric ordinary differential equation (ODE) models to track the evolution of dynamical systems has been widely adopted across a variety of scientific fields, including control theory \citep{J1}, systems biology \citep{J2}, and public health \citep{J3}. A central task in the modeling process is parameter inference and uncertainty quantification based on observed data. These inferred results are then used to estimate latent quantities, make predictions, and support decision-making.
With the advancement of computational resources, Bayesian inference and Markov chain Monte Carlo (MCMC) methods have become increasingly popular for such tasks. Their advantage
lies in their flexibility in modeling assumptions, their ability to provide direct quantification of uncertainty, and their capacity to sample from
general
posterior probability distributions \citep{J4,J5,J6}. While the theoretical foundations of these methods are well established, their practical performance can deteriorate when the underlying model is structurally non-identifiable. For example, it is known that the classical susceptible--infectious (SI) model
\begin{equation}\label{eq:SI}
\begin{cases}
\displaystyle \frac{dS}{dt} = -\beta SI + \gamma I,\\
\displaystyle \frac{dI}{dt} = \beta SI - \gamma I,
\end{cases}
\end{equation}
with a partially observed case trajectory $\rho I(t)$ is structurally non-identifiable when only initial population is known \citep{cunn}. This implies that different parameter values can generate equivalent case trajectory. As a result, standard MCMC methods become trapped along non-identifiable directions, leading to slow mixing and poor convergence, as shown in Figure~\ref{fig1}.

\begin{figure}[H]
    \centering

    \begin{subfigure}[t]{0.45\textwidth}
        \centering
        \includegraphics[width=\linewidth]{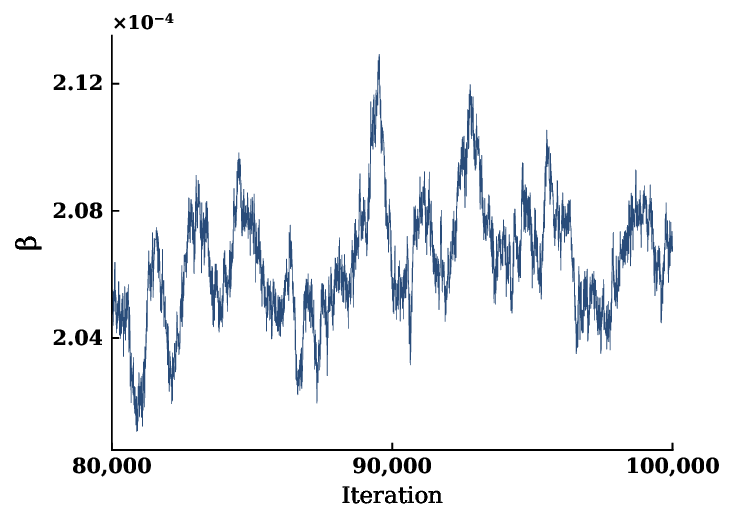}
        \subcaption{Trace plot for $\beta$.}
    \end{subfigure}\hfill
    \begin{subfigure}[t]{0.45\textwidth}
        \centering
        \includegraphics[width=\linewidth]{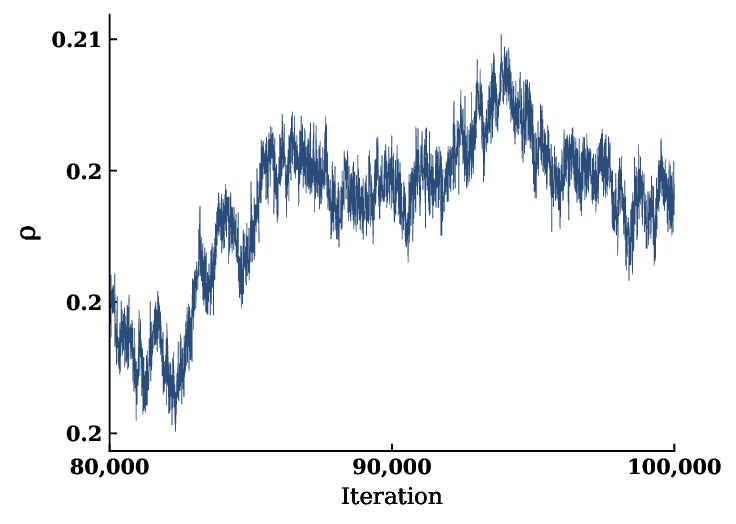}
        \subcaption{Trace plot for $\rho$.}
    \end{subfigure}
    \caption{Trace plots of the standard random-walk MCMC algorithm for the structurally non-identifiable SI model.}
    \label{fig1}
\end{figure}

Structural non-identifiability occurs when multiple parameter values produce identical observable model outputs. In such cases, the effect of changing one parameter can be compensated by adjusting others.
This phenomenon has been rigorously studied in the mathematical literature, where different definitions have been developed. For example, \citet{J7} develop approaches based on power series expansions; \citet{J8, J9} use generating series; \citet{J12} adopts a differential algebra framework; and \citet{J13} studies the problem from a differential geometric perspective. For a comprehensive review of these approaches, we refer readers to \citep{J14}.
Beyond theoretical developments, another substantial body of work has examined the impact of structural non-identifiability on parameter inference. Early work by \citet{J15} demonstrated that identifiability is fundamentally tied to the structure of the likelihood. Subsequently, \citet{J16, J17} developed a Bayesian framework for handling non-identifiability, emphasizing the role of prior distributions in resolving the issue. More recent studies have highlighted the numerical challenges induced by non-identifiability. For example, \citet{J19} show that optimization algorithms can be trapped along non-identifiable submanifolds, leading to slow or unstable convergence. Similarly, \citet{J20} showed that MCMC methods often suffer from poor mixing, as the sampler is constrained to move along sets of observationally equivalent parameter values.

Motivated by this literature, several methods have been proposed to address these challenges. A recent work by \citet{J25} introduced a general framework for constructing identifiability-aware samplers and demonstrated its effectiveness on simple economic models. Despite its promise, this framework has not yet been generalized to more complex rational ODE systems, for which the geometry of the identifiable parameter space is important.
Another line of research focuses on reducing the parameter space to a lower-dimensional identifiable subspace, where sampling can be performed more efficiently and the instability induced by non-identifiability is mitigated. These methods typically rely on identifying likelihood-informed linear subspaces that capture the dominant identifiable directions, followed by performing inference and optimization within these reduced spaces \citep{J21, J22, J23}.
A limitation of such methods is that structural non-identifiability is often nonlinear, whereas likelihood-informed subspaces provide only linear approximations of the identifiable structure. These approximations can introduce systematic bias into posterior inference. This limitation motivates the development of methods that explicitly incorporate structural identifiability analysis into Bayesian inference to reduce the parameter space to general identifiable manifolds.

In this paper, we develop two MCMC methods that leverage structural identifiability information. The first extends the identifiability-aware sampling framework of \citet{J25} to general rational ODE models. The resulting sampler incorporates a geometric proposal mechanism based on symplectic integrators, enabling efficient exploration along non-identifiable manifolds.
The second method is based on the pseudo-marginal MCMC framework. Rather than approximating the identifiable structure by a linear subspace, it directly exploits the nonlinear geometry induced by structural identifiability. Inference is performed on the lower-dimensional subspace of identifiable parameter combinations, and full parameter values are reconstructed conditionally on these combination values. This approach generalizes likelihood-informed subspace methods to nonlinear non-identifiable manifolds.
We show that both methods enable efficient posterior exploration in the presence of structural non-identifiability. The numerical instabilities associated with non-identifiable parameter directions are substantially reduced.
The main contributions of this work are as follows.
(i) We develop an identifiability-aware geometric MCMC method for rational ODE models that exploits the nonlinear manifold structure induced by structural non-identifiability.
(ii) We introduce a pseudo-marginal MCMC method that performs inference on identifiable parameter combinations and generalizes likelihood-informed subspace methods to nonlinear identifiable manifolds.
(iii) We provide theoretical convergence properties for the proposed sampling algorithms.
(iv) The proposed frameworks are broadly applicable to rational ODE models and can be employed across a wide range of scientific applications.
The remainder of the paper is organized as follows. Section~\ref{sec:pre} introduces the necessary preliminaries. Section~\ref{sec:meth} presents the proposed methodologies. Section~\ref{sec:case} contains case studies demonstrating the performance of the proposed methods. Section~\ref{sec:dis} concludes with a discussion.

\section{Preliminary} \label{sec:pre} 
We begin by reviewing basic concepts in parametric ODE models, structural non-identifiability, and Bayesian inference. Many widely used mathematical models can be expressed as systems of algebraic differential equations of the form
\begin{equation}\label{eq:ade}
\Sigma(\boldsymbol \theta) := 
\begin{cases}
\boldsymbol{x}'(t,\boldsymbol{\theta}) = \boldsymbol{f}(\boldsymbol{x}(t), \boldsymbol{\beta}), \\
\boldsymbol{y}(t,\boldsymbol{\theta}) = \boldsymbol{g}(\boldsymbol{x}(t), \boldsymbol{\beta}), \\
\boldsymbol{x}(0) = \boldsymbol{x}_0,
\end{cases}
\end{equation}
where $\boldsymbol{x}(t) \in \mathbb{R}^n$ denotes the state variables, $\boldsymbol{y}(t) \in \mathbb{R}^m$ the observable outputs, and $\boldsymbol{\beta} \in \mathbb{R}^p$ the unknown model parameters. The initial condition is given by $\boldsymbol{x}_0 \in \mathbb{R}^n$. The functions $\boldsymbol{f} = (f_1, \dots, f_n)$ and $\boldsymbol{g} = (g_1, \dots, g_m)$ are assumed to be rational functions.
To include cases where the initial conditions are also unknown, we define the full parameter vector as $\boldsymbol{\theta} := (\boldsymbol{\beta}, \boldsymbol{x}_0) \in \Theta \subset \mathbb{R}^{p+n}$, where $\Theta$ is a connected subset of Euclidean space. The system $\Sigma$ can be reformulated so that the initial conditions are treated as part of the parameter vector, resulting in a system with known initial conditions.
The calibration task is to infer $\boldsymbol{\theta}$ from observations of the output $\boldsymbol{y}$ at a discrete set of time points.
Structural identifiability addresses whether the parameter vector $\boldsymbol{\theta}$ can be uniquely recovered from ideal (noise-free and continuous) observations of the output. We now give a formal definition following \citet{J26} and  \citet{J27}.
\begin{definition}\label{def:noniden}
A parameter $\theta_i$, for $i \in \{1, 2, \dots, p+n\}$, is said to be structurally globally identifiable if, for almost every $\boldsymbol{\theta}^* \in \Theta$,
\[
\Sigma(\boldsymbol{\theta}) = \Sigma(\boldsymbol{\theta}^*) \;\Rightarrow\; \boldsymbol{\theta} = \boldsymbol{\theta}^*.
\]
A parameter $\theta_i$ is said to be structurally non-identifiable if, for almost every $\boldsymbol{\theta}^* \in \Theta$, there does not exist a neighborhood $V(\boldsymbol{\theta}^*)$ such that, for all $\boldsymbol{\theta} \in V(\boldsymbol{\theta}^*)$,
\[
\Sigma(\boldsymbol{\theta}) = \Sigma(\boldsymbol{\theta}^*) \;\Rightarrow\; \boldsymbol{\theta} = \boldsymbol{\theta}^*.
\]
\end{definition}
Based on Definition~\ref{def:noniden}, many methods have been developed to detect and assess structural non-identifiability in systems of the form $\Sigma$. A common approach is based on Lie group theory, which derives a system of input--output equations whose solvability properties provide information about structural identifiability \citep{J26}. Several software tools implementing this approach have also been developed \citep{J28,J29}.
We refer to this procedure as \emph{structural identifiability analysis}. It gives a rational mapping of identifiable combinations
\[
\boldsymbol{\xi} : \Theta \longrightarrow \mathcal{C} \subset \mathbb{R}^q,
\]
which maps the full parameter vector $\boldsymbol{\theta} \in \Theta \subset \mathbb{R}^{p+n}$ to a set of $q \leq p+n$ structurally identifiable combinations $\boldsymbol{c}\in\mathcal{C}\subset\mathbb{R}^q$. These combinations are obtained as coefficients in the input-output equations.
As a result, parameter values that lie in the same level set
\[
\mathcal{M}_{\boldsymbol{c}} := \left\{ \boldsymbol{\theta} \in \Theta : \boldsymbol{\xi}(\boldsymbol{\theta}) = \boldsymbol{c} \right\}
\]
produce identical observable outputs, i.e.,
\[
\boldsymbol{y}(t,\boldsymbol{\theta}) = \boldsymbol{y}(t, \boldsymbol{\theta}'), \quad \forall \, \boldsymbol{\theta}, \boldsymbol{\theta}' \in \mathcal{M}_{\boldsymbol{c}}.
\]
Under the standard Bayesian framework, this implies that the likelihood function is constant on each level set $\mathcal{M}_{\boldsymbol{c}}$, and posterior inference along these sets is determined entirely by the prior distribution over $\Theta$.

We now formalize the parameter inference problem within the Bayesian inverse problem framework. Given a model of the form $\Sigma$, let the prior distribution on the parameter space $\Theta \subseteq \mathbb{R}^{p+n}$ be specified by the density
\begin{equation}\label{eq:prior}
    p(\boldsymbol{\theta})
\propto
\mathbf{1}_{\boldsymbol{\theta}\in\Theta}\, p_0(\boldsymbol{\theta}),
\end{equation}
where $p_0$ is a known density on $\mathbb{R}^{p+n}$, and the indicator function restricts the prior support to the admissible parameter space $\Theta$. Suppose observations are collected at discrete time points,
\[
D := \{(t_1, \boldsymbol{y}_1), (t_2, \boldsymbol{y}_2), \dots, (t_T, \boldsymbol{y}_T)\}.
\]
Assuming a noise model with negative log-likelihood function $\rho(\cdot)$, the likelihood can be written as
\[
L(\boldsymbol{\theta}; D) \propto \exp\left(- \sum_{i=1}^T \rho\big(\boldsymbol{y}_i - \boldsymbol{y}(t_i; \boldsymbol{\theta})\big)\right).
\]
The posterior distribution is then given by
\[
\pi(\boldsymbol{\theta} \mid D) = \frac{L(\boldsymbol{\theta}; D)\, p(\boldsymbol{\theta})}{Z},
\]
where $Z$ is a normalizing constant chosen such that $\int_\Theta \pi(\boldsymbol{\theta} \mid D)d\boldsymbol{\theta} = 1$.
In general, the posterior density $\pi(\boldsymbol{\theta} \mid D)$ does not admit a closed-form expression and must be explored using sampling based methods such as Markov chain Monte Carlo (MCMC). In the presence of structural non-identifiability, the posterior is supported along manifolds of the form $\mathcal{M}_{\boldsymbol{c}}$. The inference problem becomes ill-posed when the prior is weakly informative, as an entire manifold of parameter values yields identical posterior. This often leads to poor mixing and slow convergence of standard MCMC algorithms, as we shown in Figure~\ref{fig1}. In Section~\ref{sec:meth}, we propose two classes of MCMC algorithms that leverage the results of structural identifiability analysis to improve sampling efficiency and convergence.
\section{Method} \label{sec:meth} 
We begin by considering the differentiability of the mapping $\boldsymbol{\xi}$. From structural identifiability analysis, $\boldsymbol{\xi}$ admits an analytic rational representation
\begin{equation}\label{eq:xi}
\boldsymbol{\xi}(\boldsymbol{\theta}) = \frac{\boldsymbol{P}(\boldsymbol{\theta})}{\boldsymbol{Q}(\boldsymbol{\theta})}, \quad \boldsymbol{\theta}\in \Theta \setminus \{\boldsymbol{\theta} : \boldsymbol{Q}(\boldsymbol{\theta}) = 0\},
\end{equation}
where $\boldsymbol{P}$ and $\boldsymbol{Q}$ are multivariate polynomials, and $\boldsymbol{\xi}$ is smooth on $\Theta \setminus \{\boldsymbol{\theta} : \boldsymbol{Q}(\boldsymbol{\theta}) = 0\}$.
For $\boldsymbol{\theta} \in \Theta$, the Jacobian matrix is
\begin{equation}\label{eq:dxi}
D\boldsymbol{\xi}(\boldsymbol{\theta})
=
\frac{D\boldsymbol{P}(\boldsymbol{\theta})\,\boldsymbol{Q}(\boldsymbol{\theta})
-
\boldsymbol{P}(\boldsymbol{\theta})\,D\boldsymbol{Q}(\boldsymbol{\theta})}
{\boldsymbol{Q}(\boldsymbol{\theta})^2}
\;\in\; \mathbb{R}^{q \times (p+n)}.
\end{equation}
By standard results in algebraic geometry, the set on which $\operatorname{rank}(D\boldsymbol{\xi}(\boldsymbol{\theta})) < q$ is contained in a algebraic variety and hence has Lebesgue measure zero if
\(
\boldsymbol{\xi}
\)
has generic rank \(q\) \citep{J24}. Therefore, $\boldsymbol{\xi}$ is a submersion almost everywhere, and the level sets
\(
\mathcal{M}_{\boldsymbol{c}} \)
are $(p+n-q)$-dimensional embedded submanifolds for almost every $\boldsymbol{c}\subset \mathcal{C}$. This property provides the foundation for derivative based geometric MCMC methods targeting the algebraic manifold induced by $\boldsymbol{\xi}$.
Let $\pi(\boldsymbol{\theta})$ be a posterior density on $\Theta$. By the coarea formula, for any integrable function $f$,
\[
\int_{\Theta} f(\boldsymbol{\theta}) \, d\boldsymbol{\theta}
=
\int_{\mathcal{C}} 
\left(
\int_{\mathcal{M}_{\boldsymbol{c}}}
\frac{f(\boldsymbol{\theta})}{J_{\boldsymbol{\xi}}(\boldsymbol{\theta})}
\, d\mathcal{H}^{p+n-q}(\boldsymbol{\theta})
\right)
d\boldsymbol{c},
\]
where $J_{\boldsymbol{\xi}}(\boldsymbol{\theta}) := \sqrt{\det\left( D\boldsymbol{\xi}(\boldsymbol{\theta}) D\boldsymbol{\xi}(\boldsymbol{\theta})^\top \right)}$ is the Jacobian determinant of the mapping $\boldsymbol{\xi}$, and $\mathcal{H}^{p+n-q}$ denotes the $(p+n-q)$-dimensional Hausdorff measure on the manifold $\mathcal{M}_{\boldsymbol{c}}$. 
Applying this to $f=\pi$ gives the marginal density on the identifiable subspace $\mathcal{C}\subset \mathbb R^q$
\begin{equation}\label{eq:pic}
\pi_{\mathcal{C}}(\boldsymbol{c})
=
\int_{\mathcal{M}_{\boldsymbol{c}}}
\frac{\pi(\boldsymbol{\theta})}{J_{\boldsymbol{\xi}}(\boldsymbol{\theta})}
\, d\mathcal{H}^{p+n-q}(\boldsymbol{\theta}), 
\end{equation}
where $\boldsymbol{c} = \boldsymbol{\xi}(\boldsymbol{\theta})$.
The conditional distribution of $\boldsymbol{\theta}$ given $\boldsymbol{c}$ is supported on $\mathcal{M}_{\boldsymbol{c}}$ and admits the density with respect to the Hausdorff measure $\mathcal{H}^{p+n-q}$
\begin{equation}\label{eq:picond}
\pi(\boldsymbol{\theta} \mid \boldsymbol{c})
=
\frac{
\displaystyle \frac{\pi(\boldsymbol{\theta})}{J_{\boldsymbol{\xi}}(\boldsymbol{\theta})}
}{
\displaystyle \int_{\mathcal{M}_{\boldsymbol{c}}}
\frac{\pi(\boldsymbol{\theta}')}{J_{\boldsymbol{\xi}}(\boldsymbol{\theta}')}
\, d\mathcal{H}^{p+n-q}(\boldsymbol{\theta}')
},
\quad \boldsymbol{\theta} \in \mathcal{M}_{\boldsymbol{c}}.
\end{equation}
Overall, the posterior admits the decomposition
\begin{equation}\label{eq:decom}
\pi(\boldsymbol{\theta})
=
\pi_{\mathcal{C}}(\boldsymbol{c})\,\pi(\boldsymbol{\theta}\mid\boldsymbol{c}).
\end{equation}
The decomposition \eqref{eq:decom} suggests two classes of sampling strategies for structurally non-identifiable models. Both exploit the geometric structure induced by the identifiable mapping \(\boldsymbol{\xi}\), but differ in how inference is performed over the manifold family \(\{\mathcal{M}_{\boldsymbol c}\}_{\boldsymbol c\in\mathcal C}\).
A first class of methods constructs geometric MCMC algorithms directly on the full parameter space \(\Theta\) \citep{J25}. The sampler employs a two-step proposal mechanism consisting of a \emph{teleportation move}, which proposes observationally equivalent states along the non-identifiable manifold \(\mathcal{M}_{\boldsymbol c}\), and a \emph{transition move}, which proposes states transverse to \(\mathcal{M}_{\boldsymbol c}\) in order to explore nearby manifolds. The composition of these two proposal steps defines a single proposal update, followed by a Metropolis--Hastings accept--reject correction targeting the posterior distribution \(\pi\).
Relative to standard random-walk or Hamiltonian samplers, this geometry-aware construction can improve mixing and mitigate slow exploration caused by posterior degeneracies arising from structural non-identifiability.
In Section~\ref{sec:geo}, we discuss how this strategy can be implemented using constrained Hamiltonian Monte Carlo methods.

A second approach performs inference on the lower-dimensional identifiable space \(\mathcal{C}\). Instead of sampling in the full parameter space \(\Theta\), the procedure proceeds in two steps:
\begin{enumerate}
    \item sample \(\boldsymbol{c} \sim \pi_{\mathcal{C}}(\boldsymbol{c})\) in the identifiable space \(\mathcal{C}\);
    
    \item conditional on \(\boldsymbol{c}\), reconstruct \(\boldsymbol{\theta} \sim \pi(\boldsymbol{\theta}\mid\boldsymbol{c})\) on the manifold \(\mathcal{M}_{\boldsymbol{c}}\).
\end{enumerate}
This decomposition separates identifiable and non-identifiable directions, and reduce the effective dimensionality of the inference problem.
In practice, direct evaluation of the marginal density \(\pi_{\mathcal{C}}\) in \eqref{eq:pic} is generally impossible due to the complex geometry of the algebraic manifold \(\mathcal{M}_{\boldsymbol{c}}\). The defining integral involves integration with respect to the Hausdorff measure over a manifold, which is difficult to evaluate analytically or numerically.
Nevertheless, the integrand in \eqref{eq:pic} can be evaluated almost everywhere up to a normalizing constant. This motivates a pseudo-marginal approach \citep{J30}, which we discussion in Section \ref{sec3.1}. 

\subsection{Geometric MCMC}\label{sec:geo}

We first discuss geometric MCMC methods for structurally non-identifiable
models. Given the identifiable combination map \(\boldsymbol{\xi}\), we
construct two proposal components: a transition move that explores nearby
identifiable level sets, and a teleportation move that evolves along the
non-identifiable manifold \(\mathcal{M}_{\boldsymbol c}\). The transition proposal can be chosen from a broad class of standard MCMC
updates, including random-walk Metropolis, Gibbs sampling, or Hamiltonian
Monte Carlo. Since the identifiable mapping
\(\boldsymbol{\xi}\) characterizes directions of structural
non-identifiability, isotropic proposals may explore the parameter space
inefficiently. This motivates the construction of anisotropic proposals
that preferentially move in directions normal to
\(\mathcal{M}_{\boldsymbol c}\) to improve exploration across
nearby identifiable manifolds.
Let \(\boldsymbol{\theta}\in\Theta\) and
\(\boldsymbol c=\boldsymbol{\xi}(\boldsymbol{\theta})\). For a generic point with full rank
\(D\boldsymbol{\xi}(\boldsymbol{\theta})\), the tangent and
normal spaces at \(\boldsymbol{\theta}\) are defined by
\[
  T\mathcal{M}_{\boldsymbol c}(\boldsymbol{\theta})
  :=
  \ker\!\bigl(D\boldsymbol{\xi}(\boldsymbol{\theta})\bigr),
  \qquad
  N\mathcal{M}_{\boldsymbol c}(\boldsymbol{\theta})
  :=
  \operatorname{span}\!\bigl(
    D\boldsymbol{\xi}(\boldsymbol{\theta})^\top
  \bigr).
\]
The ambient space admits the orthogonal decomposition
\(
\mathbb{R}^{p+n}
:=
T\mathcal{M}_{\boldsymbol c}(\boldsymbol{\theta})
\oplus
N\mathcal{M}_{\boldsymbol c}(\boldsymbol{\theta}).
\)
The orthogonal projection onto the normal space is given by
\begin{equation}\label{eq:pin}
  \Pi_N(\boldsymbol{\theta})
  =
  D\boldsymbol{\xi}(\boldsymbol{\theta})^\top
  \left(
    D\boldsymbol{\xi}(\boldsymbol{\theta})
    D\boldsymbol{\xi}(\boldsymbol{\theta})^\top
  \right)^{-1}
  D\boldsymbol{\xi}(\boldsymbol{\theta}).
\end{equation}
We define the transition proposal density by
\begin{equation}\label{eq:qnormal}
  q_P(\boldsymbol{\theta}'\mid\boldsymbol{\theta})
  =
  \mathcal{N}\!\left(
    \boldsymbol{\theta}';\,
    \boldsymbol{\theta},\,
    \sigma_N^2\,\Pi_N(\boldsymbol{\theta})
    +
    \sigma_T^2\bigl(I-\Pi_N(\boldsymbol{\theta})\bigr)
  \right),
\end{equation}
where \(\sigma_N>\sigma_T>0\). This construction permits larger moves in
identifiable directions. We will show next that although the transition proposal \(q_P\) is
anisotropic, the resulting Metropolis-Hastings acceptance rate is analytical when combined with a reversible and
volume-preserving teleportation map.

The teleportation proposal evolves the proposed state along
\(\mathcal{M}_{\boldsymbol c}\). A central challenge is that
\(\mathcal{M}_{\boldsymbol c}\) is an implicitly defined algebraic
manifold, so proposal mechanisms are
generally not available in closed form. Constrained Hamiltonian
integrators such as RATTLE \citep{J36} provide a way for
constructing such proposals.
The method first samples a velocity variable
\(
p\in T\mathcal{M}_{\boldsymbol c}(\boldsymbol{\theta}).
\)
Then a step size \(\varepsilon>0\) is
used to move the state in the tangent direction, followed by a
projection step that maps the updated state back onto the manifold through
Lagrange multiplier corrections. The resulting trajectory 
remains constrained to \(\mathcal{M}_{\boldsymbol c}\).
A  single RATTLE update is summarized in
Algorithm~\ref{alg:rattle}.
\begin{algorithm}[H]
\caption{RATTLE teleportation}
\label{alg:rattle}
\begin{algorithmic}[1]

\Require Current state
\(
\boldsymbol{\theta}_n\in\mathcal{M}_{\boldsymbol c}
\),
step size \(\varepsilon>0\)

\State Sample a tangent velocity
\(
p_n\in T\mathcal{M}_{\boldsymbol c}(\boldsymbol{\theta}_n)
\)
such that
\(
D\boldsymbol{\xi}(\boldsymbol{\theta}_n)p_n=0.
\)

\State Perform a position update
\(
\boldsymbol{\theta}'
=
\boldsymbol{\theta}_n
+
\varepsilon p_n.
\)

\State Project
\(
\boldsymbol{\theta}'
\)
back onto the manifold by solving
\(
\boldsymbol{\xi}(\boldsymbol{\theta}_{n+1})
=
\boldsymbol c
\).

\State Project the updated velocity onto $T\mathcal{M}_{\boldsymbol{c}}(\boldsymbol{\theta}_{n+1})$ through $\Pi_N$ in \eqref{eq:pin}.

\State \Return
\(
(\boldsymbol{\theta}_{n+1},p_{n+1})
\).

\end{algorithmic}
\end{algorithm}
The result of a RATTLE step generates a new state on
\(\mathcal{M}_{\boldsymbol c}\). Importantly, the RATTLE integrator is reversible and
volume-preserving \citep{M1}. These properties make it suitable for
exploring distributions supported on implicitly defined manifolds, and
have led to the development of several constrained Monte Carlo methods;
see, for example, \citet{M2,M3,M4}. Iterating the RATTLE update \(M\)
times defines a teleportation map
\[
T_M:\mathcal{M}_{\boldsymbol c}\to\mathcal{M}_{\boldsymbol c},
\]
which generates long-range moves along the non-identifiable manifold.
Let
\[
P(\boldsymbol{\theta},d\widetilde{\boldsymbol{\theta}})
=
q_P(\widetilde{\boldsymbol{\theta}}\mid\boldsymbol{\theta})
\,d\widetilde{\boldsymbol{\theta}}
\]
denote the transition proposal kernel defined in
\eqref{eq:qnormal}. The full geometric proposal is obtained by composing
the transition move with the teleportation map
\[
\widetilde{\boldsymbol{\theta}}
\sim
P(\boldsymbol{\theta},\cdot),
\qquad
\boldsymbol{\theta}'
=
T_M(\widetilde{\boldsymbol{\theta}}).
\]
The full composite proposal kernel is then given by
\begin{equation}\label{eq:Q}
    Q(\boldsymbol{\theta},d\boldsymbol{\theta}')
=
P\bigl(
\boldsymbol{\theta},
T_M^{-1}(d\boldsymbol{\theta}')
\bigr).
\end{equation}
The composite kernel \(Q\) does not preserve the target distribution \(\pi\) unless a Metropolis--Hastings correction is applied. The correction ensures that the resulting Markov chain admits \(\pi\) as its invariant distribution. A key requirement is that the corrected kernel satisfies detailed balance with respect to \(\pi\). In Lemma~\ref{lem:mhgeo}, we derive the Metropolis--Hastings acceptance probability for $Q$ and show that the resulting Markov kernel satisfies detailed balance with respect to \(\pi\).

\begin{lemma}\label{lem:mhgeo}
The geometric
MCMC proposal $Q$ satisfies detailed balance with respect to
\(\pi\) under the Metropolis--Hastings acceptance probability
\begin{equation}\label{eq:mhgeo}
\alpha(\boldsymbol{\theta},\boldsymbol{\theta}')
=
1\wedge
\frac{
\pi(\boldsymbol{\theta}')
q_P(\widetilde{\boldsymbol{\theta}}\mid\boldsymbol{\theta}')
}{
\pi(\boldsymbol{\theta})
q_P(\widetilde{\boldsymbol{\theta}}'\mid\boldsymbol{\theta})
},
\end{equation}
where
\[
\widetilde{\boldsymbol{\theta}}
=
T_M^{-1}(\boldsymbol{\theta}'),
\qquad
\widetilde{\boldsymbol{\theta}}'
=
T_M^{-1}(\boldsymbol{\theta}).
\]
\end{lemma}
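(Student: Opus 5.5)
The plan is to verify detailed balance in integrated form: for all bounded measurable $A,B\subset\Theta$,
\[
\int_A\pi(\boldsymbol\theta)\int_B Q(\boldsymbol\theta,d\boldsymbol\theta')\,\alpha(\boldsymbol\theta,\boldsymbol\theta')\,d\boldsymbol\theta
=\int_B\pi(\boldsymbol\theta')\int_A Q(\boldsymbol\theta',d\boldsymbol\theta)\,\alpha(\boldsymbol\theta',\boldsymbol\theta)\,d\boldsymbol\theta'.
\]
The rejection part of the Metropolis--Hastings kernel is a point mass at the current state, so it is automatically symmetric. The argument therefore reduces to showing that the measure $\pi(d\boldsymbol\theta)Q(\boldsymbol\theta,d\boldsymbol\theta')\alpha(\boldsymbol\theta,\boldsymbol\theta')$ is invariant under the swap $(\boldsymbol\theta,\boldsymbol\theta')\mapsto(\boldsymbol\theta',\boldsymbol\theta)$.

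\textbf{Step 1: density of $Q$.} I treat $T_M$ as a map on all of $\Theta$ that acts on each leaf $\mathcal M_{\boldsymbol c}$ by the iterated RATTLE flow. I would first show that $T_M$ is a measurable bijection of $\Theta$, a.e.\ differentiable, with $|\det DT_M|=1$ with respect to Lebesgue measure. This is the \emph{volume-preserving} property quoted from \citet{M1}. It also uses the fact that $T_M$ preserves each level set, so $\boldsymbol\xi\circ T_M=\boldsymbol\xi$. Given this, a change of variables in \eqref{eq:Q} shows that $Q(\boldsymbol\theta,\cdot)$ has Lebesgue density
\[
q(\boldsymbol\theta'\mid\boldsymbol\theta)=q_P\bigl(T_M^{-1}(\boldsymbol\theta')\mid\boldsymbol\theta\bigr)\,\bigl|\det DT_M^{-1}(\boldsymbol\theta')\bigr| .
\]
Since the determinant factor equals one, this is $q_P(T_M^{-1}(\boldsymbol\theta')\mid\boldsymbol\theta)$. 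In the notation of the lemma, the forward density evaluates $q_P$ at $\widetilde{\boldsymbol\theta}=T_M^{-1}(\boldsymbol\theta')$, and the reverse density $q(\boldsymbol\theta\mid\boldsymbol\theta')$ evaluates $q_P$ at $\widetilde{\boldsymbol\theta}'=T_M^{-1}(\boldsymbol\theta)$. The ratio in \eqref{eq:mhgeo} is to be read as reverse-over-forward,
\[
\frac{\pi(\boldsymbol\theta')\,q(\boldsymbol\theta\mid\boldsymbol\theta')}{\pi(\boldsymbol\theta)\,q(\boldsymbol\theta'\mid\boldsymbol\theta)} .
\]
I would state explicitly this pairing of each intermediate point with its conditioning state, since it is what makes the formula a ratio of reverse and forward proposal densities.

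\textbf{Step 2: symmetry.} Once both $Q(\boldsymbol\theta,\cdot)$ and $Q(\boldsymbol\theta',\cdot)$ have Lebesgue densities, the standard identity applies:
\[
\pi(\boldsymbol\theta)\,q(\boldsymbol\theta'\mid\boldsymbol\theta)\,\bigl(1\wedge r\bigr)=\min\{\pi(\boldsymbol\theta)q(\boldsymbol\theta'\mid\boldsymbol\theta),\ \pi(\boldsymbol\theta')q(\boldsymbol\theta\mid\boldsymbol\theta')\}.
\]
The right-hand side is symmetric in $(\boldsymbol\theta,\boldsymbol\theta')$, which gives detailed balance and hence $\pi$-invariance. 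Two measure-zero sets need care. On the rank-deficient set of $D\boldsymbol\xi$, the projector $\Pi_N$ and the manifolds are undefined. On the zero set of $\boldsymbol Q$, $\boldsymbol\xi$ itself is undefined. Both have Lebesgue measure zero by the algebraic-variety argument preceding \eqref{eq:pic}, so they do not affect the integrals. The Gaussian $q_P$ is nondegenerate off this set because $\sigma_N,\sigma_T>0$.

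\textbf{Main obstacle.} The delicate point is Step 1. RATTLE is symplectic and time-reversible on the augmented space of (position, tangent velocity). It is not automatically volume-preserving as a position-only map, and $T_M$ as written depends on the randomly drawn velocity $p_n$. My first approach is to work on the extended space $(\boldsymbol\theta,p)$ with velocity drawn from a Gaussian on $T\mathcal M_{\boldsymbol c}$. There I would use the reversibility of RATTLE under momentum flip together with its preservation of the product of Hausdorff measure and tangent Lebesgue measure (\citet{M1}). By the coarea decomposition \eqref{eq:decom}, this transfers to Lebesgue measure on $\Theta$: the factor $J_{\boldsymbol\xi}$ cancels because $\boldsymbol\xi\circ T_M=\boldsymbol\xi$, provided the velocity distribution's normalizing factor is invariant along the flow. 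If that fails, I would fall back on assuming that $T_M$ is a deterministic, volume-preserving bijection (velocity fixed by a measurable rule). This appears to be the level of generality the lemma implicitly intends, and under it Steps 1--2 go through directly.
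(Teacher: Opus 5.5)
Your Steps 1--2 are the paper's proof. The paper writes $Q$ as the pushforward of $q_P$ under $T_M$, uses $|\det DT_M^{-1}|=1$ to get the Lebesgue density $q_P(T_M^{-1}(\boldsymbol\theta')\mid\boldsymbol\theta)$, reads off the reverse density $q_P(T_M^{-1}(\boldsymbol\theta)\mid\boldsymbol\theta')$, and finishes with the standard Metropolis--Hastings construction. Your reverse-over-forward reading is the ratio the paper's proof actually derives and later uses in Theorem~\ref{thm:geoergodic}. Taken literally, \eqref{eq:mhgeo} pairs $\widetilde{\boldsymbol\theta}=T_M^{-1}(\boldsymbol\theta')$ with $\boldsymbol\theta'$ in the numerator, which is not the reverse density. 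So making the pairing explicit is a genuine correction, not bookkeeping. The paper also invokes time-reversibility of RATTLE to obtain the reverse density; as your Step 1 shows, this is unnecessary once $T_M$ is a deterministic bijection.

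The part where you go beyond the paper needs fixing. In your extended-space route, $J_{\boldsymbol\xi}$ does not cancel merely because $\boldsymbol\xi\circ T_M=\boldsymbol\xi$. If a bijection $T$ maps each $\mathcal M_{\boldsymbol c}$ onto itself and preserves $\mathcal H^{p+n-q}$ there, the coarea formula gives $|\det DT^{-1}(\boldsymbol\theta')|=J_{\boldsymbol\xi}(\boldsymbol\theta')/J_{\boldsymbol\xi}(T^{-1}(\boldsymbol\theta'))$. This is not $1$, because $J_{\boldsymbol\xi}$ is generally not constant along leaves: in the SI example, $J_{\boldsymbol\xi}^2=1+(N^2+1)c_2^2+c_3^2/\rho^4$ on $\mathcal M_{\boldsymbol c}$, with $N$ the population size. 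Moreover, with a freshly drawn velocity, $T_M^{-1}(\boldsymbol\theta')$ is not a single point. Any correct extended-space ratio would have to involve the velocity draws (RATTLE does not conserve $|p|^2$ exactly) as well as this $J_{\boldsymbol\xi}$ ratio, so it would not reduce to \eqref{eq:mhgeo}.

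The lemma as stated is therefore only available under your fallback hypothesis: $T_M$ is a deterministic bijection of $\Theta$ with unit Lebesgue Jacobian. That is exactly what the paper tacitly assumes. It derives $|\det DT(\boldsymbol\theta,p)|=1$ on the constrained phase space $\mathcal D$ from Liouville's theorem, then uses it as $|\det DT_M^{-1}(\boldsymbol\theta')|=1$ on $\Theta$, which is the conflation you flagged. Under that hypothesis your argument and the paper's coincide. You should drop the coarea-transfer claim and state the hypothesis explicitly.
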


\begin{proof}
A single RATTLE step is the composition of symplectic drift, constraint
projection, and momentum projection maps on the constrained phase space
\[
\mathcal D
:=
\left\{
(\boldsymbol{\theta},p):
\boldsymbol{\xi}(\boldsymbol{\theta})=\boldsymbol c,
\;
D\boldsymbol{\xi}(\boldsymbol{\theta})p=0
\right\}.
\]
Each component map is symplectic, hence preserves the canonical volume
form. Therefore the RATTLE map
\(
T:\mathcal D\to\mathcal D
\)
satisfies
\(
T^*\omega=\omega,
\)
where \(\omega\) is the constrained symplectic form. By Liouville's
theorem,
\[
\left|\det DT(\boldsymbol{\theta},p)\right|=1.
\]
Since
\(
T_M=T^M
\), we have
\[
\left|
\det DT_M(\boldsymbol{\theta},p)
\right|
=
\prod_{k=0}^{M-1}
\left|
\det DT(T^k(\boldsymbol{\theta},p))
\right|
=
1.
\] 
Hence \(T_M\) is volume-preserving. The RATTLE step is also symmetric under momentum reversal. If
\(
(\boldsymbol{\theta}_n,p_n)
\mapsto
(\boldsymbol{\theta}_{n+1},p_{n+1})
\)
is one RATTLE step, then applying the same update with step size
\(\varepsilon\) and reversed momentum recovers the original state
\[
(\boldsymbol{\theta}_{n+1},-p_{n+1})
\mapsto
(\boldsymbol{\theta}_{n},-p_n).
\]
Hence the map is time-reversible. Since compositions of reversible maps
remain reversible, the \(M\)-step teleportation map \(T_M\) is
reversible on \(\mathcal{M}_{\boldsymbol c}\).
Let
\[
\widetilde{\boldsymbol{\theta}}
=
T_M^{-1}(\boldsymbol{\theta}'),
\qquad
\widetilde{\boldsymbol{\theta}}'
=
T_M^{-1}(\boldsymbol{\theta}).
\]
The proposal mechanism first samples
\(
\widetilde{\boldsymbol{\theta}}
\sim
q_P(\cdot\mid\boldsymbol{\theta}),
\)
then deterministically maps the proposal through \(T_M\) by evaluating
\(
\boldsymbol{\theta}'
=
T_M(\widetilde{\boldsymbol{\theta}}).
\)
Therefore, the proposal kernel is the pushforward of \(q_P\) through
\(T_M\). By the change-of-variables formula,
\[
Q(\boldsymbol{\theta},d\boldsymbol{\theta}')
=
q_P(
T_M^{-1}(\boldsymbol{\theta}')
\mid
\boldsymbol{\theta}
)
\left|
\det DT_M^{-1}(\boldsymbol{\theta}')
\right|
d\boldsymbol{\theta}'.
\]
Since \(T_M\) is volume-preserving, we have
\(
\left|
\det DT_M^{-1}(\boldsymbol{\theta}')
\right|
=
1,
\)
and hence
\[
Q(\boldsymbol{\theta},d\boldsymbol{\theta}')
=
q_P(
T_M^{-1}(\boldsymbol{\theta}')
\mid
\boldsymbol{\theta}
)
\,d\boldsymbol{\theta}'.
\]
Reversibility of \(T_M\) implies that every forward trajectory from
\(
\widetilde{\boldsymbol{\theta}}
\)
to
\(
\boldsymbol{\theta}'
\)
is paired with a unique reverse trajectory from
\(
\widetilde{\boldsymbol{\theta}}'
\)
to
\(
\boldsymbol{\theta}
\).
Therefore the reverse proposal density is
\[
Q(\boldsymbol{\theta}',d\boldsymbol{\theta})
=
q_P(
T_M^{-1}(\boldsymbol{\theta})
\mid
\boldsymbol{\theta}'
)
\,d\boldsymbol{\theta}.
\]
The Metropolis--Hastings ratio then becomes
\[
\frac{
\pi(\boldsymbol{\theta}')
Q(\boldsymbol{\theta}',d\boldsymbol{\theta})
}{
\pi(\boldsymbol{\theta})
Q(\boldsymbol{\theta},d\boldsymbol{\theta}')
}
=
\frac{
\pi(\boldsymbol{\theta}')
q_P(
T_M^{-1}(\boldsymbol{\theta})
\mid
\boldsymbol{\theta}'
)
}{
\pi(\boldsymbol{\theta})
q_P(
T_M^{-1}(\boldsymbol{\theta}')
\mid
\boldsymbol{\theta}
)
},
\]
which gives \eqref{eq:mhgeo}.
Define the Markov kernel
\begin{equation}\label{eq:K}
    K(\boldsymbol{\theta},d\boldsymbol{\theta}')
:=
Q(\boldsymbol{\theta},d\boldsymbol{\theta}')
\alpha(\boldsymbol{\theta},\boldsymbol{\theta}')
+
r(\boldsymbol{\theta})
\delta_{\boldsymbol{\theta}}
(d\boldsymbol{\theta}'),
\end{equation}
where
\(
r(\boldsymbol{\theta})
=
1-
\int
Q(\boldsymbol{\theta},d\boldsymbol{\theta}')
\alpha(\boldsymbol{\theta},\boldsymbol{\theta}').
\)
By the standard Metropolis--Hastings construction,
\[
\pi(\boldsymbol{\theta})
K(\boldsymbol{\theta},d\boldsymbol{\theta}')
=
\pi(\boldsymbol{\theta}')
K(\boldsymbol{\theta}',d\boldsymbol{\theta}),
\]
and \(K\) satisfies detailed balance with respect to \(\pi\).
\end{proof}
From the construction of \(T_M\), each RATTLE update requires solving the algebraic system
\(
\boldsymbol{\xi}(\boldsymbol{\theta}_{n+1})=\boldsymbol c
\), and an algebraic equation solver is needed. For the present setting, a local solver such as Newton's method is sufficient because the step size \(\varepsilon\) is chosen to be small. The solver converges to the desired nearby solution on the same manifold branch.
If the equation solve fails, the proposal is rejected. In practice, this nonlinear solve is incorporated into the \emph{reversibility check} of the RATTLE integrator. The reversibility check verifies that the numerical trajectory can be retraced by reversing the momentum and applying the integrator backward. This is important because reversibility is a key requirement for the Metropolis--Hastings correction and guarantees that the resulting Markov kernel satisfies detailed balance. A detailed definition and analysis of the reversibility check for constrained Hamiltonian dynamics can be found in \citet{M3}.

The teleportation map \(T_M\) generates updates on \(\mathcal M_{\boldsymbol c}\) through a sequence of small constrained position projections. A single application of \(T_M\) typically remains on the same connected component of the manifold. This raises the question of whether the algorithm remains ergodic when \(\mathcal M_{\boldsymbol c}\) is disconnected. Fortunately, ergodicity is inherited from the transition kernel $P$, provided that $P$ is itself ergodic with respect to the target distribution. The teleportation step serves only to improve exploration along non-identifiable directions.
The resulting identifiability-aware geometric MCMC algorithm is summarized in Algorithm~\ref{alg:IAGMCMC}. We next present a convergence theorem for the resulting Markov chain.
\begin{theorem}\label{thm:geoergodic}
Assume that \(\Theta\subset \mathbb R^{p+n}\) is compact and that the target density
\(\pi\) is continuous and strictly positive on \(\Theta\).
Then the Markov kernel $K$ in equation \eqref{eq:K} admits \(\pi\) as its unique invariant
distribution. Moreover, there exist constants \(C<\infty\) and
\(\rho\in(0,1)\) such that
\[
\sup_{\theta\in\Theta}
\left|
\delta_{\boldsymbol{\theta}} K^n-\pi
\right|_{\mathrm{TV}}
\le
C\rho^n,
\qquad n\ge 1.
\]
Hence \(K\) is uniformly geometrically ergodic.
\end{theorem}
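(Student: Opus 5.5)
Invariance of \(\pi\) follows directly from Lemma~\ref{lem:mhgeo}. Detailed balance \(\pi(d\boldsymbol\theta)K(\boldsymbol\theta,d\boldsymbol\theta')=\pi(d\boldsymbol\theta')K(\boldsymbol\theta',d\boldsymbol\theta)\), integrated over \(\boldsymbol\theta\), gives \(\pi K=\pi\). Uniqueness and the geometric rate will both come from a uniform minorization (Doeblin) condition on the whole state space. The plan is to show there exist \(\epsilon>0\) and a probability measure \(\nu\) on \(\Theta\) such that
\[
K(\boldsymbol\theta,A)\ \ge\ \epsilon\,\nu(A)\qquad\text{for all }\boldsymbol\theta\in\Theta,\ A\subset\Theta \text{ measurable}.
\]
The standard result (e.g.\ Meyn--Tweedie, Thm.~16.0.2, or Roberts--Rosenthal) then says \(\Theta\) is small, and the chain is uniformly ergodic with \(\sup_{\boldsymbol\theta}\|\delta_{\boldsymbol\theta}K^n-\pi\|_{\mathrm{TV}}\le (1-\epsilon)^{n}\). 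So we can take \(C=1\) and \(\rho=1-\epsilon\). Uniqueness of the invariant distribution also follows, since any two invariant laws would have to be within \((1-\epsilon)^n\) of each other for every \(n\).

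To get the minorization, write the accepted part of the kernel using the change of variables already done in the proof of Lemma~\ref{lem:mhgeo}. Since \(T_M\) is volume-preserving,
\[
K(\boldsymbol\theta,A)\ \ge\ \int_A q_P\bigl(T_M^{-1}(\boldsymbol\theta')\mid\boldsymbol\theta\bigr)\,\alpha(\boldsymbol\theta,\boldsymbol\theta')\,d\boldsymbol\theta'.
\]
Next I bound each factor from below uniformly on the compact set \(\Theta\).
\begin{itemize}
\item \emph{Density bound for \(\pi\).} By continuity and positivity, \(0<m:=\min_\Theta\pi\le \max_\Theta\pi=:M_\pi<\infty\).
\item \emph{Covariance of \(q_P\).} The covariance \(\Sigma(\boldsymbol\theta)=\sigma_N^2\Pi_N+\sigma_T^2(I-\Pi_N)\) has eigenvalues in \(\{\sigma_T^2,\sigma_N^2\}\) wherever it is defined. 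So \(\det\Sigma(\boldsymbol\theta)\) lies between \(\sigma_T^{2(p+n)}\) and \(\sigma_N^{2(p+n)}\), and \(\Sigma^{-1}\preceq\sigma_T^{-2}I\). This holds uniformly, even though \(\Pi_N\) may be discontinuous on the null set where \(D\boldsymbol\xi\) drops rank (I would define \(\Pi_N\) arbitrarily there, e.g.\ as \(0\)).
\item \emph{Lower bound on \(q_P\).} With \(R:=\operatorname{diam}\Theta\), we get \(q_P(\boldsymbol u\mid\boldsymbol\theta)\ge (2\pi)^{-(p+n)/2}\sigma_N^{-(p+n)}e^{-R^2/(2\sigma_T^2)}=:\kappa>0\) for all \(\boldsymbol u,\boldsymbol\theta\in\Theta\).
\item \emph{Lower bound on \(\alpha\).} The same bounds give \(\alpha\ge 1\wedge \frac{m\kappa}{M_\pi\kappa_{\max}}\), where \(\kappa_{\max}=(2\pi)^{-(p+n)/2}\sigma_T^{-(p+n)}\) bounds \(q_P\) from above.
\end{itemize}
Combining these, \(K(\boldsymbol\theta,A)\ge \kappa\,\alpha_{\min}\,\mathrm{Leb}(A\cap\Theta')\), where \(\Theta'\) is the set of \(\boldsymbol\theta'\) for which \(T_M^{-1}(\boldsymbol\theta')\in\Theta\) is well defined. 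Then take \(\nu=\mathrm{Leb}(\cdot\cap\Theta')/\mathrm{Leb}(\Theta')\) and \(\epsilon=\kappa\,\alpha_{\min}\,\mathrm{Leb}(\Theta')\).

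The main obstacle is the teleportation map. We need \(T_M\) to be a bijection of \(\Theta\) (up to null sets) whose image has positive Lebesgue measure. We also have to handle proposals that leave \(\Theta\) or where the Newton solve or the reversibility check fails, since those are rejected. The proposal density is positive only on \(\Theta'\), so I must argue that \(\mathrm{Leb}(\Theta')>0\). My first attempt is to note that for \(\varepsilon\) small the RATTLE step is a local diffeomorphism near the identity on the generic full-rank set. Hence \(\Theta'\) contains a nonempty open set, and more precisely one of measure bounded below independently of \(\boldsymbol\theta\), because \(\Theta'\) does not depend on \(\boldsymbol\theta\) at all. That independence is the key point making the bound uniform. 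Rejections only add mass to the \(\delta_{\boldsymbol\theta}\) term and do not hurt the lower bound.

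If controlling \(\Theta'\) proves delicate, the fallback is to treat the case \(M=0\) separately and then argue that the teleportation step is a \(\pi\)-reversible deterministic involution composed with \(P\). In that case minorization of \(P\) alone on the compact \(\Theta\), by the same Gaussian bounds, transfers through the volume-preserving bijection \(T_M\). This matches the paper's remark that ergodicity is inherited from \(P\).
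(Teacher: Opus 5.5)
\textbf{Where you match the paper.} Your skeleton is the paper's proof. Invariance comes from the detailed balance in Lemma~\ref{lem:mhgeo}. You then get a global Doeblin minorization with respect to normalized Lebesgue measure from uniform lower bounds on $\pi$, $q_P$ and $\alpha$ over the compact $\Theta$, which gives the rate $(1-\epsilon)^n$.

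One part of your version is better than the paper's. The paper asserts that $q_P$ is continuous on $\Theta\times\Theta$ and invokes the extreme value theorem. But $\Pi_N$ is undefined, and discontinuous, where $D\boldsymbol\xi$ drops rank. Your eigenvalue bounds $\sigma_T^2 I\preceq\Sigma(\boldsymbol\theta)\preceq\sigma_N^2 I$ give $\kappa\le q_P\le\kappa_{\max}$ on $\Theta\times\Theta$ however $\Pi_N$ is defined on that null set.

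\textbf{The gap: the relaxation to $\Theta'$.} Your lower bound on $\alpha$ needs the \emph{reverse} density $q_P(T_M^{-1}(\boldsymbol\theta)\mid\boldsymbol\theta')$ to be at least $\kappa$. That requires $T_M^{-1}(\boldsymbol\theta)$ to exist and lie in $\Theta$ for the \emph{current} state, i.e.\ $\boldsymbol\theta\in\Theta'$.
\begin{enumerate}
\item If $\boldsymbol\theta$ is not in the range of $T_M$, the reverse proposal density vanishes. Then $\alpha(\boldsymbol\theta,\cdot)\equiv 0$ and $K(\boldsymbol\theta,\cdot)=\delta_{\boldsymbol\theta}$, so $\|\delta_{\boldsymbol\theta}K^n-\pi\|_{\mathrm{TV}}=1$ for every $n$.
\item If $T_M^{-1}(\boldsymbol\theta)$ exists but lies outside $\Theta$, your $\kappa$ bound does not apply.
\end{enumerate}
That $\Theta'$ is independent of $\boldsymbol\theta$ and has positive Lebesgue measure only controls the forward factor $q_P(T_M^{-1}(\boldsymbol\theta')\mid\boldsymbol\theta)$. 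It cannot give a minorization uniform over all $\boldsymbol\theta\in\Theta$. For the same reason, ``a bijection of $\Theta$ up to null sets'' is not enough for a supremum over every starting point.

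\textbf{What the argument actually needs.} You need $\Theta'=\Theta$, i.e.\ $T_M$ must be a volume-preserving bijection of $\Theta$ onto itself. This is exactly the hypothesis the paper uses tacitly when it writes $Q(\boldsymbol\theta,d\boldsymbol\theta')\ge q_{\min}\,d\boldsymbol\theta'$ on all of $\Theta$, and your fallback assumes it too. Under that hypothesis your proof closes and coincides with the paper's. The local-diffeomorphism argument for $\mathrm{Leb}(\Theta')>0$ does not buy the extra generality it aims for and should be dropped.
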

\begin{proof}
By Lemma~\ref{lem:mhgeo}, the kernel \(K\) satisfies detailed balance,
\[
\pi(d\boldsymbol{\theta})K(\boldsymbol{\theta},d\boldsymbol{\theta}')
=
\pi(d\boldsymbol{\theta}')K(\boldsymbol{\theta}',d\boldsymbol{\theta}).
\]
Integrating both sides with respect to \(\boldsymbol{\theta}\) gives
\[
\int_\Theta
\pi(d\boldsymbol{\theta})\,
K(\boldsymbol{\theta},A)
=
\pi(A),
\qquad
A\in\mathcal B(\Theta),
\]
where \(\mathcal B(\Theta)\) denotes the Borel \(\sigma\)-algebra on
\(\Theta\). Therefore \(\pi\) is an invariant distribution of \(K\).
Since \(\Theta\) is compact and \(q_P\) is continuous and strictly
positive, the extreme value theorem implies that \(q_P\) attains its
minimum and maximum on \(\Theta\times\Theta\). Hence
\[
0<q_{\min}
:=
\inf_{\boldsymbol{\theta},\boldsymbol{\theta}'\in\Theta}
q_P(\boldsymbol{\theta}'\mid\boldsymbol{\theta})
\le
q_P(\boldsymbol{\theta}'\mid\boldsymbol{\theta})
\le
\sup_{\boldsymbol{\theta},\boldsymbol{\theta}'\in\Theta}
q_P(\boldsymbol{\theta}'\mid\boldsymbol{\theta})
=:q_{\max}
<\infty.
\]
Similarly, continuity and strict positivity of \(\pi\) on the compact
set \(\Theta\) imply
\[
0<\pi_{\min}
\le
\pi(\boldsymbol{\theta})
\le
\pi_{\max}
<\infty.
\]
For every \(\boldsymbol{\theta},\boldsymbol{\theta}'\in\Theta\),
\[
\alpha(\boldsymbol{\theta},\boldsymbol{\theta}')
=
1\wedge
\frac{
\pi(\boldsymbol{\theta}')
q_P(T_M^{-1}(\boldsymbol{\theta})\mid\boldsymbol{\theta}')
}{
\pi(\boldsymbol{\theta})
q_P(T_M^{-1}(\boldsymbol{\theta}')\mid\boldsymbol{\theta})
}.
\]
Using the bounds above,
\[
\frac{
\pi(\boldsymbol{\theta}')
q_P(T_M^{-1}(\boldsymbol{\theta})\mid\boldsymbol{\theta}')
}{
\pi(\boldsymbol{\theta})
q_P(T_M^{-1}(\boldsymbol{\theta}')\mid\boldsymbol{\theta})
}
\ge
\frac{\pi_{\min}q_{\min}}
     {\pi_{\max}q_{\max}}.
\]
Therefore $\alpha$ is bounded below by
\[
a_0
:=
\min\!\left\{
1,
\frac{\pi_{\min}q_{\min}}
     {\pi_{\max}q_{\max}}
\right\}
>0.
\]
Since \(T_M\) is volume-preserving, Lemma~\ref{lem:mhgeo} gives
\[
Q(\boldsymbol{\theta},d\boldsymbol{\theta}')
=
q_P(T_M^{-1}(\boldsymbol{\theta}')\mid\boldsymbol{\theta})\,d\boldsymbol{\theta}',
\]
where \(d\boldsymbol{\theta}'\) denotes Lebesgue measure on \(\Theta\). Hence
\[
Q(\boldsymbol{\theta},d\boldsymbol{\theta}')
\ge
q_{\min}\,d\boldsymbol{\theta}'.
\]
Combining this with the lower bound on the acceptance probability gives
\[
K(\boldsymbol{\theta},d\boldsymbol{\theta}')
=
Q(\boldsymbol{\theta},d\boldsymbol{\theta}')
\alpha(\boldsymbol{\theta},\boldsymbol{\theta}')
+
r(\boldsymbol{\theta})\delta_{\boldsymbol{\theta}}(d\boldsymbol{\theta}')
\ge
a_0 q_{\min}\,d\boldsymbol{\theta}'.
\]
Let
\(
\nu(A)
=
\lambda(A)/\lambda(\Theta),
\)
where \(\lambda\) denotes Lebesgue measure on \(\Theta\). Then
\[
K(\boldsymbol{\theta},A)
\ge
\delta\,\nu(A),
\qquad
\delta
=
a_0q_{\min}\lambda(\Theta)
>0.
\]
Hence \(K\) satisfies a global Doeblin minorization condition.
By Doeblin's theorem, \(K\) is uniformly ergodic and admits a unique
invariant probability measure. Since \(\pi\) has already been shown to
be invariant, it follows that \(\pi\) is the unique invariant
distribution of \(K\). Moreover,
\[
\sup_{\boldsymbol{\theta}\in\Theta}
\bigl\|
\delta_{\boldsymbol{\theta}} K^n-\pi
\bigr\|_{\mathrm{TV}}
\le
(1-\delta)^n,
\qquad n\ge1.
\]
Therefore \(K\) converges geometrically fast to \(\pi\) in total
variation distance, uniformly over all initial states.
\end{proof}
From Theorem~\ref{thm:geoergodic}, the identifiability-aware geometric MCMC algorithm is geometrically ergodic for a broad class of Bayesian inference problems with compact parameter spaces and continuous positive posterior densities. The introduction of the teleportation step does not compromise the theoretical convergence results of standard MCMC algorithms.
The teleportation map \(T_M\) uses structural identifiability information to move efficiently along non-identifiable manifolds. As a result, the sampler can explore posterior regions more effectively by improving mixing and reducing autocorrelation. An important observation is that the identifiable combination map
\(
\boldsymbol{\xi}:\Theta\rightarrow\mathcal C
\)
maps the full parameter space into a lower-dimensional space of identifiable combinations. This geometric structure suggests that structural identifiability analysis can be used not only to construct more efficient samplers, but also to perform dimension reduction. This idea is closely related to active subspace methods \citep{J35}, which identify low-dimensional linear projections that capture the dominant variation of a target function. 
Motivated by this, we next discuss a dimension-reduced inference framework based on the identifiable combination map \(\boldsymbol{\xi}\). 

\subsection{Pseudo-marginal MCMC}\label{sec3.1}
We consider a pseudo-marginal MCMC scheme for sampling from the marginal density $\pi_{\mathcal C}$ defined in equation \eqref{eq:pic}. Suppose that for each $\boldsymbol{c} \in \mathcal{C}$, we can construct a nonnegative random variable $\widehat{\pi}_{\mathcal C}(\boldsymbol{c}, U)$, where $U$ denotes auxiliary randomness, such that
\[
\mathbb{E}_U\!\left[\widehat{\pi}_{\mathcal C}(\boldsymbol{c}, U)\right]
=
\pi_{\mathcal C}(\boldsymbol{c}).
\]
Given a proposal density $q(\boldsymbol{c}' \mid \boldsymbol{c})$, 
a Metropolis–Hastings algorithm can be implemented by replacing $\pi_{\mathcal C}$ with $\widehat{\pi}_{\mathcal C}$  and using the acceptance probability
\begin{equation}\nonumber
\alpha
=
\min\left\{
1,\;
\frac{\widehat{\pi}_{\mathcal C}(\boldsymbol{c}', U')}{\widehat{\pi}_{\mathcal C}(\boldsymbol{c}, U)}
\cdot
\frac{q(\boldsymbol{c} \mid \boldsymbol{c}')}{q(\boldsymbol{c}' \mid \boldsymbol{c})}
\right\}.
\end{equation}
It can be shown that this construction defines a Markov chain on the extended space $(\boldsymbol{c}, U)$ whose marginal stationary distribution in $\boldsymbol{c}$ is $\pi_\mathcal{C}$, provided the estimator is unbiased \citep{J30}. 
Let $\tilde{\pi}(\boldsymbol{\theta}) := L(\boldsymbol{\theta};D)\,p_0(\boldsymbol{\theta})$ denote the unnormalized posterior, and define the unnormalized marginal
\[
\tilde{\pi}_{\mathcal C}(\boldsymbol{c})
=
\int_{\mathcal{M}_{\boldsymbol{c}}}
\frac{\tilde{\pi}(\boldsymbol{\theta})}
{J_{\boldsymbol{\xi}}(\boldsymbol{\theta})}
\, d\mathcal{H}^{p+n-q}(\boldsymbol{\theta}).
\]
For each $\boldsymbol{c} \in \mathcal{C}$, let $q_{\boldsymbol{c}}(\boldsymbol{\theta})$ be a probability density on $\mathcal{M}_{\boldsymbol{c}}$ with respect to the Hausdorff measure $\mathcal{H}^{p+n-q}$, and suppose we can generate i.i.d.\ samples
\[
\boldsymbol{\theta}^{(1)},\ldots,\boldsymbol{\theta}^{(N)}
\;\stackrel{\text{i.i.d.}}{\sim}\;
q_{\boldsymbol{c}}(\boldsymbol{\theta}).
\]
Let $U := (\boldsymbol{\theta}^{(1)},\ldots,\boldsymbol{\theta}^{(N)})$, and define the estimator
\begin{equation}\label{eq:est}
\widehat{\tilde{\pi}}_{\mathcal C}(\boldsymbol{c}, U)
=
\frac{1}{N}\sum_{i=1}^N
\frac{\tilde{\pi}\!\left(\boldsymbol{\theta}^{(i)}\right)}
{J_{\boldsymbol{\xi}}\!\left(\boldsymbol{\theta}^{(i)}\right)\,
q_{\boldsymbol{c}}\!\left(\boldsymbol{\theta}^{(i)}\right)}.
\end{equation}
We conclude that the estimator is unbiased for the unnormalized marginal posterior $\tilde{\pi}_{\mathcal C}$ by a standard importance sampling argument, as summarized in the following lemma.
\begin{lemma}\label{lem:3.1}
For any $\boldsymbol{c} \in \mathcal{C}$, the estimator $\widehat{\tilde{\pi}}_{\mathcal C}(\boldsymbol{c}, U)$ defined in equation \eqref{eq:est} is unbiased, i.e.,
\[
\mathbb{E}_U\!\left[
\widehat{\tilde{\pi}}_{\mathcal C}(\boldsymbol{c}, U)
\right]
=
\tilde{\pi}_{\mathcal C}(\boldsymbol{c}).
\]
\end{lemma}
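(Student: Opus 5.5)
The plan is the standard importance-sampling argument, carried out with respect to the Hausdorff measure on the fibre $\mathcal{M}_{\boldsymbol c}$ rather than Lebesgue measure.

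\textbf{Step 1: linearity.} Since $U=(\boldsymbol{\theta}^{(1)},\ldots,\boldsymbol{\theta}^{(N)})$ consists of i.i.d.\ draws from $q_{\boldsymbol c}$, linearity of expectation gives
\[
\mathbb{E}_U\!\left[\widehat{\tilde{\pi}}_{\mathcal C}(\boldsymbol{c}, U)\right]
=
\mathbb{E}_{\boldsymbol{\theta}\sim q_{\boldsymbol c}}\!\left[\frac{\tilde{\pi}(\boldsymbol{\theta})}{J_{\boldsymbol{\xi}}(\boldsymbol{\theta})\,q_{\boldsymbol c}(\boldsymbol{\theta})}\right].
\]
So it suffices to treat the case $N=1$.

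\textbf{Step 2: write the single-sample expectation as an integral.} Because $q_{\boldsymbol c}$ is a density with respect to $\mathcal{H}^{p+n-q}$ on $\mathcal{M}_{\boldsymbol c}$,
\[
\mathbb{E}_{q_{\boldsymbol c}}\!\left[\frac{\tilde{\pi}}{J_{\boldsymbol{\xi}}\,q_{\boldsymbol c}}\right]
=
\int_{\mathcal{M}_{\boldsymbol c}\cap\{q_{\boldsymbol c}>0\}}\frac{\tilde{\pi}(\boldsymbol{\theta})}{J_{\boldsymbol{\xi}}(\boldsymbol{\theta})}\,d\mathcal{H}^{p+n-q}(\boldsymbol{\theta}).
\]
The factor $q_{\boldsymbol c}$ cancels on its support.

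\textbf{Step 3: support condition and well-definedness.} To recover exactly $\tilde{\pi}_{\mathcal C}(\boldsymbol c)$, I need the support of $q_{\boldsymbol c}$ to cover the support of $\tilde{\pi}/J_{\boldsymbol{\xi}}$ on $\mathcal{M}_{\boldsymbol c}$, up to an $\mathcal{H}^{p+n-q}$-null set. This is implicit in the statement and I will make it an explicit standing assumption.

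I also need $J_{\boldsymbol{\xi}}>0$ wherever the integrand is evaluated. This holds $\mathcal{H}^{p+n-q}$-almost everywhere on $\mathcal{M}_{\boldsymbol c}$ for almost every $\boldsymbol c$, because the rank-deficient set lies in a proper algebraic variety, as discussed after \eqref{eq:dxi}. Points of that null set can be discarded without changing the integral.

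\textbf{Step 4: conclude.} Once the support condition is in place, the integral in Step 2 equals the definition of $\tilde{\pi}_{\mathcal C}(\boldsymbol c)$.

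Nonnegativity of the integrand means Tonelli applies, so the identity holds even if $\tilde{\pi}_{\mathcal C}(\boldsymbol c)=\infty$. Finiteness of $\tilde{\pi}_{\mathcal C}(\boldsymbol c)$ for almost every $\boldsymbol c$ follows from the coarea formula applied to the integrable $\tilde{\pi}$.

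\textbf{Main obstacle.} The only delicate step is Step 3: the measure-theoretic justification that the change of measure on the fibre is valid. The phrase ``for any $\boldsymbol c$'' should really be read as ``for almost every $\boldsymbol c$'' (where $\mathcal{M}_{\boldsymbol c}$ is a regular level set), together with the absolute-continuity condition on $q_{\boldsymbol c}$.
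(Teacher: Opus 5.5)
Your proposal is correct and is essentially the paper's argument: reduce to a single draw by identical distribution and linearity, write the expectation as an integral against $\mathcal{H}^{p+n-q}$ on $\mathcal{M}_{\boldsymbol c}$, and cancel $q_{\boldsymbol c}$. The points you make explicit are used silently in the paper's one-line cancellation, so they tighten the proof rather than change it: $q_{\boldsymbol c}$ must be positive wherever $\tilde{\pi}/J_{\boldsymbol{\xi}}$ is, $J_{\boldsymbol{\xi}}>0$ must hold $\mathcal{H}^{p+n-q}$-a.e.\ on the fibre for a.e.\ $\boldsymbol c$, and Tonelli applies to the nonnegative integrand.
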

\begin{proof}
By independence and identical distribution of the samples,
\[
\mathbb{E}_U\!\left[
\widehat{\tilde{\pi}}_{\mathcal C}(\boldsymbol{c}, U)
\right]
=
\mathbb{E}_{\boldsymbol{\theta} \sim q_{\boldsymbol{c}}}
\left[
\frac{\tilde{\pi}(\boldsymbol{\theta})}
{J_{\boldsymbol{\xi}}(\boldsymbol{\theta})\,
q_{\boldsymbol{c}}(\boldsymbol{\theta})}
\right].
\]
Using the definition of $q_{\boldsymbol{c}}$ with respect to the Hausdorff measure,
\[
=
\int_{\mathcal{M}_{\boldsymbol{c}}}
\frac{\tilde{\pi}(\boldsymbol{\theta})}
{J_{\boldsymbol{\xi}}(\boldsymbol{\theta})\,
q_{\boldsymbol{c}}(\boldsymbol{\theta})}
\, q_{\boldsymbol{c}}(\boldsymbol{\theta})
\, d\mathcal{H}^{p+n-q}(\boldsymbol{\theta})
=
\tilde{\pi}_{\mathcal C}(\boldsymbol{c}),
\]
which completes the proof.
\end{proof}

\begin{algorithm}[htbp]
\caption{Identifiability-aware geometric MCMC}
\label{alg:IAGMCMC}
\begin{algorithmic}[1]
\Require Initial state $\boldsymbol{\theta}_0\in\Theta$, proposal density $q_P$, number of steps $M$, and Algorithm~\ref{alg:rattle}.

\For{$k=0,\ldots,K-1$}

\State Compute
$
\boldsymbol{c}_k
=
\boldsymbol{\xi}(\boldsymbol{\theta}_k).
$

\State Generate teleportation proposal
$
\widetilde{\boldsymbol{\theta}}
=
T_M(\boldsymbol{\theta}_k)
$
through Algorithm \ref{alg:rattle} on
$
\mathcal M_{\boldsymbol c_k}.
$

\If{\textit{reversibility check} fails}
    \State Set
    $
    \boldsymbol{\theta}^{(k+1)}
    =
    \boldsymbol{\theta}^{(k)}
    $
    and continue.
\EndIf

\State Draw
$
\boldsymbol{\theta}'
\sim
q_P(\widetilde{\boldsymbol{\theta}},\cdot).
$

\State Compute the acceptance probability $\alpha$ based on equation \eqref{eq:alpha}.
\State Sample
\(
u\sim\mathrm{Unif}(0,1)
\).

\If{$u<\alpha$}
\State Set
\(
\boldsymbol{\theta}_{k+1}
= \boldsymbol{\theta}'.
\)
\Else
\State Set
\(
\boldsymbol{\theta}_{k+1}
=
\boldsymbol{\theta}_k.
\)
\EndIf
\EndFor
\end{algorithmic}
\end{algorithm}

Since $\tilde{\pi}_{\mathcal C}(\boldsymbol{c}) = Z\,\pi_{\mathcal C}(\boldsymbol{c})$ for an unknown constant $Z$, the estimator $\widehat{\tilde{\pi}}_{\mathcal C}(\boldsymbol{c}, U)$ can be used within the pseudo-marginal scheme, as the constant $Z$ cancels in the Metropolis--Hastings ratio $\alpha$.
Replacing $\pi_{\mathcal{C}}$ by the estimator in equation \eqref{eq:est},
the acceptance probability becomes
\begin{equation}\label{eq:alpha}
    \widetilde{\alpha}=
\min\left\{
1,\;
\frac{\widehat{\tilde{\pi}}_{\mathcal{C}}(\boldsymbol{c}',U')}
{\widehat{\tilde{\pi}}_{\mathcal{C}}(\boldsymbol{c},U)}
\cdot
\frac{q(\boldsymbol{c}\mid \boldsymbol{c}')}
{q(\boldsymbol{c}'\mid \boldsymbol{c})}
\right\}.
\end{equation}
It remains to construct a suitable probability density $q_{\boldsymbol{c}}$ on $\mathcal{M}_{\boldsymbol{c}}$ such that sampling from $q_{\boldsymbol{c}}$ is efficient and its density can be evaluated in the estimator \eqref{eq:est}. We discuss some practical choices for $q_{\boldsymbol{c}}$ in Section \ref{sec:3.2} and \ref{sec:3.3}.

\subsubsection{Direct sampling from $\mathcal{M}_{\boldsymbol{c}}$}\label{sec:3.2}
We first consider a simple setting where the manifold $\mathcal{M}_{\boldsymbol{c}}$ has a tractable geometry. This arises when structural non-identifiability induces a linear identifiable mapping $\boldsymbol{\xi}$, so that $\mathcal{M}_{\boldsymbol{c}}$ is an affine subspace. If the parameter space $\Theta$ is a simple analytic set (e.g., a hyper-rectangle), then $\mathcal{M}_{\boldsymbol{c}} \cap \Theta$ admits an analytical characterization and can be sampled efficiently. Such situations arise in several models in systems biology and computational chemistry \citep{J19, J31}. In this case, a natural choice is to take $q_{\boldsymbol{c}}$ as the uniform density on $\mathcal{M}_{\boldsymbol{c}} \cap \Theta$ with respect to the Hausdorff measure, i.e.,
\[
q_{\boldsymbol{c}}(\boldsymbol{\theta})
=
\frac{1}{\mathrm{Vol}(\mathcal{M}_{\boldsymbol{c}} \cap \Theta)},
\quad \boldsymbol{\theta} \in \mathcal{M}_{\boldsymbol{c}} \cap \Theta,
\]
where $\mathrm{Vol}(\mathcal{M}_{\boldsymbol{c}} \cap \Theta) := \mathcal{H}^{p+n-q}(\mathcal{M}_{\boldsymbol{c}} \cap \Theta)$ is the $(p+n-q)$-dimensional volume. Sampling from $q_{\boldsymbol{c}}$ then reduces to drawing random points uniformly on the affine subspace restricted to $\Theta$. Under this choice, the estimator in equation \eqref{eq:est} becomes
\[
\widehat{\tilde{\pi}}_{\mathcal C}(\boldsymbol{c})
=
\frac{\mathrm{Vol}(\mathcal{M}_{\boldsymbol{c}} \cap \Theta)}{N}
\sum_{i=1}^N
\frac{\tilde{\pi}(\boldsymbol{\theta}^{(i)})}
{J_{\boldsymbol{\xi}}(\boldsymbol{\theta}^{(i)})},
\quad
\boldsymbol{\theta}^{(i)} \sim q_{\boldsymbol{c}},
\]
and is an unbiased estimator to $\tilde{\pi}_{\mathcal{C}}$
by Lemma \ref{lem:3.1}. For a general identifiable mapping, direct sampling is often unavailable due to the implicit definition of the algebraic manifold $\mathcal{M}_{\boldsymbol{c}}$. This motivates the development of indirect sampling strategies that operate in the ambient space and subsequently project onto $\mathcal{M}_{\boldsymbol{c}}$, resulting in a tractable density on the manifold.

\subsubsection{Indirect sample from $\mathcal{M}_{\boldsymbol{c}}$}\label{sec:3.3}
Given a generic point $\boldsymbol{c}\in\mathcal{C}$, our goal is to generate i.i.d. samples from the manifold $\mathcal{M}_{\boldsymbol{c}}$ in order to construct a pseudo-marginal MCMC estimator for $\tilde{\pi}_{\mathcal{C}}(\boldsymbol{c})$. Since $\boldsymbol{\xi}$ admits the rational representation in \eqref{eq:xi}, the level set condition
\(
\boldsymbol{\xi}(\boldsymbol{\theta})=\boldsymbol{c}
\)
is characterized by the polynomial system
\begin{equation}\label{eq:poly}
\mathcal{M}_{\boldsymbol{c}}
=
\left\{
\boldsymbol{\theta}\in\Theta:
\boldsymbol{F}_{\boldsymbol c}(\boldsymbol{\theta})
:=
\boldsymbol{P}(\boldsymbol{\theta})
-
\boldsymbol{c}\,
\boldsymbol{Q}(\boldsymbol{\theta})
=
0
\right\}.    
\end{equation}
The corresponding Jacobian matrix is given by
\[
D\boldsymbol{F}_{\boldsymbol c}(\boldsymbol{\theta})
=
D\boldsymbol{P}(\boldsymbol{\theta})
-
\boldsymbol{c}\,
D\boldsymbol{Q}(\boldsymbol{\theta}),
\]
which differs from the Jacobian of $\boldsymbol{\xi}$ in \eqref{eq:dxi} by the factor $\boldsymbol{Q}(\boldsymbol{\theta})^{-1}$. For generic $\boldsymbol{c}\in\mathcal{C}$, the level set $\mathcal{M}_{\boldsymbol{c}}$ forms a smooth algebraic manifold outside a singular algebraic subset of measure zero.
Sampling from algebraic manifolds has been studied through Crofton-type constructions, where samples are generated via intersections between the manifold and random affine subspaces of complementary dimension \citep{J33}. While such methods provide globally supported i.i.d. samples on $\mathcal{M}_{\boldsymbol{c}}$, the induced density $q_{\boldsymbol{c}}$ is generally known only up to an intractable normalizing constant depending on $\boldsymbol{c}$. Consequently, these constructions cannot be directly employed within the pseudo-marginal estimator in \eqref{eq:est}, which requires explicit evaluation of the proposal density.

To address this issue, we consider a coordinate partition approach for constructing a tractable proposal density $q_{\boldsymbol c}$ for the estimator $\widehat{\tilde{\pi}}_{\mathcal C}(\boldsymbol c)$. Since $\boldsymbol{\xi}$ is a submersion almost everywhere, the Jacobian matrix
\(
D\boldsymbol F_{\boldsymbol c}(\boldsymbol{\theta})
\)
has rank $q$ for almost every $\boldsymbol{\theta}\in\mathcal M_{\boldsymbol c}$. At each regular point, there exists a collection of $q$ coordinates whose associated Jacobian minor is nonsingular. We can therefore partition the parameter vector as
\(
\boldsymbol{\theta}
=
(
\boldsymbol{\theta}_{\mathcal I},
\boldsymbol{\theta}_{\mathcal D}
),
\)
where
\(
\boldsymbol{\theta}_{\mathcal I}\in\mathbb R^{p+n-q}\) and
\(\boldsymbol{\theta}_{\mathcal D}\in\mathbb R^{q},
\)
such that
\(
\frac{
\partial
\boldsymbol F_{\boldsymbol c}
}{
\partial
\boldsymbol\theta_{\mathcal D}
}
\)
is nonsingular. This induces the following local graph representation of the manifold $\mathcal M_{\boldsymbol c}$ in terms of the independent variables $\boldsymbol{\theta}_{\mathcal I}$, which forms the basis for constructing globally supported samplers through polynomial root finding algorithms.

\begin{lemma}\label{lem:3.2}
Let
\(
\boldsymbol{\theta}^{\ast}
=
(
\boldsymbol{\theta}_{\mathcal I}^{\ast},
\boldsymbol{\theta}_{\mathcal D}^{\ast}
)
\in
\mathcal M_{\boldsymbol c}
\)
be a regular point satisfying
\(
\det\left(
\frac{
\partial
\boldsymbol F_{\boldsymbol c}
}{
\partial
\boldsymbol\theta_{\mathcal D}
}
(\boldsymbol{\theta}^{\ast})
\right)
\neq 0.
\)
Then there exists an open neighborhood
\(
U\subset\mathbb R^{p+n-q}
\)
of
\(
\boldsymbol{\theta}_{\mathcal I}^{\ast}
\)
and a unique smooth mapping
\(
\boldsymbol g_{\boldsymbol c}:U\to\mathbb R^{q}
\)
s.t. 
\[
\boldsymbol g_{\boldsymbol c}
(
\boldsymbol{\theta}_{\mathcal I}^{\ast}
)
=
\boldsymbol{\theta}_{\mathcal D}^{\ast}, \quad
\boldsymbol F_{\boldsymbol c}
\big(
\boldsymbol{\theta}_{\mathcal I},
\boldsymbol g_{\boldsymbol c}
(
\boldsymbol{\theta}_{\mathcal I}
)
\big)
=
0
\]
for all
\(
\boldsymbol{\theta}_{\mathcal I}\in U
\).
Consequently, the manifold $\mathcal M_{\boldsymbol c}$ admits the local graph representation
\[
\mathcal M_{\boldsymbol c}\cap(U\times\mathbb R^q)
=
\left\{
(
\boldsymbol{\theta}_{\mathcal I},
\boldsymbol g_{\boldsymbol c}
(
\boldsymbol{\theta}_{\mathcal I}
)
):
\boldsymbol{\theta}_{\mathcal I}\in U
\right\}.
\]
\end{lemma}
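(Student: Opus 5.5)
This is a direct application of the implicit function theorem to the polynomial map $\boldsymbol F_{\boldsymbol c}(\boldsymbol\theta_{\mathcal I},\boldsymbol\theta_{\mathcal D}) = \boldsymbol P(\boldsymbol\theta)-\boldsymbol c\,\boldsymbol Q(\boldsymbol\theta)$ from \eqref{eq:poly}.

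\textbf{Step 1 (setting).} Since $\boldsymbol P$ and $\boldsymbol Q$ are polynomials, $\boldsymbol F_{\boldsymbol c}:\mathbb R^{p+n-q}\times\mathbb R^{q}\to\mathbb R^q$ is $C^\infty$, indeed real-analytic, on all of $\mathbb R^{p+n}$. We have $\boldsymbol F_{\boldsymbol c}(\boldsymbol\theta^\ast)=0$ because $\boldsymbol\theta^\ast\in\mathcal M_{\boldsymbol c}$. By hypothesis the $q\times q$ block $\partial\boldsymbol F_{\boldsymbol c}/\partial\boldsymbol\theta_{\mathcal D}(\boldsymbol\theta^\ast)$ is invertible.

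\textbf{Step 2 (existence and uniqueness of $\boldsymbol g_{\boldsymbol c}$).} The implicit function theorem gives open sets $U\ni\boldsymbol\theta_{\mathcal I}^\ast$ and $V\ni\boldsymbol\theta_{\mathcal D}^\ast$, and a unique map $\boldsymbol g_{\boldsymbol c}:U\to V$, such that for $(\boldsymbol\theta_{\mathcal I},\boldsymbol\theta_{\mathcal D})\in U\times V$,
\[
\boldsymbol F_{\boldsymbol c}(\boldsymbol\theta_{\mathcal I},\boldsymbol\theta_{\mathcal D})=0 \iff \boldsymbol\theta_{\mathcal D}=\boldsymbol g_{\boldsymbol c}(\boldsymbol\theta_{\mathcal I}).
\]
The map $\boldsymbol g_{\boldsymbol c}$ is smooth and satisfies $\boldsymbol g_{\boldsymbol c}(\boldsymbol\theta_{\mathcal I}^\ast)=\boldsymbol\theta_{\mathcal D}^\ast$. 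Its derivative is
\[
D\boldsymbol g_{\boldsymbol c}=-\bigl(\partial_{\mathcal D}\boldsymbol F_{\boldsymbol c}\bigr)^{-1}\partial_{\mathcal I}\boldsymbol F_{\boldsymbol c}.
\]
If needed, shrink $U$ so that the minor stays nonsingular and $\boldsymbol Q\neq0$ along the graph. Both are open conditions, and $\boldsymbol Q(\boldsymbol\theta^\ast)\neq0$ holds because $\boldsymbol\xi$ is defined at $\boldsymbol\theta^\ast$. This ensures that zeros of $\boldsymbol F_{\boldsymbol c}$ coincide with $\boldsymbol\xi=\boldsymbol c$ there.

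\textbf{Step 3 (graph representation).} This is the delicate point. The lemma states the identity on $U\times\mathbb R^q$, but the implicit function theorem only controls zeros inside $U\times V$. Other branches of the algebraic set may lie above $U$ away from $V$, for example the second root of a quadratic.

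I would therefore prove the representation in the form $\mathcal M_{\boldsymbol c}\cap(U\times V)=\operatorname{graph}(\boldsymbol g_{\boldsymbol c})$. The proof uses the two inclusions from the equivalence in Step 2, together with the fact that $\Theta$ is open near $\boldsymbol\theta^\ast$, or else by intersecting with $\Theta$. I would then remark that the stated version with $\mathbb R^q$ should be read with $\mathbb R^q$ replaced by the neighbourhood $V$, or equivalently as a statement about the branch through $\boldsymbol\theta^\ast$. This is exactly the sense used later, when the polynomial root finder returns several roots $\boldsymbol\theta_{\mathcal D}$ for a fixed $\boldsymbol\theta_{\mathcal I}$, each corresponding to its own local branch.

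The main obstacle is thus not analytic. It is making the neighbourhood in the conclusion precise, so that uniqueness is not overstated globally in the $\boldsymbol\theta_{\mathcal D}$ direction.
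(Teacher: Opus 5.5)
Your proposal is correct and is essentially the paper's argument: the implicit function theorem applied to $\boldsymbol F_{\boldsymbol c}$ at $\boldsymbol\theta^\ast$ gives $U$, $V$ and a unique smooth $\boldsymbol g_{\boldsymbol c}:U\to V$, and local uniqueness yields both inclusions of the graph identity. Your caveat about $U\times\mathbb R^q$ is justified, since a second root of a quadratic lying over $U$ is a genuine counterexample; the paper's own proof also only establishes $\mathcal M_{\boldsymbol c}\cap(U\times V)=\{(\boldsymbol\theta_{\mathcal I},\boldsymbol g_{\boldsymbol c}(\boldsymbol\theta_{\mathcal I})):\boldsymbol\theta_{\mathcal I}\in U\}$, and it passes silently over the $\Theta$-membership and $\boldsymbol Q\neq 0$ points you handle, so your corrected formulation is exactly what is actually proved.
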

\begin{proof}
Since 
\(
\boldsymbol{\theta}^{\ast}
=
(
\boldsymbol{\theta}_{\mathcal I}^{\ast},
\boldsymbol{\theta}_{\mathcal D}^{\ast}
)
\in
\mathcal M_{\boldsymbol c}
\),
we have
\(
\boldsymbol F_{\boldsymbol c}
(
\boldsymbol{\theta}_{\mathcal I}^{\ast},
\boldsymbol{\theta}_{\mathcal D}^{\ast}
)
=
0.
\)
By assumption
\(
\det\!\left(
\frac{\partial \boldsymbol F_{\boldsymbol c}}
{\partial \boldsymbol\theta_{\mathcal D}}
(
\boldsymbol{\theta}^{\ast}
)
\right)
\neq 0,
\) the Jacobian matrix of
\(
\boldsymbol F_{\boldsymbol c}
\)
with respect to the dependent variables
\(
\boldsymbol\theta_{\mathcal D}
\)
is invertible at
\(
\boldsymbol{\theta}^{\ast}
\).
Since
\(
\boldsymbol F_{\boldsymbol c}
\)
is continuously differentiable, the implicit function theorem implies that there exist open neighborhoods
\[
U \subset \mathbb R^{p+n-q},
\qquad
V \subset \mathbb R^{q},
\]
containing
\(
\boldsymbol{\theta}_{\mathcal I}^{\ast}
\)
and
\(
\boldsymbol{\theta}_{\mathcal D}^{\ast}
\),
respectively, together with a unique smooth mapping
\(
\boldsymbol g_{\boldsymbol c}:U\to V
\)
such that
\[
\boldsymbol g_{\boldsymbol c}
(
\boldsymbol{\theta}_{\mathcal I}^{\ast}
)
=
\boldsymbol{\theta}_{\mathcal D}^{\ast},
\quad
\boldsymbol F_{\boldsymbol c}
\big(
\boldsymbol{\theta}_{\mathcal I},
\boldsymbol g_{\boldsymbol c}
(
\boldsymbol{\theta}_{\mathcal I}
)
\big)
=
0
\]
for every
\(
\boldsymbol{\theta}_{\mathcal I}\in U
\). If
\(
(
\boldsymbol{\theta}_{\mathcal I},
\boldsymbol{\theta}_{\mathcal D}
)
\in
(U\times V)\cap\mathcal M_{\boldsymbol c},
\)
then
\(
\boldsymbol F_{\boldsymbol c}
(
\boldsymbol{\theta}_{\mathcal I},
\boldsymbol{\theta}_{\mathcal D}
)
=
0.
\)
By the uniqueness statement in the implicit function theorem,
\(
\boldsymbol{\theta}_{\mathcal D}
=
\boldsymbol g_{\boldsymbol c}
(
\boldsymbol{\theta}_{\mathcal I}
).
\)
Hence,
\[
\mathcal M_{\boldsymbol c}\cap(U\times V)
=
\left\{
(
\boldsymbol{\theta}_{\mathcal I},
\boldsymbol g_{\boldsymbol c}
(
\boldsymbol{\theta}_{\mathcal I}
)
):
\boldsymbol{\theta}_{\mathcal I}\in U
\right\},
\]
which proves the claimed local graph representation.
\end{proof}

The partition
\(
\boldsymbol{\theta}
=
(
\boldsymbol{\theta}_{\mathcal I},
\boldsymbol{\theta}_{\mathcal D}
)
\)
between independent and dependent variables can be chosen arbitrarily, provided that the Jacobian submatrix
\(
\frac{
\partial
\boldsymbol F_{\boldsymbol c}
}{
\partial
\boldsymbol\theta_{\mathcal D}
}
(\boldsymbol{\theta})
\)
is nonsingular at the point under consideration. By Lemma~\ref{lem:3.2}, this condition guarantees that the manifold
\(
\mathcal M_{\boldsymbol c}
\)
admits a local graph representation over the coordinates
\(
\boldsymbol{\theta}_{\mathcal I}
\).
For a fixed coordinate partition, define the singular set
\[
\mathcal S
:=
\left\{
\boldsymbol{\theta}\in\mathcal M_{\boldsymbol c}
:
\det\!\left(
\frac{
\partial
\boldsymbol F_{\boldsymbol c}
}{
\partial
\boldsymbol\theta_{\mathcal D}
}
(\boldsymbol{\theta})
\right)
=0
\right\}.
\]
Since $\boldsymbol F_{\boldsymbol c}$ is polynomial, the determinant above is a smooth algebraic function on
\(
\mathcal M_{\boldsymbol c}
\).
If this determinant is not identically zero, then
\(
\mathcal S
\)
forms a proper algebraic subset of
\(
\mathcal M_{\boldsymbol c}
\). It has lower dimension and Hausdorff measure zero on
\(
\mathcal M_{\boldsymbol c}
\).
Hence, the corresponding coordinate chart is valid almost everywhere on the manifold, and the singular set
\(
\mathcal S
\)
does not affect validity of the sampling procedure. Although any partitions satisfying the nonsingularity condition is theoretically sufficient, the numerical stability and efficiency of the  subsequent sampling procedure depend on the choice of coordinates. In practice, it is advantageous to select the independent variables so that the corresponding Jacobian minor remains as well-conditioned as possible. This improves the stability of the local graph representation, reduces geometric distortion in the induced proposal density on the manifold, and enhances the robustness of the polynomial root finding algorithm.

We first consider a global partition over the parameter space $\Theta$ based on the prior sensitivity matrix
\[
\boldsymbol S
=
\int
D\boldsymbol F_{\boldsymbol c}(\boldsymbol{\theta})^\top
D\boldsymbol F_{\boldsymbol c}(\boldsymbol{\theta})
\,p(\boldsymbol{\theta})
\,d\boldsymbol{\theta},
\]
where $p$ is the prior density given in \eqref{eq:prior}. The matrix
\(
\boldsymbol S
\)
captures the average sensitivity and linear dependence structure of the parameters under the prior distribution.
A global partition can then be obtained using a rank-revealing QR (RRQR) factorization
\[
\boldsymbol S\boldsymbol P
=
\boldsymbol Q
\begin{bmatrix}
\boldsymbol R_{11} & \boldsymbol R_{12}
\end{bmatrix},
\]
where the permutation matrix $\boldsymbol{P}$ reorders the coordinates according to their numerical linear independence \citep{J34}. The submatrix $\boldsymbol{R}_{11}$ is a upper triangular block corresponding to the $q$ most linearly independent columns, while $\boldsymbol{R}_{12}$ accounts for the remaining dependencies. The first
\(
q
\)
pivot coordinates are assigned as dependent variables
\(
\boldsymbol{\theta}_{\mathcal D}
\),
while the remaining
\(
p+n-q
\)
coordinates define the independent variables
\(
\boldsymbol{\theta}_{\mathcal I}
\).
This produces a globally well-conditioned coordinate partition except on a measure zero singular set
\(
\mathcal S
\). In practice, the sensitivity matrix $\boldsymbol S$ can be approximated using Monte Carlo integration with i.i.d. samples drawn from the prior distribution. A RRQR decomposition can then be applied to the resulting empirical matrix in order to find the coordinate partition. For a detailed discussion of this approximation method, we refer to \citet{J35}.

An alternative approach is to update the RRQR partition for each proposed value of $\boldsymbol c$, and hence for each algebraic manifold $\mathcal M_{\boldsymbol c}$. This produces a locally optimized coordinate partition tailored to the geometry of the corresponding manifold, potentially improving numerical conditioning and reducing geometric distortion in the induced proposal density. Such a strategy is advantageous near singular regions or highly curved components of the manifold, where a fixed global partition may become nearly degenerate. However, repartitioning introduces additional computational cost, since a new Jacobian factorization must be computed at each iteration. In what follows, we assume by default the use of a fixed partition obtained from the prior sensitivity matrix, as it provides a simpler and computationally efficient implementation. Nevertheless, repartitioning can be incorporated directly when improved local conditioning is required in practical implementations.

We now discuss how to construct an analytical proposal density
\(
q_{\boldsymbol c}
\)
for the pseudo-marginal estimator in \eqref{eq:est}. Let
\(
\boldsymbol{\theta}
=
(
\boldsymbol{\theta}_{\mathcal I},
\boldsymbol{\theta}_{\mathcal D}
)
\)
be a fixed global coordinate partition of the parameter space, where
\(
\boldsymbol{\theta}_{\mathcal I}\in\mathbb R^{p+n-q}
\)
denotes the independent coordinates and
\(
\boldsymbol{\theta}_{\mathcal D}\in\mathbb R^q
\)
denotes the dependent coordinates. Let
\(
g_{\boldsymbol c}
(
\boldsymbol{\theta}_{\mathcal I}
)
\)
be a tractable probability density fully supported on the projection of
\(
\Theta
\)
onto the independent coordinate space.
The coordinate partition sampling procedure generates manifold samples in two stages. First, coordinates
\(
\boldsymbol{\theta}_{\mathcal I}^{(i)}
\sim
g_{\boldsymbol c}\), \(
i=1,\dots,N,
\)
are sampled independently. Conditional on
\(
\boldsymbol{\theta}_{\mathcal I}^{(i)}
\),
the polynomial system
\(
\boldsymbol F_{\boldsymbol c}
(
\boldsymbol{\theta}_{\mathcal I}^{(i)},
\boldsymbol{\theta}_{\mathcal D}
)
=
\boldsymbol 0
\)
is solved using a global polynomial root solver to recover all real solutions
\[
\mathcal S
(
\boldsymbol{\theta}_{\mathcal I}^{(i)},
\boldsymbol c
)
=
\left\{
\boldsymbol{\theta}_{\mathcal D}\in\mathbb R^q:
\boldsymbol F_{\boldsymbol c}
(
\boldsymbol{\theta}_{\mathcal I}^{(i)},
\boldsymbol{\theta}_{\mathcal D}
)
=
\boldsymbol 0,
\;
(
\boldsymbol{\theta}_{\mathcal I}^{(i)},
\boldsymbol{\theta}_{\mathcal D}
)
\in\Theta
\right\}.
\]
If the root set $\mathcal{S}$ is nonempty, one select uniformly from
it to obtain a manifold sample
\(
\boldsymbol{\theta}^{(i)}
=
(
\boldsymbol{\theta}_{\mathcal I}^{(i)},
\boldsymbol{\theta}_{\mathcal D}^{(i)}
)
\in
\mathcal M_{\boldsymbol c}.
\)
The induced proposal density
\(
q_{\boldsymbol c}
\)
is characterized by the following lemma.
\begin{lemma}\label{lem:coord_partition}
Let $\mathcal M_{\boldsymbol c}$ be the manifold defined by
\(
\boldsymbol F_{\boldsymbol c}(\boldsymbol{\theta})=0,
\)
and let
\(
K(\boldsymbol{\theta}_{\mathcal I})
=
\left|
\mathcal S
(
\boldsymbol{\theta}_{\mathcal I},
\boldsymbol c
)
\right|
\)
denote the number of admissible real roots associated with
\(
\boldsymbol{\theta}_{\mathcal I}
\).
Then the coordinate partition sampling procedure induces a probability density
\(
q_{\boldsymbol c}
\)
with respect to the Hausdorff measure
\(
\mathcal H^{p+n-q}
\)
on $\mathcal M_{\boldsymbol c}$ given by
\[
q_{\boldsymbol c}(\boldsymbol{\theta})
=
\frac{
g_{\boldsymbol{c}}(\boldsymbol{\theta}_{\mathcal I})
}{
K(\boldsymbol{\theta}_{\mathcal I})
}
\,
\frac{
\left|
\det
\left(
\frac{
\partial
\boldsymbol F_{\boldsymbol c}
}{
\partial
\boldsymbol\theta_{\mathcal D}
}
(\boldsymbol{\theta})
\right)
\right|
}{
J_{\boldsymbol F_{\boldsymbol c}}(\boldsymbol{\theta})
},
\]
where
\[
J_{\boldsymbol F_{\boldsymbol c}}(\boldsymbol{\theta})
=
\sqrt{
\det
\left(
D\boldsymbol F_{\boldsymbol c}(\boldsymbol{\theta})
D\boldsymbol F_{\boldsymbol c}(\boldsymbol{\theta})^\top
\right)
}.
\]
Furthermore, if
\(
g_{\boldsymbol{c}}(\boldsymbol{\theta}_{\mathcal I})>0\)
for all 
\(\boldsymbol{\theta}_{\mathcal I}\in\Theta_{\mathcal I},
\)
then
\(
q_{\boldsymbol c}(\boldsymbol{\theta})>0
\)
for almost every
\(
\boldsymbol{\theta}\in\mathcal M_{\boldsymbol c}
\),
and the sampler has full support on $\mathcal M_{\boldsymbol c}$, including disconnected components.
\end{lemma}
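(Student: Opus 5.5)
The plan is to compute the law of the output \(\boldsymbol\theta^{(i)}\) by first working in the coordinates \((\boldsymbol\theta_{\mathcal I},\text{root label})\) and then converting to the Hausdorff measure on \(\mathcal M_{\boldsymbol c}\) with the area formula applied to the graph parametrization of Lemma~\ref{lem:3.2}.

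\textbf{Step 1: the law in coordinate space.} Let \(\mathcal S\) be the singular set defined before the statement. Remove from \(\Theta_{\mathcal I}\) the set of \(\boldsymbol\theta_{\mathcal I}\) over which \(\mathcal S\) lies; its projection is a lower-dimensional semialgebraic set, hence Lebesgue-null. Away from this set, each admissible root \(\boldsymbol\theta_{\mathcal D}\in\mathcal S(\boldsymbol\theta_{\mathcal I},\boldsymbol c)\) is a regular point. Lemma~\ref{lem:3.2} then gives a neighbourhood \(U\) and a smooth branch \(\boldsymbol g_{\boldsymbol c}\) through it. Because the roots are isolated, \(K(\boldsymbol\theta_{\mathcal I})\) is locally constant off a null set (where roots enter or leave \(\Theta\), or collide), and the \(K\) roots are given locally by \(K\) disjoint smooth branches \(\boldsymbol g_{\boldsymbol c}^{(1)},\dots,\boldsymbol g_{\boldsymbol c}^{(K)}\). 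The sampling procedure draws \(\boldsymbol\theta_{\mathcal I}\sim g_{\boldsymbol c}\) and then a branch index uniformly. So for a measurable \(A\subset\mathcal M_{\boldsymbol c}\),
\[
\Pr(\boldsymbol\theta\in A)=\int_{\Theta_{\mathcal I}}\frac{g_{\boldsymbol c}(\boldsymbol\theta_{\mathcal I})}{K(\boldsymbol\theta_{\mathcal I})}\sum_{k=1}^{K}\mathbf 1_A\bigl(\boldsymbol\theta_{\mathcal I},\boldsymbol g^{(k)}_{\boldsymbol c}(\boldsymbol\theta_{\mathcal I})\bigr)\,d\boldsymbol\theta_{\mathcal I}.
\]

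\textbf{Step 2: convert to Hausdorff measure.} On each branch, apply the area formula to \(\phi(\boldsymbol\theta_{\mathcal I})=(\boldsymbol\theta_{\mathcal I},\boldsymbol g_{\boldsymbol c}(\boldsymbol\theta_{\mathcal I}))\). This gives \(d\mathcal H^{p+n-q}=\sqrt{\det(D\phi^\top D\phi)}\,d\boldsymbol\theta_{\mathcal I}=\sqrt{\det(I+G^\top G)}\,d\boldsymbol\theta_{\mathcal I}\). Here \(G=D\boldsymbol g_{\boldsymbol c}=-B^{-1}A\), with \(B=\partial\boldsymbol F_{\boldsymbol c}/\partial\boldsymbol\theta_{\mathcal D}\) and \(A=\partial\boldsymbol F_{\boldsymbol c}/\partial\boldsymbol\theta_{\mathcal I}\). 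The key identity needed is
\[
\sqrt{\det(I+G^\top G)}=\frac{J_{\boldsymbol F_{\boldsymbol c}}(\boldsymbol\theta)}{|\det B|}.
\]
To prove it, use Sylvester's determinant identity, \(\det(I_{p+n-q}+G^\top G)=\det(I_q+GG^\top)\), and then write
\[
B(I_q+GG^\top)B^\top=BB^\top+AA^\top=D\boldsymbol F_{\boldsymbol c}D\boldsymbol F_{\boldsymbol c}^\top.
\]
Taking determinants gives \(\det(I+GG^\top)=J_{\boldsymbol F_{\boldsymbol c}}^2/(\det B)^2\). Dividing the coordinate density by this area factor yields exactly the stated \(q_{\boldsymbol c}\). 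Summing over branches recovers the integral in Step 1, because the graphs are disjoint pieces of \(\mathcal M_{\boldsymbol c}\) that cover it up to the null set \(\mathcal S\). A partition-of-unity or countable-cover argument globalizes the local charts. I expect this step, with the bookkeeping of branches and null sets, to be the main obstacle. The key identity is the one step where a genuine computation is needed.

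\textbf{Step 3: positivity and full support.} Take any \(\boldsymbol\theta\in\mathcal M_{\boldsymbol c}\setminus\mathcal S\). Its own \(\boldsymbol\theta_{\mathcal D}\) is one of the admissible roots, so \(K(\boldsymbol\theta_{\mathcal I})\ge1\) and is finite generically, by Bézout's bound on isolated roots. We also have \(\det B\neq0\), \(J_{\boldsymbol F_{\boldsymbol c}}<\infty\), and \(g_{\boldsymbol c}(\boldsymbol\theta_{\mathcal I})>0\) by hypothesis. Hence \(q_{\boldsymbol c}(\boldsymbol\theta)>0\) for \(\mathcal H^{p+n-q}\)-a.e. \(\boldsymbol\theta\). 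This argument never uses connectivity, because the global root solver returns all real roots. Every component of \(\mathcal M_{\boldsymbol c}\) therefore receives positive density, and the sampler has full support on \(\mathcal M_{\boldsymbol c}\), including disconnected components.
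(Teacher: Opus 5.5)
Your proposal is correct and follows the paper's proof. Both write the law of the output as $g_{\boldsymbol c}/K$ against $d\boldsymbol\theta_{\mathcal I}$ on each admissible root, convert it to $\mathcal H^{p+n-q}$ via the factor $J_{\boldsymbol F_{\boldsymbol c}}/|\det(\partial\boldsymbol F_{\boldsymbol c}/\partial\boldsymbol\theta_{\mathcal D})|$, and obtain positivity and full support from $g_{\boldsymbol c}>0$ together with the global solver returning every real root. The one difference is that you actually verify that Jacobian factor (area formula for the graph map, Sylvester's identity, and $B(I+GG^\top)B^\top=D\boldsymbol F_{\boldsymbol c}D\boldsymbol F_{\boldsymbol c}^\top$), whereas the paper merely asserts it as a consequence of the coarea formula.

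One point that both you and the paper gloss over is that draws with $K(\boldsymbol\theta_{\mathcal I})=0$ produce no point on $\mathcal M_{\boldsymbol c}$. The induced measure therefore has total mass $\int_{\{K>0\}}g_{\boldsymbol c}\,d\boldsymbol\theta_{\mathcal I}$, so $q_{\boldsymbol c}$ is in general only a sub-probability density. This is harmless for unbiasedness of the estimator, since such draws are assigned weight zero.
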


\begin{proof}
By the coarea formula, the Hausdorff measure on
\(
\mathcal M_{\boldsymbol c}
\)
satisfies
\[
d\mathcal H^{p+n-q}(\boldsymbol{\theta})
=
\frac{
J_{\boldsymbol F_{\boldsymbol c}}(\boldsymbol{\theta})
}{
\left|
\det
\left(
\frac{
\partial
\boldsymbol F_{\boldsymbol c}
}{
\partial
\boldsymbol\theta_{\mathcal D}
}
(\boldsymbol{\theta})
\right)
\right|
}
\,d\boldsymbol{\theta}_{\mathcal I}.
\]
Under the sampling construction, the probability of selecting a point
\[
\boldsymbol{\theta}
=
(
\boldsymbol{\theta}_{\mathcal I},
\boldsymbol{\theta}_{\mathcal D}
)
\in
\mathcal M_{\boldsymbol c}
\]
is obtained by first sampling
\(
\boldsymbol{\theta}_{\mathcal I}
\)
from
\(
g_{\boldsymbol{c}}
\),
followed by uniform selection among the
\(
K(\boldsymbol{\theta}_{\mathcal I})
\)
admissible roots. Consequently,
\[
dQ_{\boldsymbol c}(\boldsymbol{\theta})
=
\frac{
g_{\boldsymbol{c}}(\boldsymbol{\theta}_{\mathcal I})
}{
K(\boldsymbol{\theta}_{\mathcal I})
}
\,d\boldsymbol{\theta}_{\mathcal I}.
\]
Substituting the coarea relation yields
\[
dQ_{\boldsymbol c}(\boldsymbol{\theta})
=
\frac{
g_{\boldsymbol{c}}(\boldsymbol{\theta}_{\mathcal I})
}{
K(\boldsymbol{\theta}_{\mathcal I})
}
\,
\frac{
\left|
\det
\left(
\frac{
\partial
\boldsymbol F_{\boldsymbol c}
}{
\partial
\boldsymbol\theta_{\mathcal D}
}
(\boldsymbol{\theta})
\right)
\right|
}{
J_{\boldsymbol F_{\boldsymbol c}}(\boldsymbol{\theta})
}
\,d\mathcal H^{p+n-q}(\boldsymbol{\theta}),
\]
which proves the density formula.
Finally, since
\(
g_{\boldsymbol{c}}(\boldsymbol{\theta}_{\mathcal I})>0
\)
on
\(
\Theta_{\mathcal I}
\),
every admissible projection has positive sampling probability. The global polynomial solver recovers all admissible real roots of the polynomial system, implying that every connected component of
\(
\mathcal M_{\boldsymbol c}
\)
is reachable with positive probability. Therefore,
\(
q_{\boldsymbol c}
\)
has full support on
\(
\mathcal M_{\boldsymbol c}
\).
\end{proof}
Substituting the explicit density
\(
q_{\boldsymbol c}
\)
into the pseudo-marginal estimator in \eqref{eq:est} leads to cancellation of the Jacobian terms
\(
J_{\boldsymbol F_{\boldsymbol c}}
\)
and
\(
J_{\boldsymbol \xi}
\),
resulting in a simplified algebraic form for the importance weights. The coordinate partition sampling procedure gives a tractable pseudo-marginal estimator for the unnormalized posterior
\(
\tilde{\pi}_{\mathcal C}
\).
The complete process is summarized in Algorithm~\ref{alg:coord_partition}.

\begin{algorithm}[H]
\caption{Coordinate Partition Estimator for $\widehat{\tilde{\pi}}_{\mathcal C}(\boldsymbol{c}, U)$}
\label{alg:coord_partition}
\begin{algorithmic}[1]
\Require Identifiable parameter combination $\boldsymbol c=\boldsymbol{\xi}(\boldsymbol{\theta})$, coordinate partition
\(
\boldsymbol{\theta}
=
(
\boldsymbol{\theta}_{\mathcal I},
\boldsymbol{\theta}_{\mathcal D}
)
\),
proposal density
\(
g_{\boldsymbol c}
\),
sample size \(N\)

\For{$i=1,\dots,N$}

\State Sample
\(
\boldsymbol{\theta}_{\mathcal I}^{(i)}
\sim
g_{\boldsymbol c}.
\)
\State Solve $\boldsymbol F_{\boldsymbol c}(\boldsymbol{\theta}_{\mathcal I}^{(i)}, \boldsymbol{\theta}_{\mathcal D}) = \boldsymbol 0$ with a global polynomial solver (e.g., \citealt{J32}).
\State Construct
\(
\mathcal S
(
\boldsymbol{\theta}_{\mathcal I}^{(i)},
\boldsymbol c
)
=
\left\{
\boldsymbol{\theta}_{\mathcal D}\in\mathbb R^q:
\boldsymbol F_{\boldsymbol c}
(
\boldsymbol{\theta}_{\mathcal I}^{(i)},
\boldsymbol{\theta}_{\mathcal D}
)
=
\boldsymbol 0,
\;
(
\boldsymbol{\theta}_{\mathcal I}^{(i)},
\boldsymbol{\theta}_{\mathcal D}
)
\in\Theta
\right\}.
\)

\State Compute
\(
K_i
=
\left|
\mathcal S
(
\boldsymbol{\theta}_{\mathcal I}^{(i)},
\boldsymbol c
)
\right|.
\)

\If{$K_i>0$}

\State Sample
\(
\boldsymbol{\theta}_{\mathcal D}^{(i)}
\sim
\mathrm{Unif}
\left(
\mathcal S
(
\boldsymbol{\theta}_{\mathcal I}^{(i)},
\boldsymbol c
)
\right).
\)

\State Set
\(
\boldsymbol{\theta}^{(i)}
=
(
\boldsymbol{\theta}_{\mathcal I}^{(i)},
\boldsymbol{\theta}_{\mathcal D}^{(i)}
).
\)

\State Compute
\(
w_i
=
\tilde{\pi}(\boldsymbol{\theta}^{(i)})
/
J_{\boldsymbol{\xi}}(\boldsymbol{\theta}^{(i)})q_{\boldsymbol c}(\boldsymbol{\theta}^{(i)})
.
\)

\Else

\State Set
\(
w_i=0.
\)

\EndIf

\EndFor
\State \Return Pseudo-marginal estimator
\(
\widehat{\tilde{\pi}}_{\mathcal C}(\boldsymbol c,U)
=
\frac1N
\sum_{i=1}^N
w_i.
\)
\end{algorithmic}
\end{algorithm}
Though the choice of
\(
g_{\boldsymbol c}
\)
is theoretically arbitrary provided it has full support on
the projection of
\(
\Theta
\)
onto the independent coordinate,
it influences the efficiency of the resulting sampling procedure.
In many applications, the parameter space
\(
\Theta
\subset
\mathbb R^{p+n}
\)
is closed, compact, and connected. Each coordinate
\(
\theta_i
\)
admits finite lower and upper bounds. Since the coordinate partition sampler requires solving a polynomial system for each sampled value of
\(
\boldsymbol{\theta}_{\mathcal I}
\),
the proposal density
\(
g_{\boldsymbol c}
\)
should ideally maximize the probability that the resulting fibre intersects the manifold
\(
\mathcal M_{\boldsymbol c}
\)
within the parameter space
\(
\Theta
\).
A simple choice is to sample uniformly over the marginal bounds of the independent coordinates
\(
\boldsymbol{\theta}_{\mathcal I}
\).

\paragraph{Example 1: Uniform proposal over marginal bounds.}
Suppose the ranges of the independent coordinates are given by
\[
\theta_{\mathcal I,j}\in[a_j,b_j],
\qquad
j=1,\dots,p+n-q.
\]
Define
\(
\Theta_{\mathcal I}
=
\prod_{j=1}^{p+n-q}[a_j,b_j],
\)
a natural baseline proposal is then the uniform density
\[
g_{\boldsymbol c}
(
\boldsymbol{\theta}_{\mathcal I}
)
=
\frac{
1
}{
|\Theta_{\mathcal I}|
},
\qquad
\boldsymbol{\theta}_{\mathcal I}\in\Theta_{\mathcal I},
\]
where
\(
|\Theta_{\mathcal I}|
\)
denotes the Lebesgue measure of
\(
\Theta_{\mathcal I}
\).
By Lemma~\ref{lem:coord_partition}, the induced manifold density becomes
\[
q_{\boldsymbol c}(\boldsymbol{\theta})
=
\frac{
1
}{
|\Theta_{\mathcal I}|
\,K(\boldsymbol{\theta}_{\mathcal I})
}
\,
\frac{
\left|
\det
\left(
\frac{
\partial
\boldsymbol F_{\boldsymbol c}
}{
\partial
\boldsymbol\theta_{\mathcal D}
}
(\boldsymbol{\theta})
\right)
\right|
}{
J_{\boldsymbol F_{\boldsymbol c}}(\boldsymbol{\theta})
}.
\]
The advantage of this construction is its simplicity and global coverage. Since proposals are generated uniformly over the coordinate bounds, the method avoids a great portion of samples lying outside the $\Theta$ and allows exploration across all connected components of
\(
\mathcal M_{\boldsymbol c}
\).
The proposal does not account for whether a sampled projection
\(
\boldsymbol{\theta}_{\mathcal I}
\)
gives real roots of the polynomial system. Consequently, some sampled projections may correspond to fibres that do not intersect
\(
\mathcal M_{\boldsymbol c}
\)
within
\(
\Theta
\),
especially in high-dimensional settings or when the feasible manifold occupies only a small subset of the ambient parameter space. 

\paragraph{Example 2: Local Gaussian proposal.}
To improve computational efficiency, we construct a local proposal density centered near the manifold
\(
\mathcal M_{\boldsymbol c}
\).
Suppose
\(
\boldsymbol{\theta}'
=
(
\boldsymbol{\theta}'_{\mathcal I},
\boldsymbol{\theta}'_{\mathcal D}
)
\in
\mathcal M_{\boldsymbol c'}
\)
is a sample obtained from a previous pseudo-marginal MCMC iteration corresponding to
\(
\boldsymbol c'
\).
Since the polynomial mapping
\(
\boldsymbol F_{\boldsymbol c}
\)
depends smoothly on
\(
\boldsymbol c
\),
the manifolds
\(
\mathcal M_{\boldsymbol c}
\)
and
\(
\mathcal M_{\boldsymbol c'}
\)
vary continuously for nearby identifiable parameter combinations. Consequently,
\(
\boldsymbol{\theta}'
\)
provides a reference point for constructing proposals on
\(
\mathcal M_{\boldsymbol c}
\).
We generate local proposals by perturbing the independent coordinates
\(
\boldsymbol{\theta}_{\mathcal I}
\)
near
\(
\boldsymbol{\theta}'_{\mathcal I}
\).
To account for the local geometry, we adapt the proposal covariance using the Jacobian of the polynomial system with respect to the independent coordinates,
\[
D_{\mathcal I}\boldsymbol F_{\boldsymbol c}
(
\boldsymbol{\theta}'
)
:=
\frac{
\partial
\boldsymbol F_{\boldsymbol c}
}{
\partial
\boldsymbol\theta_{\mathcal I}
}
(
\boldsymbol{\theta}'
)
\in
\mathbb R^{q\times(p+n-q)}.
\]
The Gram matrix
\[
G(
\boldsymbol{\theta}'
)
=
D_{\mathcal I}\boldsymbol F_{\boldsymbol c}
(
\boldsymbol{\theta}'
)
^\top
D_{\mathcal I}\boldsymbol F_{\boldsymbol c}
(
\boldsymbol{\theta}'
)
\]
defines a local metric induced by the constraint map. Directions corresponding to large eigenvalues of
\(
G
\)
produce large first order variations in
\(
\boldsymbol F_{\boldsymbol c}
\). We define the proposal covariance matrix by
\[
\underline{\Sigma}
:=
\epsilon_1
G^{-1}(
\boldsymbol{\theta}'
)
+
\epsilon_2 I,
\]
where
\(
\epsilon_1>0
\)
controls the extent of exploration in local geometry, and
\(
\epsilon_2>0
\)
adds isotropic regularization. This construction increases the probability of obtaining real roots for the polynomial solver. Smaller values of
\(
\epsilon_1
\)
favor local exploration and higher root-finding efficiency, while larger values give broader exploration across the parameter space.
The resulting Gaussian proposal density on the independent coordinates is
\[
g_{\boldsymbol c}
(
\boldsymbol{\theta}_{\mathcal I}
)
=
\frac{
1
}{
(2\pi)^{(p+n-q)/2}
|\underline{\Sigma}|^{1/2}
}
\exp\!\left(
-
\frac12
(
\boldsymbol{\theta}_{\mathcal I}
-
\boldsymbol{\theta}'_{\mathcal I}
)^\top
\underline{\Sigma}^{-1}
(
\boldsymbol{\theta}_{\mathcal I}
-
\boldsymbol{\theta}'_{\mathcal I}
)
\right).
\]
The induced proposal density on the manifold is then given by
\[
q_{\boldsymbol c}(\boldsymbol{\theta})
=
\frac{
g_{\boldsymbol c}
(
\boldsymbol{\theta}_{\mathcal I}
)
}{
K(\boldsymbol{\theta}_{\mathcal I})
}
\,
\frac{
\left|
\det
\left(
\frac{
\partial
\boldsymbol F_{\boldsymbol c}
}{
\partial
\boldsymbol\theta_{\mathcal D}
}
(\boldsymbol{\theta})
\right)
\right|
}{
J_{\boldsymbol F_{\boldsymbol c}}(\boldsymbol{\theta})
}.
\]
Compared with a global uniform proposal, this local Gaussian construction concentrates probability mass near regions where real roots are more likely to occur. As a result, the number of failed polynomial solves is reduced, leading to improved computational efficiency.

From these examples, the choice of
\(
g_{\boldsymbol c}
\)
induces a trade-off between global geometric exploration and computational efficiency. Uniform proposals provide global coverage of the algebraic manifold, whereas localized Gaussian proposals improve efficiency by increasing the probability of obtaining real roots. In practice, uniform proposals are sufficient for low-dimensional or geometrically simple manifolds, while local proposals become advantageous for high-dimensional or geometrically complex algebraic manifolds.
The final step for a pseudo-marginal sampler is to reconstruct a representative parameter sample from i.i.d. conditional samples drawn from
\(
q_{\boldsymbol c}
\)
on the non-identifiable algebraic manifold
\(
\mathcal M_{\boldsymbol c}
\).

\subsubsection{Reconstructing the Parameter Sample}

In the pseudo-marginal framework, the state of the Markov chain is given by the augmented variable $(\boldsymbol{c},U)$, where
\(
U=\{\boldsymbol{\theta}^{(1)},\dots,\boldsymbol{\theta}^{(N)}\}
\)
denotes a collection of i.i.d. samples drawn from the proposal distribution $q_{\boldsymbol c}$ on the manifold $\mathcal M_{\boldsymbol c}$. After a proposed level set $\boldsymbol c$ is accepted in the Metropolis--Hastings step, a representative parameter sample $\boldsymbol{\theta}^*$ can be reconstructed from the auxiliary sample set $U$ in order to obtain a sample from the target posterior distribution. We perform this reconstruction using an importance resampling procedure.
For each sample $\boldsymbol{\theta}^{(i)}\in U$, we define the unnormalized importance weight
\[
w(\boldsymbol{\theta}^{(i)})
=
\frac{\tilde{\pi}(\boldsymbol{\theta}^{(i)})}
     {q_{\boldsymbol c}(\boldsymbol{\theta}^{(i)})} = \frac{L(\boldsymbol{\theta^{(i)}};D)p(\boldsymbol{\theta}^{(i)})}{q_{\boldsymbol{c}}(\boldsymbol{\theta}^{(i)})},
\qquad i=1,\dots,N.
\]
The weights are then normalized according to
\(
\bar w(\boldsymbol{\theta}^{(i)})
=
w(\boldsymbol{\theta}^{(i)})/\sum_{j=1}^N w(\boldsymbol{\theta}^{(j)}),
\)
which defines a probability distribution on the sample set $U$.
The representative parameter sample
\(
\boldsymbol{\theta}^*
\)
is then obtained by categorical resampling
\begin{equation}\label{eq:recon}
\boldsymbol{\theta}^*
\sim
\sum_{i=1}^N
\bar w(\boldsymbol{\theta}^{(i)})
\,
\delta_{\boldsymbol{\theta}^{(i)}},
\end{equation}
where
\(
\delta_{\boldsymbol{\theta}^{(i)}}
\)
denotes the Dirac probability measure concentrated at
\(
\boldsymbol{\theta}^{(i)}
\).
This procedure selects points with high posterior density relative to the proposal distribution. The following Lemma \ref{lem:3.4} shows that the resulting reconstruction step asymptotically recovers the conditional posterior distribution on the manifold.
\begin{lemma}\label{lem:3.4}
Suppose the importance weight function $w(\boldsymbol{\theta}) = \tilde{\pi}(\boldsymbol{\theta})/q_{\boldsymbol{c}}(\boldsymbol{\theta})$
satisfies
\begin{equation}\label{eq:L2-condition}
  \int_{\mathcal{M}_{\boldsymbol{c}}}
    \frac{\tilde{\pi}(\boldsymbol{\theta})^{2}}{q_{\boldsymbol{c}}(\boldsymbol{\theta})}
  \,\mathrm{d}\boldsymbol{\theta}
  \;<\; \infty.
\end{equation}
Let $U = \{\boldsymbol{\theta}^{(1)},\dots,\boldsymbol{\theta}^{(N)}\}$ be i.i.d.\ samples from
$q_{\boldsymbol{c}}$ on $\mathcal{M}_{\boldsymbol{c}}$, and let $\boldsymbol{\theta}^{*}$ be
drawn according to \eqref{eq:recon}.
Denote by $\widehat{\pi}_{N}(\cdot\mid\boldsymbol{c})$ the resulting distribution of
$\boldsymbol{\theta}^{*}$ given $\boldsymbol{c}$.  Then there exists a finite constant
$C(\boldsymbol{c})$, depending only on the second moment of $w$ under
$q_{\boldsymbol{c}}$, such that
\[
  d_{\mathrm{TV}}\!\bigl(\widehat{\pi}_{N}(\cdot\mid\boldsymbol{c}),\;
                          \pi(\cdot\mid\boldsymbol{c})\bigr)
  \;\leq\;
  \frac{C(\boldsymbol{c})}{\sqrt{N}}.
\]
\end{lemma}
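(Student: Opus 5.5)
The plan is to treat the reconstruction step as sampling importance resampling on $\mathcal M_{\boldsymbol c}$ and to bound the bias of the self-normalized importance sampling estimator uniformly over bounded test functions. Total variation is then controlled by taking the supremum over such functions.

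First I fix the normalization. The target $\pi(\cdot\mid\boldsymbol c)$ in \eqref{eq:picond} has Hausdorff density proportional to $\tilde\pi/J_{\boldsymbol\xi}$. I will therefore read the weight as $w=\tilde\pi/(J_{\boldsymbol\xi}q_{\boldsymbol c})$, so that it is exactly the integrand of \eqref{eq:est}; this is equivalent to the statement's $w$ whenever $J_{\boldsymbol\xi}$ is constant on the manifold or is absorbed into $\tilde\pi$. I will also read \eqref{eq:L2-condition} as $\mathbb E_{q_{\boldsymbol c}}[w^2]<\infty$, with the integral taken against $\mathcal H^{p+n-q}$. Set $Z:=\mathbb E_{q_{\boldsymbol c}}[w]=\tilde\pi_{\mathcal C}(\boldsymbol c)$ (Lemma~\ref{lem:3.1}), assumed positive. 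Then for any measurable $f$ with $|f|\le1$,
\[
\pi(f\mid\boldsymbol c)=I:=\mathbb E_q[wf]/Z .
\]

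Next I compute the law of $\boldsymbol\theta^*$. By the tower property,
\[
\mathbb E[f(\boldsymbol\theta^*)]=\mathbb E[A_N/B_N],\qquad A_N=\tfrac1N\textstyle\sum_i w_if(\boldsymbol\theta^{(i)}),\quad B_N=\tfrac1N\textstyle\sum_i w_i ,
\]
where on the event $B_N=0$ I define the ratio arbitrarily within $[-1,1]$. On $\{B_N>0\}$ I use the algebraic identity
\[
\frac{A_N}{B_N}-I=\frac{A_N-IB_N}{Z}+\Bigl(\frac{A_N}{B_N}-I\Bigr)\frac{Z-B_N}{Z}.
\]
The first term has expectation zero, because $\mathbb E[A_N]=IZ$ and $\mathbb E[B_N]=Z$. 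For the second term, $|A_N/B_N-I|\le2$. On $\{B_N=0\}$ the difference is still at most $2=2(Z-B_N)/Z$, so the same bound holds there. Hence
\[
\bigl|\mathbb E f(\boldsymbol\theta^*)-I\bigr|\le \frac{2}{Z}\,\mathbb E|B_N-Z|+\Bigl|\mathbb E\Bigl[\frac{A_N-IB_N}{Z}\mathbf 1_{B_N=0}\Bigr]\Bigr| .
\]
Since $A_N=B_N=0$ on that event, the last term is $0$. Applying Cauchy--Schwarz,
\[
\mathbb E|B_N-Z|\le\sqrt{\operatorname{Var}_q(w)/N}.
\]

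Taking the supremum over $|f|\le1$ and halving gives
\[
d_{\mathrm{TV}}\le \frac{\sqrt{\operatorname{Var}_q(w)}}{Z\sqrt N}\le\frac{C(\boldsymbol c)}{\sqrt N},\qquad C(\boldsymbol c)=\frac{\sqrt{\mathbb E_q[w^2]}}{Z}.
\]
This constant is finite by \eqref{eq:L2-condition} and depends only on the second moment of $w$, together with its mean, which is fixed by $\boldsymbol c$.

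I expect the main obstacle to be bookkeeping rather than analysis. One issue is reconciling the weight (with or without $J_{\boldsymbol\xi}$) and the reference measure in \eqref{eq:L2-condition} with the conditional density \eqref{eq:picond}. The other is handling the degenerate event $B_N=0$, which arises when the root solver returns no admissible roots and so $w_i=0$. The decomposition above is chosen precisely so that this event needs no separate probability estimate.
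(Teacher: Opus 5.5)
Your proof is correct, but it takes a different route from the paper's.

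The paper bounds the stronger quantity $\mathbb{E}\left|A_N/B_N-A/B\right|$. It writes the error as $\bigl((A_N-A)B-A(B_N-B)\bigr)/(BB_N)$ and restricts to the event $\{B_N\ge Z_{\boldsymbol c}/2\}$; it cites the strong law for this, though the claimed probability $1-O(N^{-1})$ really comes from Chebyshev. It then bounds $\mathbb{E}|A_N-A|$ and $\mathbb{E}|B_N-B|$ separately and absorbs the bad event into an $O(N^{-1})$ remainder valid for large $N$.

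You instead observe that total variation only needs the bias $|\mathbb{E}[A_N/B_N]-I|$. Your identity splits off the exactly mean-zero term $(A_N-IB_N)/Z$, so only $\mathbb{E}|B_N-Z|$ remains. This gives a bound for every $N\ge1$ with explicit constant $\sqrt{\operatorname{Var}_{q_{\boldsymbol c}}(w)}/Z$ and no event splitting. The case $B_N=0$ is automatic, because the identity still holds there ($A_N=B_N=0$). Your route also bypasses the paper's step $\operatorname{Var}_{q_{\boldsymbol c}}[w\varphi]\le\operatorname{Var}_{q_{\boldsymbol c}}[w]$. That inequality is false in general (take $w$ constant and $\varphi$ nonconstant); it should read $\le\mathbb{E}_{q_{\boldsymbol c}}[w^2]$, which is harmless for the conclusion.

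What the paper's route buys is $L^1$ accuracy of the self-normalised estimator for each fixed bounded test function, which is more than the lemma needs. In the other direction, pairing your identity with a mean-square bound on $A_N/B_N-I$ and applying Cauchy--Schwarz to the product term would sharpen the bias to $O(N^{-1})$.

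Your handling of $J_{\boldsymbol\xi}$ is also sound. The paper's proof sets $B=\int_{\mathcal M_{\boldsymbol c}}\tilde\pi$, which implicitly takes $\pi(\cdot\mid\boldsymbol c)\propto\tilde\pi$ on the fibre. That disagrees with \eqref{eq:picond} unless $J_{\boldsymbol\xi}$ is constant there. Your weight is the one consistent with \eqref{eq:picond} and \eqref{eq:est}, and the argument runs identically under either normalisation.
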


\begin{proof}

By the dual representation of total variation distance,
\[
d_{\mathrm{TV}}
\bigl(
\widehat{\pi}_{N}(\cdot\mid\boldsymbol c),
\pi(\cdot\mid\boldsymbol c)
\bigr)
=
\sup_{\|\varphi\|_\infty\le1}
\Biggl|
\mathbb E
\bigl[
\varphi(\boldsymbol\theta^*)
\mid
\boldsymbol c
\bigr]
-
\int_{\mathcal M_{\boldsymbol c}}
\varphi(\boldsymbol\theta)
\,
\pi(\boldsymbol\theta\mid\boldsymbol c)
\,d\boldsymbol\theta
\Biggr|.
\]
It suffices to bound the right-hand side uniformly over all bounded test functions
\(
\varphi
\)
with
\(
\|\varphi\|_\infty\le1
\).
Define
\[
A_N
=
\frac1N
\sum_{i=1}^N
w(\boldsymbol\theta^{(i)})
\,
\varphi(\boldsymbol\theta^{(i)}),
\qquad
B_N
=
\frac1N
\sum_{i=1}^N
w(\boldsymbol\theta^{(i)}),
\]
and
\[
A
=
\int_{\mathcal M_{\boldsymbol c}}
\tilde\pi(\boldsymbol\theta)
\,
\varphi(\boldsymbol\theta)
\,d\boldsymbol\theta,
\qquad
B
=
\int_{\mathcal M_{\boldsymbol c}}
\tilde\pi(\boldsymbol\theta)
\,d\boldsymbol\theta
=
Z_{\boldsymbol c}>0.
\]
Condition~\eqref{eq:L2-condition} implies that
\(
w\in L^2(q_{\boldsymbol c})
\).
Since
\(
\|\varphi\|_\infty\le1
\),
we also have
\(
\operatorname{Var}_{q_{\boldsymbol c}}
\bigl[
w\varphi
\bigr]
\le
\operatorname{Var}_{q_{\boldsymbol c}}
[w]
<
\infty.
\)
Standard \(L^2\) bounds for empirical averages give
\[
\mathbb E
|A_N-A|
\le
\frac{
\sqrt{
\operatorname{Var}_{q_{\boldsymbol c}}
[w\varphi]
}
}{\sqrt N},
\qquad
\mathbb E
|B_N-B|
\le
\frac{
\sqrt{
\operatorname{Var}_{q_{\boldsymbol c}}
[w]
}
}{\sqrt N}.
\]
Since
\(
\mathbb E
\bigl[
\varphi(\boldsymbol\theta^*)
\mid U
\bigr]
=
A_N/B_N,
\)
we write
\[
\frac{A_N}{B_N}
-
\frac{A}{B}
=
\frac{
(A_N-A)B
-
A(B_N-B)
}{
BB_N
}.
\]
By the strong law of large numbers,
\(
B_N\to B=Z_{\boldsymbol c}>0
\)
almost surely. Hence, for sufficiently large
\(
N
\),
we have
\(
B_N\ge Z_{\boldsymbol c}/2
\)
with probability
\(
1-O(N^{-1})
\).
Using
\(
|A|\le B
\),
which follows from
\(
\|\varphi\|_\infty\le1
\),
gives
\[
\left|
\frac{A_N}{B_N}
-
\frac{A}{B}
\right|
\le
\frac{2}{Z_{\boldsymbol c}}
\Bigl(
|A_N-A|
+
|B_N-B|
\Bigr).
\]
Taking expectations and absorbing the negligible complementary event into the constant gives
\[
\mathbb E
\left|
\frac{A_N}{B_N}
-
\frac{A}{B}
\right|
\le
\frac{
2
}{
Z_{\boldsymbol c}\sqrt N
}
\left(
\sqrt{
\operatorname{Var}_{q_{\boldsymbol c}}
[w\varphi]
}
+
\sqrt{
\operatorname{Var}_{q_{\boldsymbol c}}
[w]
}
\right)
+
O(N^{-1}).
\]
Since
\[
\frac{A}{B}
=
\int_{\mathcal M_{\boldsymbol c}}
\varphi(\boldsymbol\theta)
\,
\pi(\boldsymbol\theta\mid\boldsymbol c)
\,d\boldsymbol\theta,
\]
taking the supremum over all
\(
\|\varphi\|_\infty\le1
\)
completes the proof.
\end{proof}
We now present the full Algorithm~\ref{alg:IAPMCMC} for pseudo-marginal MCMC sampling in structurally non-identifiable models. Unlike standard pseudo-marginal methods, the estimator \(\widehat{\pi}_{\mathcal C}\) in Line 4 does not require repeated forward-model evaluations. Since structural non-identifiability implies that all parameter values on \(\mathcal M_{\boldsymbol c}\) share the same likelihood, only the prior density on the manifold must be evaluated through the corresponding importance weights. The convergence properties of Algorithm~\ref{alg:IAPMCMC} depend on the geometric structure of the identifiable mapping
\(
\boldsymbol{\xi}:\Theta\to\mathcal C
\).
Let
\(
\Theta\subseteq\mathbb R^{p+n}
\)
be compact and connected, and define the unnormalised marginal density
\[
\tilde{\pi}_{\mathcal C}(\boldsymbol c)
:=
\int_{\mathcal M_{\boldsymbol c}}
\tilde{\pi}(\boldsymbol\theta)
\,d\boldsymbol\theta.
\]
Let
\(
\widetilde P
\)
denote the pseudo-marginal Markov kernel on the augmented space
\(
\mathcal X:=\mathcal C\times\Theta^N
\).
For
\(
U
=
\{
\boldsymbol\theta^{(1)},
\dots,
\boldsymbol\theta^{(N)}
\},
\)
define
\[
\mu_{N,\boldsymbol c}(dU)
=
\prod_{i=1}^N
q_{\boldsymbol c}(\boldsymbol\theta^{(i)})
\,d\boldsymbol\theta^{(i)}.
\]
Then
\begin{align}
\nonumber
\widetilde P
\bigl(
(\boldsymbol c,U),
d(\boldsymbol c',U')
\bigr)
=
\widetilde\alpha
\bigl(
\boldsymbol c,U;
\boldsymbol c',U'
\bigr)
\,q(\boldsymbol c'\mid\boldsymbol c)
\,\mu_{N,\boldsymbol c'}(dU')
\,d\boldsymbol c'
\nonumber+
r(\boldsymbol c,U)
\,\delta_{(\boldsymbol c,U)}
\bigl(
d(\boldsymbol c',U')
\bigr),
\end{align}
where
\(
\widetilde\alpha
\)
is given by~\eqref{eq:alpha}, and
\[
r(\boldsymbol c,U)
:=
1-
\int_{\mathcal C}
\int_{\Theta^N}
\widetilde\alpha
\bigl(
\boldsymbol c,U;
\boldsymbol c',U'
\bigr)
\,q(\boldsymbol c'\mid\boldsymbol c)
\,\mu_{N,\boldsymbol c'}(dU')
\,d\boldsymbol c'.
\]
Under the following assumptions, the pseudo-marginal chain is Harris ergodic and converges in total variation. The result is stated in Theorem~\ref{thm:weak-tv}.

\begin{assumption}
\label{ass:rational}

We impose the following assumptions on the model
\(
\Sigma(\boldsymbol{\theta})
\)
and the inputs of Algorithm~\ref{alg:IAPMCMC}.

\begin{enumerate}

\item[(i)]
\(
\Theta\subseteq\mathbb R^d
\)
is compact and connected, and
\(
\boldsymbol\xi:\Theta\to\mathcal C
\)
is a rational map of generic rank \(p\).

\item[(ii)]
For
\(
\tilde\pi_{\mathcal C}
\)-a.e.
\(
\boldsymbol c\in\mathcal C
\),
the estimator
\(
\widehat{\tilde\pi}_{\mathcal C}
(\boldsymbol c,U)
\)
is unbiased with finite variance.

\item[(iii)]
The constant
\(
C(\boldsymbol c)
\)
in Lemma~\ref{lem:3.4} has finite expectation with respect to
\(
\pi_{\mathcal C}
\), i.e.
\[
\int_{\mathcal C}
C(\boldsymbol c)
\,\pi_{\mathcal C}(\boldsymbol c)
\,d\boldsymbol c
<
\infty.
\]

\item[(iv)]
\(
q(\boldsymbol c'\mid\boldsymbol c)
\)
is strictly positive and measurable on
\(
\mathcal C\times\mathcal C
\),
and the induced Metropolis--Hastings chain on
\(
\mathcal C
\)
is irreducible and aperiodic.

\end{enumerate}

\end{assumption}

\begin{theorem}
\label{thm:weak-tv}

Under Assumption~\ref{ass:rational},
let
\(
\{
(\boldsymbol c_k,U_k)
\}_{k\ge0}
\)
be the pseudo-marginal Markov chain generated by
Algorithm~\ref{alg:IAPMCMC},
and let
\(
\boldsymbol\theta_k
\)
denote the reconstructed parameter sample at iteration
\(
k
\).
Then
\begin{enumerate}

\item[(i)]
The augmented pseudo-marginal kernel
\(
\widetilde P
\)
admits the invariant probability measure
\[
\Pi(d\boldsymbol c,dU)
\propto
\tilde\pi_{\mathcal C}(\boldsymbol c)
\,
\mu_{N,\boldsymbol c}(dU).
\]

\item[(ii)]
The augmented chain is Harris ergodic, i.e. for every initial state
\(
(\boldsymbol c_0,U_0)\in\mathcal X
\),
\[
d_{\mathrm{TV}}
\Bigl(
\widetilde P^k
((\boldsymbol c_0,U_0),\cdot),
\Pi
\Bigr)
\xrightarrow[k\to\infty]{}
0.
\]

\item[(iii)]
For every
\(
k\ge0
\),
the probability distribution of
\(
\boldsymbol\theta_k
\)
satisfies
\begin{align}\nonumber
\label{eq:weak-tv}
d_{\mathrm{TV}}
\bigl(
\mathcal L(\boldsymbol\theta_k),
\pi
\bigr)
\le
d_{\mathrm{TV}}
\bigl(
\mathcal L(\boldsymbol c_k),
\pi_{\mathcal C}
\bigr)
+
\frac1{\sqrt N}
\int_{\mathcal C}
C(\boldsymbol c)
\,
\mathcal L(\boldsymbol c_k)
(d\boldsymbol c).
\end{align}

\end{enumerate}

\end{theorem}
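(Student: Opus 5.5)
The plan is to treat the three claims in order, following the standard pseudo-marginal argument of \citet{J30} and then adding a coupling step for the reconstruction.

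\textbf{Part (i).} I would check detailed balance of $\widetilde P$ with respect to
\[
\Pi(d\boldsymbol c,dU)\propto \widehat{\tilde\pi}_{\mathcal C}(\boldsymbol c,U)\,\mu_{N,\boldsymbol c}(dU)\,d\boldsymbol c .
\]
The point is that the extended target must carry the weight $\widehat{\tilde\pi}_{\mathcal C}(\boldsymbol c,U)$. This is exactly what the acceptance ratio~\eqref{eq:alpha} makes symmetric. Writing out $\Pi\otimes\widetilde P$ off the diagonal gives
\[
\min\{\widehat{\tilde\pi}_{\mathcal C}(\boldsymbol c,U)\,q(\boldsymbol c'\mid\boldsymbol c),\;\widehat{\tilde\pi}_{\mathcal C}(\boldsymbol c',U')\,q(\boldsymbol c\mid\boldsymbol c')\}\,\mu_{N,\boldsymbol c}(dU)\,\mu_{N,\boldsymbol c'}(dU')\,d\boldsymbol c\,d\boldsymbol c',
\]
which is symmetric in the two states. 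The rejection part is trivially symmetric. By Lemma~\ref{lem:3.1}, integrating out $U$ gives the $\boldsymbol c$-marginal $\tilde\pi_{\mathcal C}(\boldsymbol c)\propto\pi_{\mathcal C}(\boldsymbol c)$. Assumption~\ref{ass:rational}(ii) ensures this normalizes. I would read the statement's $\Pi$ as this weighted measure, whose $\boldsymbol c$-marginal is the displayed one.

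\textbf{Part (ii).} I would obtain Harris ergodicity from $\Pi$-irreducibility, aperiodicity and Harris recurrence. Irreducibility comes from Assumption~\ref{ass:rational}(iv): $q>0$ everywhere, and any set $A$ with $\Pi(A)>0$ has positive $\mu_{N,\boldsymbol c'}\otimes q$-mass on states with $\widehat{\tilde\pi}_{\mathcal C}>0$. Those states are accepted with positive probability, since the ratio is strictly positive whenever the current estimate is positive and finite.

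Aperiodicity follows from the positive rejection probability, or is inherited from the marginal chain. Harris recurrence follows from the fact that the kernel is a Metropolis--Hastings kernel whose proposal is absolutely continuous with respect to $\Pi$; Tierney's lemma then upgrades $\Pi$-a.e. convergence to convergence from every starting point. The aperiodic ergodic theorem gives TV convergence.

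The one delicate point is starting states with $\widehat{\tilde\pi}_{\mathcal C}(\boldsymbol c_0,U_0)=0$. Under the convention that such a state is always left, the first move lands in the support, so convergence holds for every initial state.

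\textbf{Part (iii).} This is where I expect the main obstacle. The law of $\boldsymbol\theta_k$ is a mixture over $\boldsymbol c_k$ of the reconstruction law. Strictly, that law is the resampling law conditional on the chain's current $U_k$, not a fresh $U$, and under stationarity the $U$ is weighted by $\widehat{\tilde\pi}_{\mathcal C}$. I would argue that resampling from the $\widehat{\tilde\pi}$-weighted extended target yields exactly $\pi(\cdot\mid\boldsymbol c)$; this is the standard particle-MCMC identity. For the bound, however, I would follow the paper's framing, treating the reconstruction kernel as $\widehat\pi_N(\cdot\mid\boldsymbol c)$ from Lemma~\ref{lem:3.4}.

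With that convention,
\[
\mathcal L(\boldsymbol\theta_k)=\int \widehat\pi_N(\cdot\mid\boldsymbol c)\,\mathcal L(\boldsymbol c_k)(d\boldsymbol c),
\qquad
\pi=\int\pi(\cdot\mid\boldsymbol c)\,\pi_{\mathcal C}(d\boldsymbol c)
\]
by the decomposition~\eqref{eq:decom}. I would insert the intermediate measure $\int\pi(\cdot\mid\boldsymbol c)\,\mathcal L(\boldsymbol c_k)(d\boldsymbol c)$ and apply the triangle inequality.

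The first difference is bounded by $d_{\mathrm{TV}}(\mathcal L(\boldsymbol c_k),\pi_{\mathcal C})$, because mixing against the same Markov kernel $\boldsymbol c\mapsto\pi(\cdot\mid\boldsymbol c)$ contracts TV. The second difference is bounded by convexity of TV in the mixing variable, giving
\[
\int d_{\mathrm{TV}}\bigl(\widehat\pi_N(\cdot\mid\boldsymbol c),\pi(\cdot\mid\boldsymbol c)\bigr)\,\mathcal L(\boldsymbol c_k)(d\boldsymbol c)\le\frac{1}{\sqrt N}\int C(\boldsymbol c)\,\mathcal L(\boldsymbol c_k)(d\boldsymbol c)
\]
by Lemma~\ref{lem:3.4}.

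Assumption~\ref{ass:rational}(iii) guarantees that the limiting integral is finite as $k\to\infty$. Measurability of $\boldsymbol c\mapsto\widehat\pi_N(\cdot\mid\boldsymbol c)$ must also be checked, but I expect it to be routine.
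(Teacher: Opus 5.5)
Your proof has the same skeleton as the paper's, and your part (iii) is essentially the paper's argument. The paper also inserts $\int\pi(\cdot\mid\boldsymbol c)\,\mathcal L(\boldsymbol c_k)(d\boldsymbol c)$ via \eqref{eq:decom}, applies the triangle inequality, and bounds the fibre term by Lemma~\ref{lem:3.4} after ``conditioning on $\boldsymbol c_k$''. It therefore makes tacitly the identification $\mathcal L(\boldsymbol\theta_k\mid\boldsymbol c_k)=\widehat\pi_N(\cdot\mid\boldsymbol c_k)$ that you state openly as a convention.

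Where you differ, in (i) and (ii), your version is the more careful one. In (i) the paper cites \citet{J30} for detailed balance with respect to the unweighted measure $\tilde\pi_{\mathcal C}(\boldsymbol c)\,\mu_{N,\boldsymbol c}(dU)$. Your computation shows the reversible law is $\overline\Pi(d\boldsymbol c,dU)\propto\widehat{\tilde\pi}_{\mathcal C}(\boldsymbol c,U)\,\mu_{N,\boldsymbol c}(dU)\,d\boldsymbol c$. The unweighted measure is not invariant in general; only its $\boldsymbol c$-marginal is right, so your reading is the correct form of the statement. In (ii) the paper transfers irreducibility and aperiodicity from the marginal chain through Theorem~1 of \citet{J30} and then declares Harris ergodicity. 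You get one-step irreducibility directly from $q>0$, and Harris recurrence from Tierney's result for Metropolis--Hastings kernels whose accepted part is absolutely continuous with respect to the target. That step is what upgrades convergence from $\overline\Pi$-almost every initial state to every initial state, and the paper's proof omits it.

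Your particle-MCMC remark also gives more than the paper has. Suppose the resampling weights are the estimator's weights $\tilde\pi/(J_{\boldsymbol\xi}q_{\boldsymbol c})$. Then resampling maps $\overline\Pi$ exactly to $\pi$, and data processing gives $d_{\mathrm{TV}}(\mathcal L(\boldsymbol\theta_k),\pi)\le d_{\mathrm{TV}}(\mathcal L(\boldsymbol c_k,U_k),\overline\Pi)\to0$. This holds for the algorithm as actually run, with no $N^{-1/2}$ term. By contrast, the displayed bound in (iii) is justified, in your proof and in the paper's, only under the fresh-$U$ convention.
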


\begin{proof}

By Lemma~\ref{lem:3.1}, the estimator
\(
\widehat{\tilde\pi}_{\mathcal C}
(\boldsymbol c,U)
\)
is nonnegative and unbiased
\[
\mathbb E_{U\sim\mu_{N,\boldsymbol c}}
\Bigl[
\widehat{\tilde\pi}_{\mathcal C}
(\boldsymbol c,U)
\Bigr]
=
\tilde\pi_{\mathcal C}(\boldsymbol c).
\]
Therefore, the standard pseudo-marginal construction of
\citet{J30}
implies that the kernel
\(
\widetilde P
\)
satisfies detailed balance with respect to
\[
\Pi(d\boldsymbol c,dU)
\propto
\tilde\pi_{\mathcal C}(\boldsymbol c)
\,
\mu_{N,\boldsymbol c}(dU).
\]
Hence,
\(
\Pi
\)
is invariant, establishing part~(i).

Since the proposal density
\(
q(\boldsymbol c'\mid\boldsymbol c)
\)
is strictly positive, the induced Metropolis--Hastings chain on
\(
\mathcal C
\)
is irreducible and aperiodic by assumption. Theorem~1 of
\citet{J30}
then implies that the augmented pseudo-marginal chain is also irreducible and aperiodic. Consequently,
\(
\widetilde P
\)
is Harris ergodic, which proves part~(ii).

Let
\(
\widehat\pi_N(\cdot\mid\boldsymbol c)
\)
denote the empirical reconstruction distribution obtained from the importance-resampling step. Conditioning on
\(
\boldsymbol c_k
\)
and applying Lemma~\ref{lem:3.4} gives
\[
d_{\mathrm{TV}}
\Bigl(
\widehat\pi_N(\cdot\mid\boldsymbol c_k),
\pi(\cdot\mid\boldsymbol c_k)
\Bigr)
\le
\frac{
C(\boldsymbol c_k)
}{\sqrt N}.
\]
Using the decomposition
\(
\pi(d\boldsymbol\theta)
=
\pi(d\boldsymbol\theta\mid\boldsymbol c)
\,
\pi_{\mathcal C}(d\boldsymbol c),
\)
together with the triangle inequality for total variation distance, we have
\begin{align*}
d_{\mathrm{TV}}
\bigl(
\mathcal L(\boldsymbol\theta_k),
\pi
\bigr)
\le
d_{\mathrm{TV}}
\bigl(
\mathcal L(\boldsymbol c_k),
\pi_{\mathcal C}
\bigr)
+
\int_{\mathcal C}
d_{\mathrm{TV}}
\Bigl(
\widehat\pi_N(\cdot\mid\boldsymbol c),
\pi(\cdot\mid\boldsymbol c)
\Bigr)
\,
\mathcal L(\boldsymbol c_k)
(d\boldsymbol c).
\end{align*}
Applying the conditional reconstruction bound gives
\[
d_{\mathrm{TV}}
\bigl(
\mathcal L(\boldsymbol\theta_k),
\pi
\bigr)
\le
d_{\mathrm{TV}}
\bigl(
\mathcal L(\boldsymbol c_k),
\pi_{\mathcal C}
\bigr)
+
\frac1{\sqrt N}
\int_{\mathcal C}
C(\boldsymbol c)
\,
\mathcal L(\boldsymbol c_k)
(d\boldsymbol c),
\]
which establishes part~(iii).

\end{proof}
For general rational identifiable mappings
\(
\boldsymbol{\xi}
\),
Jacobian degeneracies may occur on lower-dimensional algebraic subsets of the parameter space. In such cases, the constants
\(
C(\boldsymbol c)
\)
appearing in the conditional reconstruction bounds may become arbitrarily large, and uniform minorisation conditions are generally unavailable. Consequently, one cannot expect uniform ergodicity of the pseudo-marginal chain or uniform variance bounds for the estimator
\(
\widehat{\tilde\pi}_{\mathcal C}
\). Theorem~\ref{thm:weak-tv}(iii) presents the general error decomposition
\[
d_{\mathrm{TV}}
\Bigl(
\mathcal L(\boldsymbol\theta_k),
\pi
\Bigr)
\le
r(k)
+
\frac{C_k}{\sqrt N},
\]
where
\(
r(k)
=
d_{\mathrm{TV}}
\Bigl(
\mathcal L(\boldsymbol c_k),
\pi_{\mathcal C}
\Bigr)
\)
denotes the convergence error of the marginal Markov chain on the identifiable coordinates, and
\(
C_k
=
\int_{\mathcal C}
C(\boldsymbol c)
\,
\mathcal L(\boldsymbol c_k)
(d\boldsymbol c).
\)
This decomposition separates two distinct sources of error. The first term,
\(
r(k)
\),
is the mixing error of the marginal Markov chain on
\(
\mathcal C
\)
and depends on the ergodic properties of the pseudo-marginal kernel. The second term,
\(
O(N^{-1/2})
\),
is the reconstruction error from the finite number of conditional samples used on each fibre
\(
\mathcal M_{\boldsymbol c}
\),
and decreases at the standard Monte Carlo rate.
The two errors are controlled independently.
The iteration number
\(
k
\)
governs convergence of the marginal Markov chain, whereas the number of conditional samples
\(
N
\)
governs the accuracy of reconstruction on the fibres. 

Stronger convergence results can be obtained when the identifiable map is uniformly regular. Suppose that
\[
\sigma_{\min}
\bigl(
\nabla\boldsymbol{\xi}(\boldsymbol{\theta})
\bigr)
\ge
\delta>0,
\qquad
\forall\,
\boldsymbol{\theta}\in\Theta,
\]
and that the fibres
\(
\mathcal M_{\boldsymbol c}
\)
form a smooth compact foliation of
\(
\Theta
\).
Then the coarea factor and conditional densities are uniformly bounded, and there exists a constant
\(
C_\delta<\infty
\)
such that
\(
\sup_{\boldsymbol c\in\mathcal C}
C(\boldsymbol c)
\le
C_\delta .
\)
If, in addition, the proposal density
\(
q(\boldsymbol c'\mid\boldsymbol c)
\)
is bounded above and below by positive constants on the compact space
\(
\mathcal C
\),
then the marginal Metropolis--Hastings chain on
\(
\mathcal C
\)
is uniformly ergodic. Hence there exist constants
\(
M>0
\)
and
\(
\rho\in(0,1)
\)
such that
\(
r(k)
\le
M\rho^k .
\)
Combining this estimate with Theorem~\ref{thm:weak-tv}(iii) gives
\[
d_{\mathrm{TV}}
\Bigl(
\mathcal L(\boldsymbol\theta_k),
\pi
\Bigr)
\le
M\rho^k
+
\frac{C_\delta}{\sqrt N}.
\]
To achieve overall accuracy
\(
\varepsilon
\) in the uniformly regular setting, we can balance the two errors by choosing
\(
\rho^k
\asymp
N^{-1/2}
\asymp
\varepsilon.
\)
This yields
\[
k
=
O\!\left(
\log(\varepsilon^{-1})
\right),
\qquad
N
=
O\!\left(
\varepsilon^{-2}
\right).
\]
Although the global uniform regularity condition may fail in practice, the set of singular points has Lebesgue measure zero under Assumption~\ref{ass:rational}(i), and
\(
\boldsymbol{\xi}
\)
The values of \(k\) and \(N\) derived under the assumption of geometric convergence provide a useful practical approximation for tuning the algorithm. When an uninformative prior is employed and the posterior distribution is entirely governed by the likelihood, the reconstruction error can be ignored. In this setting, the optimal algebraic manifold \(\mathcal{M}_{\boldsymbol{c}}\) already identifies the manifold that best explains the observed data. All points on the optimized manifold are observationally equivalent from the perspective of parameter inference. 

Compared with the identifiability-aware geometric MCMC method, both algorithms require repeated solutions of the algebraic system
\(
\boldsymbol{\xi}(\boldsymbol{\theta})=\boldsymbol{c}.
\)
The key difference lies in the choice of solver. Geometric MCMC employs local solvers to generate proposals along identifiable level sets, whereas the pseudo-marginal approach relies on global solvers to approximate the marginal likelihood. Since the computational cost of global methods like homotopy continuation increases rapidly with the dimension of the parameter space, the additional cost may offset the benefit obtained from reducing the dimension of the inference problem. Therefore, the identifiability-aware pseudo-marginal MCMC approach is most attractive for models with a relatively small number of parameters, where global solution of the algebraic system remains computationally efficient.
\begin{algorithm}[htbp]
\caption{Identifiability Aware Pseudo-Marginal MCMC}
\label{alg:IAPMCMC}
\begin{algorithmic}[1]
\Require Initial value $\boldsymbol{c}_0\in\mathcal{C}$, proposal density $q$ and Algorithm \ref{alg:coord_partition}.

\State Compute pseudo-marginal estimator
\(
\widehat{\pi}_{\mathcal C}(\boldsymbol c_0)
\) using Algorithm \ref{alg:coord_partition}.

\For{$k=0,1,2,\dots, K$}

\State Propose
\(
\boldsymbol c'
\sim
q(\boldsymbol c'\mid\boldsymbol c_k).
\)

\State Compute pseudo-marginal estimator
\(
\widehat{\pi}_{\mathcal C}(\boldsymbol c').
\) using Algorithm \ref{alg:coord_partition}.

\State Compute the Metropolis--Hastings acceptance probability
\(
\alpha
\) in equation \eqref{eq:alpha}.

\State Sample
\(
u\sim\mathrm{Unif}(0,1)
\).

\If{$u<\alpha$}
\State Set
\(
(\boldsymbol c_{k+1},U_{k+1})
=
(\boldsymbol c',U').
\)
\Else
\State Set
\(
(\boldsymbol c_{k+1},U_{k+1})
=
(\boldsymbol c_k,U_k).
\)
\EndIf

\State Reconstruct representative parameter sample
\[
\boldsymbol{\theta}_{k+1}
\sim
\sum_{i=1}^N
\bar w_i
\,
\delta_{\boldsymbol{\theta}^{(i)}_{k+1}}.
\]

\EndFor
\end{algorithmic}
\end{algorithm}

\section{Case Studies}\label{sec:case}

In this section, we present two case studies to show the performance of the proposed identifiability-aware MCMC algorithms. Both examples arise from compartmental ODE models for public health applications. The models considered here have known structural non-identifiabilities, making them suitable benchmarks for evaluating the proposed methods. The MCMC algorithms compared in this study include the standard random-walk MCMC (baseline), the identifiability-aware geometric MCMC (Algorithm \ref{alg:IAGMCMC}), and the identifiability-aware pseudo-marginal MCMC (Algorithm \ref{alg:IAPMCMC}). To ensure a fair comparison, all numerical experiments are conducted under identical settings, including the same initial parameter values and the same number of MCMC iterations.

\subsection{SI Model}\label{sec:4.1}

We first consider the SI model introduced in Example~\eqref{eq:SI}. The state variables $(S(t),I(t))\in\mathbb{R}^2$ represent the susceptible and infectious populations. The parameter vector is
\(
\boldsymbol{\theta}=(\beta,\rho,\gamma,I_0)\in\mathbb{R}^4,
\)
where $\beta$ denotes the transmission rate, $\rho$ the case-detection rate, $\gamma$ the recovery rate, and $I_0$ the initial number of infectious individuals. We assume that the total population size $N$ is known, and the initial susceptible population is given by $S_0=N-I_0$.
Observational data are collected through recording daily detected case number, and the corresponding model output is a time series
\(
\bigl\{(t_i,\rho I(t_i))\bigr\}_{i=1}^{T}.
\)
We assume independent Gaussian observation noises and model the observations as
\begin{equation}\label{eq:gnoise}
Y_i \sim\mathcal{N}\left(\rho I(t_i;\boldsymbol{\theta}),,\sigma^2\right),
\qquad i=1,\ldots,T,
\end{equation}
where $\sigma>0$ is the observation noise standard deviation. The objective is to sample the posterior distribution of \(\boldsymbol{\theta}\) and obtain reliable parameter estimates and predictive trajectories.
This model is structurally non-identifiable. Since the total population size \(N=S+I\) is known, we may eliminate \(S\) and write
\[
I'=(\beta N-\gamma)I-\beta I^2.
\]
Using the observation equation \(y=\rho I\), the state variable \(I\) can be eliminated to obtain the input--output equation
\[
y'=(\beta N-\gamma)y-\frac{\beta}{\rho}y^2.
\]
It follows that the observable dynamics depend on the parameters only through the combinations
\(
\boldsymbol{\xi}(\boldsymbol{\theta})
=
(
\beta N-\gamma,
\beta/\rho,
\rho I_0
).
\)
The parameter space is partitioned into non-identifiable manifolds
\(
\mathcal M_{\boldsymbol c}
=
\left\{
\boldsymbol{\theta}:
\boldsymbol{\xi}(\boldsymbol{\theta})
=
\boldsymbol c
\right\},
\)
along which the model output remains invariant. For any fixed
\(
\boldsymbol{c}=(c_1,c_2,c_3)\in \mathbb{R}^3
\),
the manifold
\[
\mathcal M_{\boldsymbol{c}}
=
\left\{
(\beta,\rho,\gamma,I_0):
\beta N-\gamma=c_1,\;
\frac{\beta}{\rho}=c_2,\;
\rho I_0=c_3
\right\}
\]
is one-dimensional. It admits the parameterization
\begin{equation}\label{eq:parms}
\beta=c_2\rho,
\qquad
\gamma=c_2\rho N-c_1, \qquad I_0=\frac{c_3}{\rho}
\end{equation}
with \(\rho>0\) free. This shows that $\rho$ can be chosen as an independent parameter and the remaining parameters are determined by the identifiable combination values $(c_1,c_2,c_3)$, as we discussed in Section \ref{sec:3.3}. 
Synthetic observations are generated from the SI model using the parameter values
\[
\beta^{\ast}=2.5\times10^{-4},
\qquad
\rho^{\ast}=0.25,
\qquad
\gamma^{\ast}=0.6,
\qquad
I_0^{\ast}=40,
\]
with total population size \(N=10^4\). These values produce a clear epidemic outbreak and a nontrivial case trajectory. Observations are collected at the first ten integer time points and are contaminated with independent Gaussian noise according to
\[
Y_i \sim \mathcal N\!\left(y(t_i;\boldsymbol{\theta}^{\ast}),\sigma^2\right),
\]
where \(\sigma=5\).
For Bayesian inference, independent uniform prior distributions are assigned to all unknown parameters
\[
\beta \sim U(10^{-5},10^{-3}),
\qquad
\rho \sim U(0.01,1),
\]
\[
\gamma \sim U(0.01,1),
\qquad
I_0 \sim U(1,500).
\]
The prior support contains the true parameter values and is intentionally chosen to be sufficiently broad to expose the effects of structural non-identifiability.

We compare three MCMC algorithms for posterior inference of the parameter vector \(\boldsymbol{\theta}\): a standard random-walk Metropolis--Hastings sampler, the identifiability-aware geometric MCMC algorithm (Algorithm~\ref{alg:IAGMCMC}), and the identifiability-aware pseudo-marginal MCMC algorithm (Algorithm~\ref{alg:IAPMCMC}). All methods are run for \(100{,}000\) iterations and initialized from the same randomly generated parameter vector drawn from the prior distribution. In addition, all algorithms use the same observation model, prior distributions, and burn-in period of \(1,000\) iterations to ensure a fair comparison.
For the baseline random-walk sampler, proposals are constructed on the logarithmic parameter scale. Specifically, parameters are transformed using a logarithmic mapping before sampling and subsequently transformed back to the original scale through exponentiation. This reparameterization alleviates the large differences in parameter magnitudes and improves numerical efficiency. The proposal kernel is Gaussian with covariance
\(
0.05 I.
\)
For Algorithm~\ref{alg:IAGMCMC}, the same logarithmic transformation is applied. The teleportation step is implemented using the RATTLE integrator described in Algorithm~\ref{alg:rattle}, with step size
\(
\epsilon = 0.005
\)
and \(20\) RATTLE steps per teleportation move. For the transition kernel defined in \eqref{eq:qnormal}, we set
\[
\sigma_N = 0.8,
\qquad
\sigma_T = 0.2,
\]
so that the majority of the proposal variance is concentrated in directions normal to the non-identifiable manifold.
For Algorithm~\ref{alg:IAPMCMC}, the pseudo-marginal estimator is constructed using \(128\) auxiliary samples. The proposal density \(g_{\boldsymbol c}\) is chosen to be uniform over the same interval as the prior distribution of the independent parameter \(\rho\). Since the SI model admits an analytical parameterization of \(\beta\), \(\gamma\), and \(I_0\) in terms of \(\rho\), conditional samples on the non-identifiable manifold can be generated directly, and no global polynomial solver is required.

The algebraic relationships in \eqref{eq:parms} imply that the non-identifiable manifold has a curved inverse relationship when projected onto the \((I_0,\rho)\)-plane. 
Figure~\ref{fig2a} displays a few projected samples generated by the baseline random-walk MCMC algorithm. The chain remains confined to a small region of the non-identifiable manifold and exhibits only local exploration. Since proposals are generated through small isotropic perturbations, movement along the curved manifold is slow, leading to strong serial correlation and poor mixing.
Figure~\ref{fig2b} illustrates the sampling mechanism of Algorithm~\ref{alg:IAGMCMC}. The RATTLE-based teleportation step enables large moves along the non-identifiable manifold. These manifold-preserving transitions are complemented by local moves between nearby manifolds, allowing the chain to explore both identifiable and non-identifiable directions more efficiently. 
Figure~\ref{fig2c} shows the sampling mechanism of Algorithm~\ref{alg:IAPMCMC}. The algorithm first samples on the space of identifiable combinations and then reconstructs parameter values from multiple conditional samples on the corresponding observationally equivalent manifold. 
The improvements in algorithm convergence and sampling efficiency are further reflected in the trace plots shown in Figures~\ref{fig1} and \ref{fig3}. Both identifiability-aware algorithms exhibit faster mixing and more rapid exploration of the posterior distribution than the baseline random-walk sampler.

\begin{figure}[H]
    \centering

    \begin{subfigure}[t]{0.32\textwidth}
        \centering
        \includegraphics[width=\linewidth]{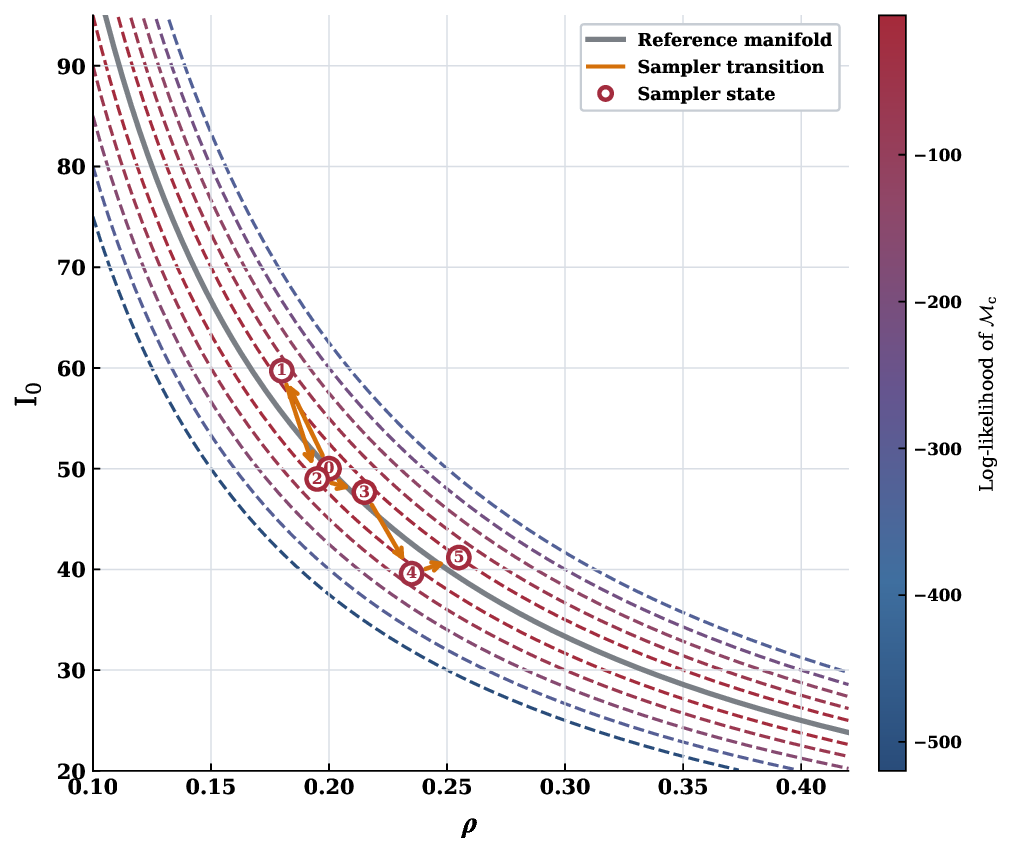}
        \subcaption{Random-walk MCMC.}\label{fig2a}
    \end{subfigure}\hfill
    \begin{subfigure}[t]{0.32\textwidth}
        \centering
        \includegraphics[width=\linewidth]{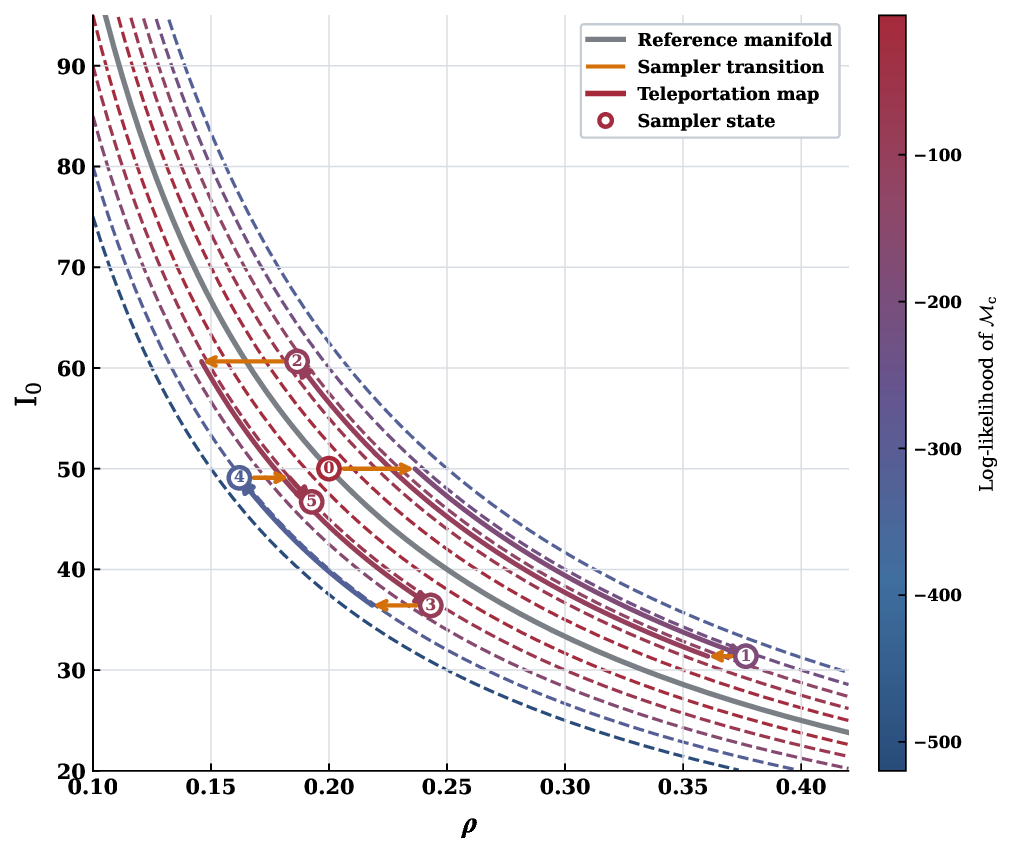}
        \subcaption{Algorithm \ref{alg:IAGMCMC}.}\label{fig2b}
    \end{subfigure}\hfill
    \begin{subfigure}[t]{0.32\textwidth}
        \centering
        \includegraphics[width=\linewidth]{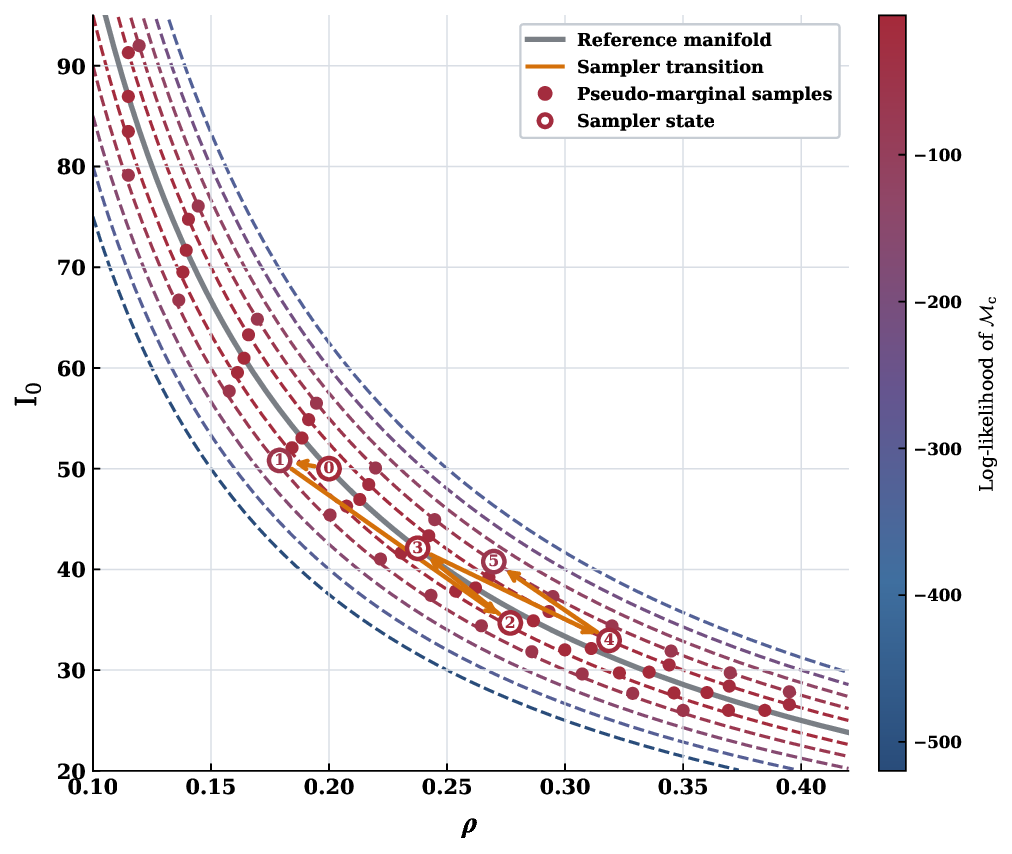}
        \subcaption{Algorithm \ref{alg:IAPMCMC}.}\label{fig2c}
    \end{subfigure}
    \caption{Comparison of the sampling mechanisms of different MCMC algorithms.}
    \label{fig2}
\end{figure}

\begin{figure}[H]
    \centering

    \begin{subfigure}[t]{0.45\textwidth}
        \centering
        \includegraphics[width=\linewidth]{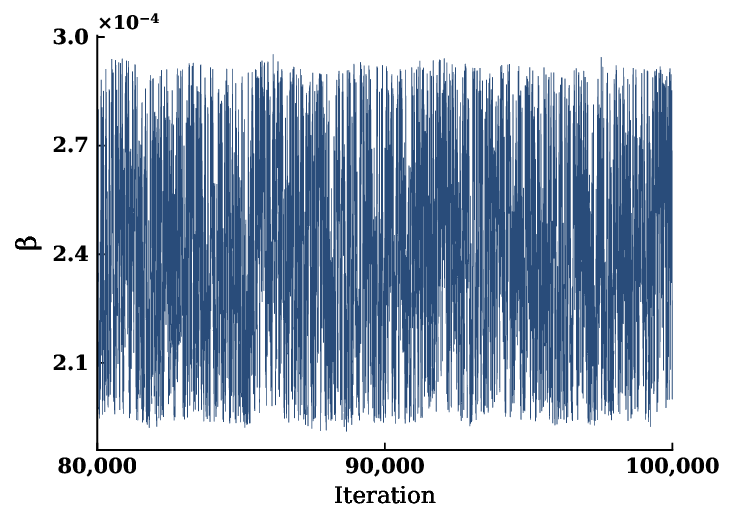}
        \subcaption{Algorithm \ref{alg:IAGMCMC}: trace plot for $\beta$.}
    \end{subfigure}\hfill
    \begin{subfigure}[t]{0.45\textwidth}
        \centering
        \includegraphics[width=\linewidth]{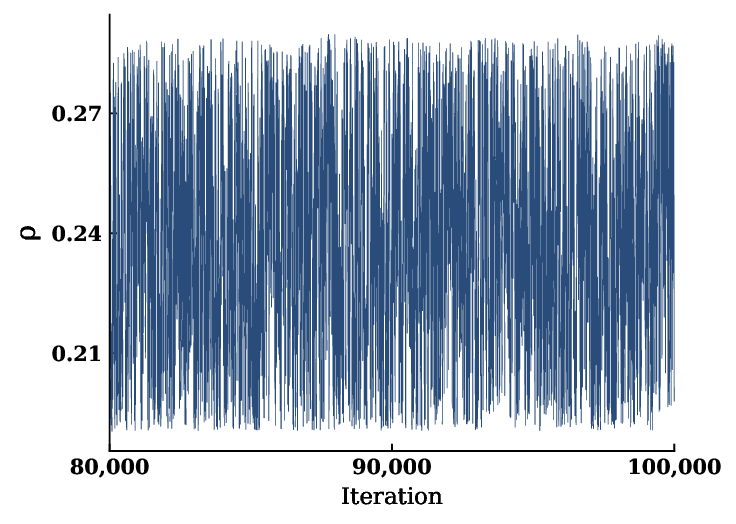}
        \subcaption{Algorithm \ref{alg:IAGMCMC}: trace plot for $\rho$.}
    \end{subfigure}

    \vspace{0.6em}

    \begin{subfigure}[t]{0.45\textwidth}
        \centering
        \includegraphics[width=\linewidth]{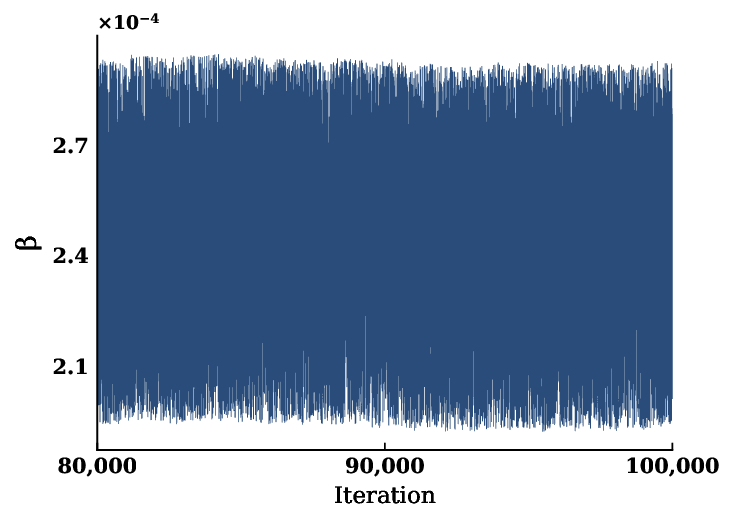}
        \subcaption{Algorithm \ref{alg:IAPMCMC}: trace plot for $\beta$.}
    \end{subfigure}\hfill
    \begin{subfigure}[t]{0.45\textwidth}
        \centering
        \includegraphics[width=\linewidth]{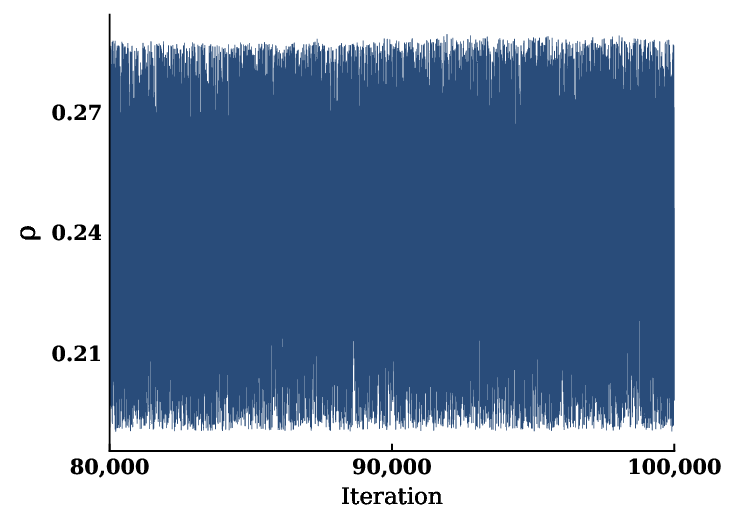}
        \subcaption{Algorithm \ref{alg:IAPMCMC}: trace plot for $\rho$.}
    \end{subfigure}

    \caption{Trace plots of the proposed identifiability-aware geometric MCMC (Algorithm~\ref{alg:IAGMCMC}) and identifiability-aware pseudo-marginal MCMC (Algorithm~\ref{alg:IAPMCMC}) for the structurally non-identifiable SI model.}
    \label{fig3}
\end{figure}

To quantitatively assess sampling efficiency, Table~\ref{tab:si_convergence} reports effective sample sizes (ESS), ESS per second (ESS/sec), integrated autocorrelation times (IACT), the ratio of Monte Carlo standard error to posterior standard deviation (MCSE/SD), and split-\(\widehat{R}\) diagnostics. The baseline random-walk sampler produces ESS values between 64 and 76 despite 100,000 iterations, with IACT values exceeding 1300 for all parameters and \(\widehat{R}\) values noticeably above one. These diagnostics indicate substantial autocorrelation and incomplete mixing.
In contrast, Algorithm~\ref{alg:IAGMCMC} increases the ESS by approximately one order of magnitude, yielding ESS values between 288 and 633 and reducing the IACT to approximately \(160\)--\(350\). The acceptance rate increases from \(27.93\%\) to \(53.67\%\), while the MCSE is reduced by a factor of approximately two across all parameters.
The largest gains are obtained by Algorithm~\ref{alg:IAPMCMC}. Effective sample sizes exceed \(4500\) for every parameter and reach more than \(5700\) for \(\beta\), \(\rho\), and \(\gamma\), representing improvements of roughly two orders of magnitude over the baseline sampler. The corresponding IACT values are reduced to approximately \(18\)--\(22\), indicating near-independent sampling behavior. Furthermore, all \(\widehat{R}\) values are essentially equal to one and the MCSE remains below \(1.5\%\) of the posterior standard deviation, providing strong evidence of convergence.

\begin{table}[ht]
\centering
\begin{tabular}{c c c c c c c}
\toprule
Method & Parameter & ESS & ESS/sec & IACT & MCSE/SD (\%) & $\widehat{R}$ \\
\midrule

\multirow{4}{*}{\shortstack{RW-MCMC\\$\alpha=27.93\%$}}
& $\beta$  & 66.68 & 0.67 & 1499.77 & 12.25 & 1.13 \\
& $\rho$   & 76.26 & 0.76 & 1311.25 & 11.45 & 1.11 \\
& $\gamma$ & 64.30 & 0.64 & 1555.09 & 12.47 & 1.14 \\
& $I_0$    & 65.94 & 0.66 & 1516.51 & 12.31 & 1.13 \\
\midrule

\multirow{4}{*}{\shortstack{Algorithm \ref{alg:IAGMCMC}\\$\alpha=53.67\%$}}
& $\beta$  & 592.70 & 4.94 & 168.72 & 4.11 & 1.04 \\
& $\rho$   & 633.23 & 5.28 & 157.92 & 3.97 & 1.03 \\
& $\gamma$ & 624.01 & 5.20 & 160.25 & 4.00 & 1.03 \\
& $I_0$    & 288.23 & 2.40 & 346.95 & 5.89 & 1.05 \\
\midrule

\multirow{4}{*}{\shortstack{Algorithm \ref{alg:IAPMCMC}\\$\alpha=32.75\%$}}
& $\beta$  & 5587.33 & 27.94 & 17.90 & 1.34 & 1.01 \\
& $\rho$   & 5725.01 & 28.63 & 17.47 & 1.32 & 1.00 \\
& $\gamma$ & 5696.37 & 28.48 & 17.56 & 1.33 & 1.00 \\
& $I_0$    & 4561.61 & 22.81 & 21.92 & 1.48 & 1.01 \\
\bottomrule
\end{tabular}
\caption{Convergence diagnostics for the SI model based on 100,000 MCMC iterations.}
\label{tab:si_convergence}
\end{table}

Overall, the numerical results demonstrate that explicitly incorporating structural identifiability information can dramatically improve posterior sampling. The geometric sampler of Algorithm~\ref{alg:IAGMCMC} effectively exploits the manifold structure to accelerate exploration, while the pseudo-marginal formulation of Algorithm~\ref{alg:IAPMCMC} achieves the highest sampling efficiency by performing inference directly on the identifiable combinations and reconstructing posterior samples on the non-identifiable manifolds. For this SI model, Algorithm~\ref{alg:IAPMCMC} provides the best overall balance of convergence, mixing, and computational efficiency.

\subsection{HIV Model}

We next consider an HIV infection model from \citet{J37}. The state variables
\(
(T(t),T^{\ast}(t),V(t))\in\mathbb{R}^{3}
\)
represent the concentrations of susceptible target cells, infected cells, and free virus particles, respectively. The model is given by
\begin{equation}\label{eq:hiv}
\begin{cases}
\displaystyle \frac{dT}{dt} = \lambda-\rho T-\beta TV,\\[1ex]
\displaystyle \frac{dT^{\ast}}{dt} = \beta TV-\delta T^{\ast},\\[1ex]
\displaystyle \frac{dV}{dt} = N\delta T^{\ast}-cV,
\end{cases}
\end{equation}
with observation function
\(
y(t)=V(t).
\)
The parameter vector is
\(
\boldsymbol{\theta}
=
(\beta,\rho,\delta,c,\lambda,N)\in\mathbb{R}^{6},
\)
where \(\beta\) denotes the infection rate, \(\rho\) the natural death rate of susceptible target cells, \(\delta\) the death rate of infected cells, \(c\) the viral clearance rate, \(\lambda\) the source rate of susceptible target cells, and \(N\) the average number of virions produced by an infected cell during its lifetime. Observational data are collected through the viral load $V(t)$
so that the model output consists of the time series
\(
\bigl\{(t_i,V(t_i))\bigr\}_{i=1}^{T}.
\)
Similar to previous section, we assume independent Gaussian observation errors and model the data observations according to 
$$Y_i\sim \mathcal{N}(V(t_i,\boldsymbol{\theta}),\sigma^2), \qquad i=1,\cdots, T.$$
The objective is to sample the posterior distribution of \(\boldsymbol{\theta}\) and obtain reliable parameter estimates.
Structural identifiability analysis of this model have been carried out by \citet{J38}. Using differential-algebraic method, it can be shown that the model is structurally non-identifiable with one non-identifiable degree of freedom. A maximal set of algebraically independent identifiable parameter combinations is given by
\[
\boldsymbol{\xi}(\boldsymbol{\theta})
=
\left(
\frac{\lambda N}{c},
\, c,
\, \rho,
\, \beta,
\, \delta
\right).
\]
The parameter space is partitioned by non-identifiable manifolds
\(
\mathcal M_{\boldsymbol c}
=
\left\{
\boldsymbol{\theta}:
\boldsymbol{\xi}(\boldsymbol{\theta})
=
\boldsymbol c
\right\}
\),
which are one-dimensional. Along each manifold \(\mathcal M_{\boldsymbol c}\), the model output remains invariant. 
Synthetic observations are generated from \eqref{eq:hiv} using the parameter values listed in Table~\ref{tab:HIV_param}. These values produce a realistic viral load trajectory exhibiting an initial transient phase followed by stabilization, which is representative of the dynamics observed in HIV infection models. Observations are collected daily over a period of 30 days, 
and are corrupted according to the Gaussian observation model \eqref{eq:gnoise} with observation noise standard deviation \(\sigma=50\). The resulting dataset is used as input for Bayesian inference.
Independent uniform priors are assigned to all unknown parameters, with prior bounds given in Table~\ref{tab:HIV_param}. The prior ranges contain the ground-truth parameter values and are sufficiently broad to illustrate the effects of structural non-identifiability.
\begin{table}[ht]
\centering
\begin{tabular}{llll}
\hline
Parameter & Description & True value & Prior distribution \\
\hline
$\beta$ &
Infection rate &
$2.4\times10^{-5}$ &
$U(10^{-6},10^{-3})$ \\

$\rho$ &
Death rate of susceptible target cells &
$0.01$ &
$U(10^{-3},0.1)$ \\

$\delta$ &
Death rate of infected cells &
$0.5$ &
$U(0.01,2)$ \\

$c$ &
Viral clearance rate &
$3$ &
$U(

1,10)$ \\

$\lambda$ &
Source rate of susceptible target cells &
$10$ &
$U(1,100)$ \\

$N$ &
Virions produced per infected cell &
$1000$ &
$U(100,5000)$ \\
\hline
\end{tabular}
\caption{Parameter definitions, ground-truth values, and prior distributions for HIV case study.}
\label{tab:HIV_param}
\end{table}

Similar to the SI case study, all parameters are sampled on the logarithmic scale to account for differences in parameter magnitudes. All algorithms are initialized from the same randomly generated parameter vector drawn from the prior distribution and run for \(100{,}000\) iterations with \(1{,}000\) burn-in iterations. For the baseline random-walk MCMC algorithm, a Gaussian proposal with covariance matrix \(0.01I\) is employed, while the remaining hyperparameters are chosen identically to those used in the SI example.

The convergence diagnostics reported in Table~\ref{tab:hiv_convergence} show a clear distinction between structurally identifiable and non-identifiable parameters. For the identifiable parameters \(\beta\), \(\rho\), \(\delta\), and \(c\), all three algorithms achieve comparable performance, with only moderate improvements obtained by the proposed methods. In contrast, substantial differences are observed for the structurally non-identifiable parameters \(\lambda\) and \(N\). The baseline random-walk sampler exhibits strong autocorrelation and poor mixing, whereas Algorithm~\ref{alg:IAGMCMC} improves exploration of the non-identifiable manifold. The largest gains are achieved by Algorithm~\ref{alg:IAPMCMC}, which increases the ESS of \(\lambda\) and \(N\) by more than an order of magnitude while dramatically reducing both IACT and MCSE. These findings are further supported by the autocorrelation functions shown in Figure~\ref{fig:acf_comparison}. For the identifiable parameter \(\beta\), all methods display similar autocorrelation decay, whereas for the non-identifiable parameter \(\lambda\), the proposed algorithms decorrelate substantially faster than the baseline sampler, with Algorithm~\ref{alg:IAPMCMC} exhibiting the most rapid decay. Overall, the results confirm that incorporating structural identifiability information primarily improves sampling efficiency for non-identifiable parameters.

\begin{table}[H]
\centering
\begin{tabular}{c c c c c c c}
\toprule
Method & Parameter & ESS & ESS/sec & IACT & MCSE/SD (\%) & $\widehat{R}$ \\
\midrule

\multirow{6}{*}{\shortstack{RW-MCMC\\$\alpha=18.22\%$}}
& $\beta$   & 511.18 & 5.11 & 195.63 & 4.42 & 1.04 \\
& $\rho$    & 413.92 & 4.14 & 241.59 & 4.92 & 1.05 \\
& $\delta$  & 424.15 & 4.24 & 235.76 & 4.86 & 1.05 \\
& $c$       & 313.87 & 3.14 & 318.60 & 5.64 & 1.06 \\
& $\lambda$ & 82.42  & 0.82 & 1213.31 & 11.01 & 1.10 \\
& $N$       & 85.19  & 0.85 & 1173.90 & 10.84 & 1.10 \\
\midrule

\multirow{6}{*}{\shortstack{Algorithm \ref{alg:IAGMCMC}\\$\alpha=30.92\%$}}
& $\beta$   & 783.63 & 6.53 & 127.61 & 3.57 & 1.03 \\
& $\rho$    & 590.65 & 4.92 & 169.30 & 4.11 & 1.04 \\
& $\delta$  & 580.52 & 4.84 & 172.26 & 4.15 & 1.04 \\
& $c$       & 573.21 & 4.78 & 174.46 & 4.18 & 1.04 \\
& $\lambda$ & 602.21 & 5.02 & 166.05 & 4.08 & 1.03 \\
& $N$       & 741.89 & 6.18 & 134.79 & 3.67 & 1.03 \\
\midrule

\multirow{6}{*}{\shortstack{Algorithm \ref{alg:IAPMCMC}\\$\alpha=49.43\%$}}
& $\beta$   & 608.52 & 3.04 & 164.33 & 4.05 & 1.03 \\
& $\rho$    & 609.19 & 3.05 & 164.16 & 4.05 & 1.03 \\
& $\delta$  & 617.36 & 3.09 & 161.98 & 4.02 & 1.03 \\
& $c$       & 612.36 & 3.06 & 163.30 & 4.04 & 1.03 \\
& $\lambda$ & 3381.88 & 16.91 & 29.57 & 1.72 & 1.01 \\
& $N$       & 4075.35 & 20.38 & 24.54 & 1.57 & 1.01 \\
\bottomrule
\end{tabular}
\caption{Convergence diagnostics for the HIV model based on 100,000 MCMC iterations.}
\label{tab:hiv_convergence}
\end{table}

We conclude the HIV case study by comparing the posterior inference results for a structurally identifiable parameter, a structurally non-identifiable parameter, and an identifiable parameter combination. These results highlight the fundamental differences between traditional MCMC methods and the proposed identifiability-aware algorithms.
Figure~\ref{fig5a} shows the posterior distribution of the identifiable parameter \(\beta\). All three algorithms recover the true parameter value accurately and produce similar posterior distributions, which is expected since \(\beta\) is structurally identifiable and can be informed directly by the data.
A significantly different behavior is observed for the non-identifiable parameter \(\lambda\) in Figure~\ref{fig5b}. The baseline random-walk MCMC algorithm assigns most posterior mass to values near \(20\), despite the true value being \(10\). This behavior is likely caused by poor mixing along the non-identifiable manifold, resulting in a posterior estimate that depends strongly on the initial state of the chain. In contrast, both Algorithm~\ref{alg:IAGMCMC} and Algorithm~\ref{alg:IAPMCMC} recover a nearly uniform posterior distribution across the admissible parameter range. This result is consistent with the structural non-identifiability of \(\lambda\): all values along the non-identifiable manifold generate observationally equivalent model outputs, and under the non-informative prior considered here, the posterior should remain diffuse over the entire manifold.
Although the proposed algorithms do not provide a more informative estimator for a structurally non-identifiable parameter such as \(\lambda\), this does not imply that no information about the parameter can be extracted from the data. Figure~\ref{fig5c} presents the posterior distribution of the identifiable combination \(\lambda c/N\). Both identifiability-aware algorithms accurately recover the true value and produce concentrated posterior distributions. In contrast, the baseline random-walk sampler does not recover the true value of the identifiable combination. These results demonstrate that the proposed methods correctly distinguish between identifiable and non-identifiable directions in the parameter space, allowing reliable inference for identifiable combinations when individual parameters cannot be uniquely determined. In fact, for structurally non-identifiable models, inference on identifiable combinations represents the maximum amount of information that can be recovered from the data without introducing additional prior information or observations \citep{J17}.

\begin{figure}[H]
    \centering

    \begin{subfigure}[t]{0.45\textwidth}
        \centering
        \includegraphics[width=\linewidth]{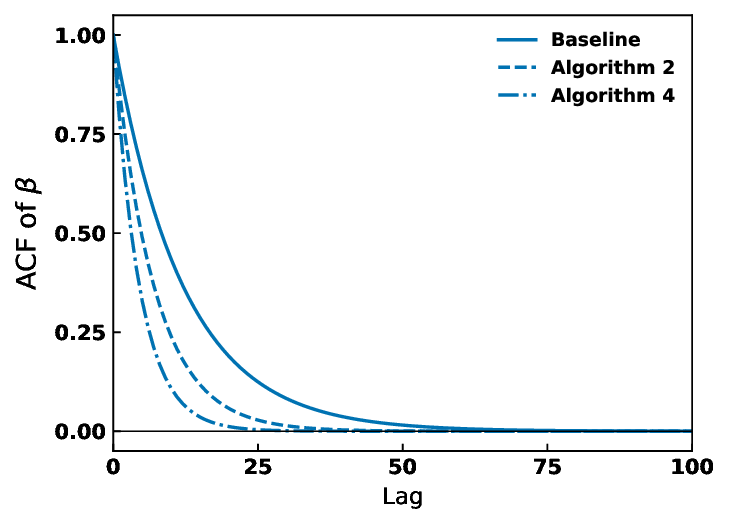}
        \subcaption{ACF for $\beta$.}
    \end{subfigure}\hfill
    \begin{subfigure}[t]{0.45\textwidth}
        \centering
        \includegraphics[width=\linewidth]{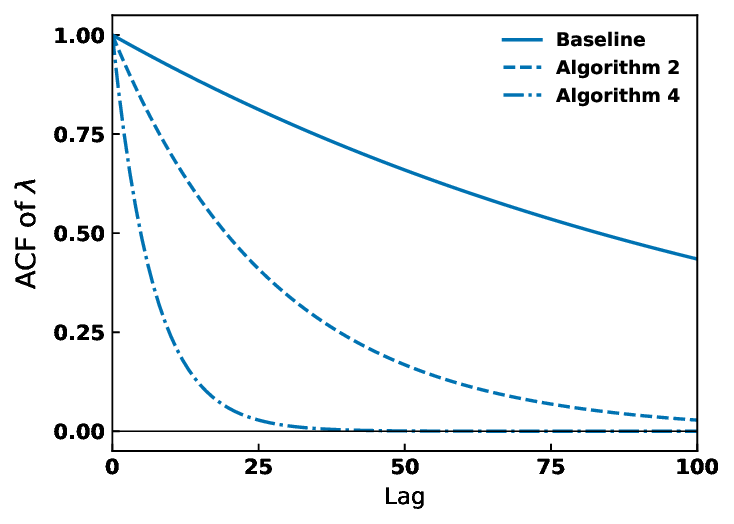}
        \subcaption{ACF for $\lambda$.}
    \end{subfigure}

    \caption{Comparison of autocorrelation functions (ACFs) of different MCMC methods for the identifiable parameter $\beta$ and the non-identifiable parameter $\lambda$.}
    \label{fig:acf_comparison}
\end{figure}

\begin{figure}[H]
    \centering

    \begin{subfigure}[t]{0.32\textwidth}
        \centering
        \includegraphics[width=\linewidth]{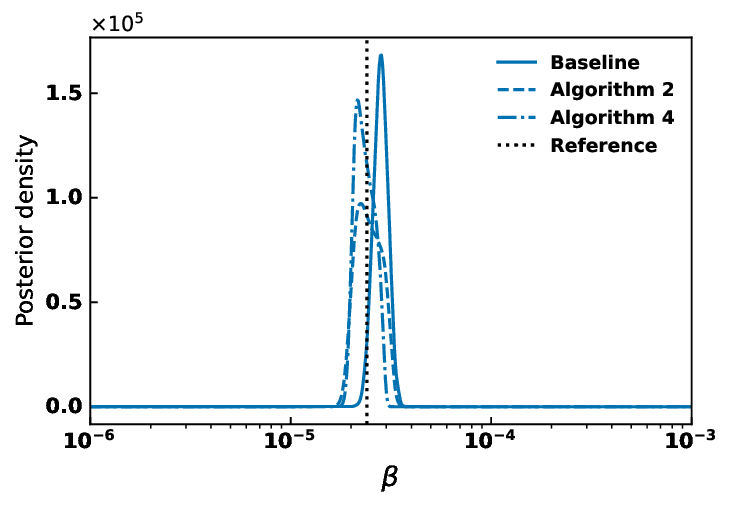}
        \subcaption{Posterior density of $\beta$.}\label{fig5a}
    \end{subfigure}\hfill
    \begin{subfigure}[t]{0.32\textwidth}
        \centering
        \includegraphics[width=\linewidth]{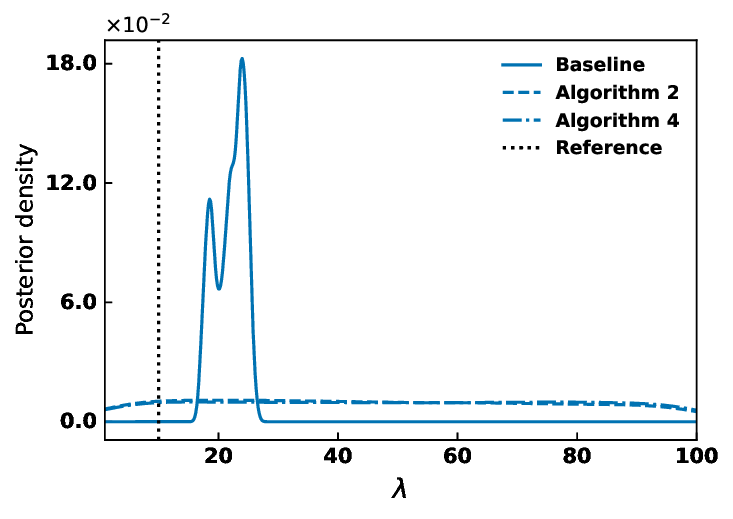}
        \subcaption{Posterior density of $\lambda$.}\label{fig5b}
    \end{subfigure}\hfill
    \begin{subfigure}[t]{0.32\textwidth}
        \centering
        \includegraphics[width=\linewidth]{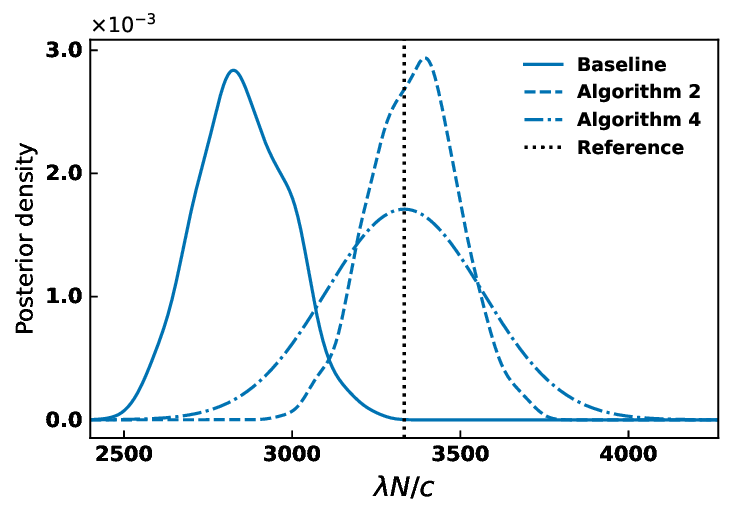}
        \subcaption{Posterior density of $\lambda c/N$.}\label{fig5c}
    \end{subfigure}

    \caption{Comparison of inference results obtained by different MCMC algorithms for the identifiable parameter $\beta$, the non-identifiable parameter $\lambda$, and the identifiable combination $\lambda c/N$.}
    \label{fig5}
\end{figure}

\section{Discussion} \label{sec:dis}

Structural non-identifiability presents a fundamental challenge for Bayesian inference in mechanistic models. When distinct parameter values produce identical observable outputs, posterior distributions become concentrated along lower-dimensional manifolds of observationally equivalent solutions. As a result, standard MCMC algorithms often exhibit slow mixing, strong autocorrelation, and poor convergence.

In this paper, we introduce two MCMC methodologies based on structural identifiability analysis results. The first, an identifiability-aware geometric MCMC algorithm, combines manifold-preserving teleportation moves with MCMC updates to improve exploration of non-identifiable manifolds. The second, an identifiability-aware pseudo-marginal MCMC algorithm, performs inference on the space of identifiable parameter combinations and subsequently reconstructs full parameter values. For both methods, we provide theoretical convergence results.
The numerical studies demonstrate that incorporating structural identifiability information can improve algorithm convergence and sampling efficiency. For both the SI and HIV models, the proposed methods achieve larger effective sample sizes and lower autocorrelation than standard random-walk MCMC. The improvements are particularly pronounced along structurally non-identifiable directions, where traditional MCMC methods struggle to fully explore the posterior density. 
The two approaches have complementary advantages. The geometric MCMC method operates in the original parameter space and can efficiently explore observationally equivalent manifolds through constrained geometric moves. The pseudo-marginal approach performs inference in a reduced-dimensional identifiable space, often resulting in superior mixing and convergence when efficient manifold sampling is available. These methods provide practical alternatives to standard MCMC algorithms for Bayesian inference in structurally non-identifiable models.

Several limitations of the proposed methods merit further investigation. First, both methods assume that structural identifiability information is available \emph{a priori}. Such information can often be obtained using differential algebraic approaches, for example with software such as DAISY \citep{J28}. However, these methods require substantial symbolic and algebraic computations, which limits their applicability to large-scale and highly complex models.
Second, structural identifiability represents the minimal level of non-identifiability under the assumption of sufficiently informative data. In practice, additional \emph{practical non-identifiability} may arise because of limited or noisy observations. Although the proposed methods improve sampling performance compared with conventional MCMC algorithms, they cannot fully resolve the challenges posed by practical non-identifiability. These challenges may be tackled by collecting more informative data or by developing Bayesian computation methods that explicitly account for practical non-identifiability.
From a computational perspective, the efficiency of the pseudo-marginal method depends on the ability to generate conditional samples on non-identifiable manifolds, whereas the geometric method requires repeated constrained projections, whose computational cost may increase with model complexity.
Future work can focus on extending the proposed framework to models with high-dimensional identifiable combination spaces, stochastic dynamical systems, and large-scale hierarchical Bayesian models.

This work shows that structural identifiability analysis can serve not only as a diagnostic tool for model calibration, but also as a principled foundation for the design of efficient Bayesian computation algorithms. By exploiting the geometric structure induced by identifiable combinations, it is possible to construct MCMC methods that remain both statistically valid and computationally efficient in settings where conventional sampling algorithms perform poorly. We hope that this perspective will encourage a closer integration of identifiability analysis and Bayesian computation in the development of reliable inference methods for increasingly complex mathematical models.







\bibliographystyle{plainnat} 

\bibliography{cas-refs}

\end{document}